\documentclass[12pt]{article}
\usepackage{amsmath}
\usepackage{graphicx}
\usepackage{enumerate}
\usepackage{natbib}
\usepackage{url} 

\usepackage{bbm}
\usepackage{bm}

\usepackage{amsthm}
\usepackage{amssymb}

\usepackage{xr}
 
\usepackage{scalerel}
 
\usepackage{xcolor}
\usepackage{multirow}
\usepackage{arydshln}
 
\usepackage{soul}

\usepackage{float}

\usepackage{latexsym}
\usepackage{caption}
 
\usepackage{mathtools}

\theoremstyle{definition}

\newtheorem*{theorem*}{Theorem}
\newtheorem{theorem}{Theorem}
\newtheorem*{rmk*}{remark}

\newtheorem{proposition}{Proposition}
\newtheorem{lemma}{Lemma}
\newtheorem{example}{Example}

\newtheorem{condition}{Condition}

\newtheorem{definition}{Definition}
\newtheorem{remark}{Remark}

\newtheorem*{corollary*}{Corollary}

\usepackage{etoolbox} 
\apptocmd{\sloppy}{\hbadness 10000\relax}{}{} 

\usepackage{multibib}
\newcites{sec}{References}

\usepackage{color}
\usepackage{listings}
\usepackage{hyperref}
\hypersetup{
  colorlinks=true,
  linkcolor=blue,
  citecolor=blue,
  urlcolor=blue
}

\usepackage{booktabs}

\usepackage{lscape}

\RequirePackage[normalem]{ulem}

\usepackage{setspace}

\allowdisplaybreaks

 \def\spacingset#1{\renewcommand{\baselinestretch}%
{#1}\small\normalsize}

\def\ottb{o(T^{2-\beta})}
\def\sigozzp{\sig_{1,zz'}}
\def\sigtzzp{\sig_{2,zz'}}
\def\vtzzp{V_{2,zz'}}
\def\voz{V_{1,z}}

\def\hsig{\hat\sig}

\def\subm{\text{submultiplicity}}
\def\chisqa{\chi^2_{\rc,1-\alpha}}

\def\snrc{\mn(0,I)}
\def\nnb{\nonumber}
\def\beginar{\begin{array}}
\def\endar{\end{array}}

\def\rc{{\rank{C}}}

\def\irc{I_{\rc}}

\def\tsq{T^2}
\def\tsr{\sqrt T}
\def\tsrif{\dfrac{1}{\tsr}}
\def\tsri{T^{-1/2}}
\def\ti{T^{-1}}

\def\rhot{\rho_t}

\def\pt{\pi_t}
\def\ps{\pi_s}
\def\nt{n_t}

\def\hvab{\hat V_{\aipw,b}}
\def\tvab{\tilde V_{\aipw,b}}
\def\myti{\my_{ti}}

\def\ct{C^\T}
\newcommand{\indep}{\perp \!\!\! \perp}

\def\iit{i=\ot{n_t}}
\def\hytiaz{\hy_{ti,\aipw}(z)}

\def\mbr{\mathbb R}

\def\bz{{\bm z}}
\def\ttinf{T\to\infty}

\def\hys{\hy_*}
\def\hts{\htau_*}

\def\hygz{\hat Y_{[g]}(z)}

\def\mtgz{\mathcal T_{g,z}}
\def\mtg{\mathcal T_g}
\def\indz{\I(Z_t = z)}
\def\all{{\textup{all}}}
\def\cf{{\textup{cf}}}
\def\spm{{\textup{sm}}}
\def\mtds{\ipw, \spm, \aipw, \all, \cf}

\def\mn{\mathcal N}
\def\si{\mathcal S_{\ipw,C, \alpha}}
\def\sa{\mathcal S_{\aipw,C, \alpha}}
\def\tsa{\tilde{\mathcal S}_{\aipw,C, \alpha}}

\def\htfull{H_t =\{(Z_s, Y_s, X_s): s =\ot{t-1}\}\cup\{X_t\}}

\def\hmtiz{\hat m_{ti}(z)}

\def\ytiz{Y_{ti}(z)}

\def\I{\mathbbm{1}}

\newcommand{\cv}{\check{V}}

\newcommand{\vba}{V_{\aipw}}
\newcommand{\hvba}{\hat V_{\aipw}}

\newcommand{\bmyt}{\bar \my_t}

\def\omt{\omega_t}
\def\phit{\phi_t}

\def\lmts{\lmtp^2}
\def\lmtp{(\lt+\mt)}

\def\cvt{\check V_T}
\def\vt{V_T}

\def\bmd{\bar{\md}}
\def\md{{\bm D}}
\newcommand{\mdt}{\md_t}

\def\mart{a martingale difference sequence with respect to filtration $\{\mf_t = H_{t+1}:  1 \le t\le T\}$}

\def\tmyta{\tilde{\my}_{t,\aipw}}
\def\tmya{\tilde{\my}_{\aipw}}
\def\tva{\tilde V_{\aipw}}
\def\dtzp{D_t(z')}

\def\zti{Z_{ti}}
\def\etiz{e_{ti}(z)}
\def\ebtz{e_t(\bz)}
\def\zbt{Z_{t}}
\def\ybt{Y_{t}}
\def\xbt{X_{t}}

\newcommand{\my}{{\bm Y}}
\def\hmy{\hat{\my}}
\def\bmy{\bar{\my}}

\def\ls{\lesssim}
\def\gt{\gamma_t}

\def\maxs{\max_{1\leq s\leq t}}

\def\detzi{\Delta_t(z)}

\def\ekz{e_k(z)}
\def\bkzs{A_k(z)^2}

\def\lm{\lambda_{\min}}
\def\lma{\lm(\va)}
\def\lmap{\{\lma\}}
\def\lmai{\{\lma\}^{-1}}

\def\lmi{\lm(\vi)}
\def\lmip{\{\lmi\}}
\def\lmii{\{\lmi\}^{-1}}
	\def\spacingset#1{\renewcommand{\baselinestretch}%
		{#1}\small\normalsize} \spacingset{1}

\def\vi{V_{\ipw}}

\newcommand{\hmyti}{\hmy_{t, \ipw}}
\newcommand{\byt}{\by_t}
\newcommand{\bytz}{\by_t(z)}

\def\hmtzps{\hmt(z')^2}

\def\bkz{A_{k}(z)}
\def\bkzp{A_{k}(z')}

\def\lt{L_t}

\def\maxt{\max_{\ott}}

\def\maxz{\max_{\ziz}}
\newcommand{\hmt}{\hat m_t}
\newcommand{\hbmt}{\hat{\bm m}_t}
\newcommand{\bbm}{\bar{\bm m}}

\newcommand{\tc}{\tau_C}

\newcommand{\hmyta}{\hmy_{t,\aipw}}
\newcommand{\va}{V_{\aipw}}
\newcommand{\cvi}{\check{V}_{\ipw}}
\newcommand{\cva}{\check{V}_{\aipw}}
\newcommand{\mat}{\mb_t}
\newcommand{\matt}{\mb_t^\T}
\newcommand{\mb}{{\bm A}}
 
\newcommand{\atzs}{\atz^2}
\def\atzp{A_t(z')}

\def\cct{CC^\T}
\def\lmax{\lambda_{\max}}
\def\lmin{\lambda_{\min}}
\def\at{A_t}
\def\atz{A_t(z)}
\def\htau{\hat\tau}
\def\hti{\hat\tau_{C, \ipw}}
\def\uet{\underline{e}_t}

\def\ueti{\underline{e}_t^{-1}}
\def\ue{\underline{e}}
\def\hmyi{\hat{\my}_{\ipw}}

\def\ott{t = \ot{T}}
 
\def\vaipw{V_{\aipw}}

\def\ziz{z\in\mz}

\def\myt{\my_t}
 
\def\dtz{D_t(z)}

\def\beginp{\begin{pmatrix}}
\def\endp{\end{pmatrix}}

\def\etzinv{\dfrac{1}{\etz}}
\def\etzi{ e_t(z)^{-1}}

\def\etzp{e_t\zp}

\def\zt{Z_t}

\def\ytzp{\yt(z')}

\def\zp{{(z')}}

\def\dt{\delta_t}

\def\ytzs{Y_{t}(z)^2}

\def\oo{o(1)}

\def\prear{Assume the adaptive randomization in Definition \ref{def:ad}}
\def\prebar{Assume the block adaptive randomization in Definition \ref{def:block_ad}}

\def\sig{\Sigma}
\def\sums{\sum_{s=1}^t}

\def\hv{\hat V}
\def\hvi{\hat V_{\ipw}}
\def\hva{\hat V_{\aipw}}

\def\yt{Y_t}

\def\rtz{  r_t(z)  }

\def\wtzp{w_t\zp}

\def\mf{\mathcal{F}}

\def\wtz{w_t(z)}

\def\byz{\by(z)}
\def\hmtz{\hat m_t(z)}
\def\hmtzp{\hat m_t(z')}

\def\etz{e_t(z)}
\def\sumz{\sum_{z\in\mz}}

\def\hmyai{\hmy_{\aipw}}
\def\hmya{\hmy_{\aipw}}

\def\hyaz{\hy_{\aipw}(z)}
\def\hytaz{\hy_{t,\aipw}(z)}
\def\hyta{\hy_{t,\aipw}}
\def\hyti{\hy_{t,\ipw}}

\def\hytiz{\hyti(z)}

\def\hyi{\hy_{\ipw}}
\def\hya{\hy_{\aipw}}
\def\hti{\htau_{\ipw, C}}
\def\hta{\htau_{\aipw,C}}

\def\hyiz{\hyi(z)}

\def\ytz{Y_t(z)}
\def\yt{Y_t}
\def\yto{Y_t(1)}
\def\by{\bar Y}
\def\sumi{\sum_{i=1}^{n_t}}
\def\sumt{\sum_{t=1}^T}
\def\sums{\sum_{s=1}^T}

\def\meant{T^{-1}\sumt}
\newcommand{\meantf}{\dfrac{1}{T}\sumt}
\def\meani{n_t^{-1}\sumi}
\def\meanti{N^{-1}\sumt\sumi}

\def\ipw{\textup{ipw}}
\def\aipw{\textup{aipw}}

\def\ipwc{IPW}
\def\aipwc{AIPW}

\def\mz{\mathcal Z}

\def\hht{H_t}

\def\by{\bar Y}

\def\hy{\hat Y}

\def\begini{\begin{itemize}}
\def\endi{\end{itemize}}
\def\begine{\begin{enumerate}}
\def\ende{\end{enumerate}}

\def\sumi{\sum_{i=1}^{n_t}}

\def\sm{{Supplementary Material}}
\def\sms{{Supplementary Material }}

\newcommand{\mt}{M_t}  

\def\with{\quad\text{with} \ \ }
\def\where{\quad\text{where} \ \ }

\def\cl{\centerline}
\newcommand{\cll}[1]{\bigskip\\\cl{$\ds #1$}

\smallskip}

\newcommand{\cls}[1]{\smallskip\\\cl{$\ds #1$}}

\newcommand{\scov}[1]{\cov_\textup{s}(#1)}
\newcommand{\op}{o_{\pr}(1)}
\newcommand{\oop}{O_{\pr}(1)}
\newcommand{\pr}{{\mathbb P}}
\newcommand{\oeq}[1]{\overset{#1}{=}}
\newcommand{\oleq}[1]{\overset{#1}{\leq}}
\newcommand{\oeqt}[1]{\overset{\text{#1}}{=}}

 \newcommand{\ot}[1]{1, \ldots,#1}
 
 \newcommand{\fn}[1]{\|#1\|_{\textsc f}}  
 \newcommand{\infn}[1]{\left\|#1\right\|_{ \infty}} 
 \newcommand{\infns}[1]{\left\|#1\right\|_{ \infty}^2} 
 \newcommand{\tn}[1]{\|#1\|_2}
 
 \newcommand{\rank}[1]{{\rm rank}(#1)}

 \newcommand{\diag}{\text{diag}}

\def\T{{\top}}

\newcommand{\cov}{\text{cov} } 
\newcommand{\var}{\text{var} }

\newcommand{\E}{\mathbb{E}}

\def\begina{\begin{eqnarray*}}
\def\enda{\end{eqnarray*}}
\def\beginy{\begin{eqnarray}}
\def\endy{\end{eqnarray}}
\def\begine{\begin{enumerate}}
\def\ende{\end{enumerate}}

\usepackage[utf8]{inputenc}
\usepackage[english]{babel}

\def\converged{\stackrel{d}{\longrightarrow}}

\usepackage{setspace}
\def\fp{finite-population}
\def\lemd{\lem~\ref{lem:D}}
\def\lemchebf{\lem~\ref{lem:cheb} (\cheb)}
\def\cheb{Chebyshev's inequality}
\def\csf{\cs\ inequality}
\def\cs{Cauchy--Schwarz}

\def\avesqe{Mean squared error }
\def\coverage{Coverage  rate}
\def\aveci{Average  CI length}

\def\mab{multi-armed bandit}
\def\mabs{{\mab}s}

\def\wlg{without loss of generality that $c_0 T_0^\beta < T_0 -1$ such that for $t > T_0$}

\def\ds{\displaystyle}

\def\ipwd{inverse-propensity-weighted}
\def\ipwf{inverse propensity weighting}
\def\iipwf{Inverse propensity weighting}

\def\allc{any fixed matrix $C$ with $K$ columns}
\def\allcf{\allc\ and full row rank}

\def\limdep{depend only on information from the $\min\{t-1, c_0 t^\beta\}$ most recent units}
\def\limdept{after some fixed time point $T_0$}

\def\prelimdeps{{there exist constants $\beta\in [0,1)$ and $c_0 > 0$ such that }}
\def\prelimdepsc{{There exist constants $\beta\in [0,1)$ and $c_0 > 0$ such that }}
\def\prelimdep{\prelimdeps \limdept}
\def\prelimdepc{\prelimdepsc \limdept}

\def\lrd{limited-range dependence}
\def\llrd{Limited-range dependence}

\def\ar{adaptive randomization}
\def\are{adaptively reweighted estimator}
\def\ares{{\are}s}
\def\arw{adaptive reweighting}
\def\ars{adaptive reweighting strategy}

\def\oreg{outcome regression}
\def\na{nonadaptive}
\def\naa{\na\ adjustment}

\def\aae{adaptive AIPW estimator}
\def\aca{adaptive covariate adjustment}

\def\db{design-based}

\def\br{block adaptive randomization}
\def\ur{unit-level adaptive randomization}
\def\car{covariate-adaptive randomization}
\def\sr{sequential rerandomization}

\def\cfbc{cross-fitting with \bc}
\def\bc{Bonferroni correction}
\def\cff{cross-fitting}

\def\defo{Definition~\ref{def:ad}}
\def\cond{Condition}
\def\thm{Theorem}
\def\thms{{\thm}s}
\def\prop{Proposition}
\def\props{{\prop}s}
\def\lem{Lemma}

\def\sec{Section}
\def\secs{{\sec}s}
\begin{document}

\spacingset{1}
\date{} 
  \title{\bf \Large Design-based theory for causal inference from adaptive experiments}
  \date{}
  \author{Xinran Li\thanks{Authors contributed equally; names are listed alphabetically.
    X. L. acknowledges support from the U.S. National Science Foundation (Grants \# 2400961). 
    We thank Alex Belloni, Iavor Bojinov, Hongbin (Elijah) Huang, Jinze Cui, Peng Sun, and Lei Zhang for inspiring discussions.}\hspace{.2cm}\\
    Department of Statistics, University of Chicago\\
    and \\
    Anqi Zhao \\
    Fuqua School of Business, Duke University}
  \maketitle

\def\contri{Our theory accommodates nonexchangeable units, both nonconverging and vanishing treatment probabilities, and nonconverging outcome estimators, 
thereby justifying inference using AIPW estimators with black-box outcome models that integrate advances from machine learning methods.
}

\bigskip
\begin{abstract}
Adaptive designs dynamically update treatment probabilities using information accumulated during the experiment. 
Existing theory for causal inference from adaptive experiments primarily assumes the superpopulation framework with independent and identically distributed units, and may not apply when the distribution of units evolves over time.
This paper makes two contributions. 
First, we extend the literature to the \fp\ framework, which allows for possibly nonexchangeable units, and establish the \db\ theory for causal inference under general adaptive designs using \ipwd\ (IPW) and augmented IPW (AIPW) estimators. 
%
\contri\
To alleviate the conservativeness inherent in variance estimation under \fp\ inference, we also introduce a covariance estimator for the AIPW estimator that 
becomes sharp when the residuals from the adaptive regression of potential outcomes on covariates are additive across units.
Our framework encompasses widely used adaptive designs, such as multi-armed bandits, \car, and sequential rerandomization, advancing the design-based theory for causal inference in these specific settings.
Second, as a methodological contribution, we propose an {\it adaptive covariate adjustment} approach for analyzing even nonadaptive designs.
The martingale structure induced by adaptive adjustment enables valid inference with black-box outcome estimators that would otherwise require strong assumptions under standard nonadaptive analysis.
\end{abstract}

\noindent%
{\it Keywords:}  Adaptive covariate adjustment; Augmented \ipwd\ estimator; Covariate-adaptive randomization; Multi-armed bandits; Sequential rerandomization 

\vfill

\newpage
\spacingset{1.8} 
\section{Introduction}\label{sec:intro}
Adaptive experimental designs, also known as sequential experimental designs, have gained increasing attention in recent years. 
Unlike nonadaptive designs, which fix treatment probabilities prior to the experiment, 
adaptive designs use information accumulated during the experiment to dynamically update treatment probabilities for subsequent units, offering multiple advantages in applications across business, medicine, and policy, among other areas \citep{bhatt2016adaptive, hadad2021designing, offer2021adaptive}. 
As a widely used example, 
\mab\ algorithms aim to maximize cumulative reward over time by dynamically adjusting treatment probabilities based on realized outcomes.  
Contextual bandits extend the \mab\ framework by incorporating contextual features, enabling personalized treatment assignment policies \citep{bubeck2012regret, dimakopoulou2017estimation, agrawal2019recent}.
Another prominent class of adaptive designs---including \car\ \citep{efron1971forcing,  bugni2018inference, ye2022inference}, minimization methods \citep{pocock1975sequential}, and \sr\ \citep{zhou2018sequential}---dynamically adjusts treatment probabilities to improve covariate balance across treatment groups. 

Inference of treatment effects remains a central objective in analyzing adaptive experiments.
However, dependence among data points collected over time complicates this task, especially when the adaptive treatment assignment mechanism is designed to optimize objectives other than estimating treatment effects \citep{hadad2021confidence}.
In this context, 
 \cite{melfi2000estimation} studied inference from adaptive designs with converging treatment probabilities, and established large-sample guarantees of a class of estimators that are strongly consistent when applied to nonadaptive data; see also \cite{van2008construction}. 
 \cite{hadad2021confidence} extended this framework to designs with nonconverging treatment probabilities, providing conditions for doubly robust inference from a class of {\ares}.
 \cite{bibaut2021post} further extended these results to contextual bandit designs and established weaker conditions for inference.
Additionally, focusing on specific adaptive designs, 
 \cite{bugni2018inference} studied inference under \car, and established theoretical properties of three commonly used hypothesis-testing procedures.
\cite{zhou2018sequential} studied \sr\ for experimental units enrolled in groups, and established its advantage in improving covariate balance over non\sr.
\cite{ham2023design} studied \mabs, and provided both confidence intervals and confidence sequences for arm-specific reward means and mean reward differences between arms.

This paper makes two main contributions. 
First, we advance the {\it finite-population, {\db}} theory for causal inference from adaptive designs, and establish guarantees of Wald-type inference based on the \ipwd\ (\ipwc) and augmented IPW (\aipwc) estimators. 
\contri\
Second, as a methodological contribution, we propose an {\it adaptive covariate adjustment} approach for causal inference from even nonadaptive designs, enabling valid inference with black-box outcome estimators that would otherwise require strong assumptions under standard nonadaptive analysis.
We provide the details below. 

\paragraph*{Design-based theory for causal inference from adaptive experiments.} 
Existing literature on causal inference from adaptive experiments primarily assumes the superpopulation framework with independent and identically distributed units \citep{hadad2021confidence, bibaut2021post}. 
However, methods developed under this framework may be biased when the distribution of units evolves over time; see \sec~\ref{sec:add_res} of the \sm\ for a simulated illustration.
We instead adopt the \fp, \db\ framework, which conditions on potential outcomes and covariates---or, equivalently, views them as fixed---and takes treatment assignment as the sole source of randomness for inference \citep{fisher1935design}.  
Accordingly, our theory accommodates nonexchangeable units arising from a wide range of underlying data-generating processes.
As a technical contribution, we introduce a new covariance estimator for the AIPW estimator that leverages adaptive outcome estimation to reduce the conservativeness inherent in variance estimation under  \fp\ inference \citep{neyman1923on}. 
The proposed estimator is generally conservative but 
becomes sharp when the residuals from the adaptive regression of potential outcomes on covariates are additive across units.

Our theory encompasses widely used adaptive designs such as multi-armed bandits, \car, and sequential rerandomization, advancing the design-based theory for causal inference in these specific settings \citep{bugni2018inference, zhou2018sequential, ham2023design}.
In particular, the theory of \cite{zhou2018sequential} does not include inference under \sr, and our paper fills this gap.

\paragraph*{Adaptive covariate adjustment for nonadaptive designs.}
Existing methods for covariate adjustment in nonadaptive experiment uses either all units \citep{lin2013agnostic, guo2023generalized, cohen2024no, qu2025randomization} or cross-fitting \citep{aronow2013middleton, wager2016high, wu2021design} to estimate outcome models. 
Such practices may induce complex dependence among unit-level adjusted outcomes, requiring strong consistency or stability conditions on the outcome estimators, or \bc\ when cross-fitting is used, for valid inference.

In contrast, we propose analyzing nonadaptive designs as if they were adaptive, using the {\it adaptively adjusted} AIPW estimator we developed for adaptive designs to perform {\it adaptive covariate adjustment}. 
The martingale structure underlying the adaptive adjustment enables valid covariate-adjusted inference using black-box outcome estimators that would otherwise require strong assumptions under standard nonadaptive analysis. 
Similar idea appears in \cite{luedtke2016statistical} and \cite{wager2024sequential} when studying the optimal treatment rule and treatment heterogeneity, respectively, under the superpopulation framework. 

\paragraph*{Notation.}
Let $\I(\cdot)$ denote the indicator function.
Let $0_m$ denote the $m \times 1$ zero vector and $I_m$ the $m \times m$ identity matrix. 
We omit the subscript $m$ when the dimension is clear from context.
For an $m\times n$ matrix $A = (a_{ij})_{m\times n}$, let $\fn{A} = (\sum_{i=1}^m\sum_{j=1}^n a_{ij}^2)^{1/2}$ denote the Frobenius norm of $A$.
For an $n\times n$ square matrix $A$, let $\lm(A)$ denote its smallest eigenvalue. 
For a collection of numbers $\{a_z \in \mbr: z\in\mz\}$, where $\mz$ is the index set, let $\diag(a_z)_{z\in\mz}$ denote the diagonal matrix with $a_z$ on the diagonal.
For a set of vectors $\{a_i \in \mbr^m: i = \ot{n}\}$, 
define its {\it sample covariance matrix} as $(n-1)^{-1}\sum_{i=1}^n (a_i - \bar a)(a_i - \bar a)^\T$, where $\bar a = n^{-1}\sum_{i=1}^n a_i$.
Let $\converged$ denote convergence in distribution.

\section{Setting}\label{sec:setup}
\subsection{Adaptive randomization}
Consider an experiment with $K\geq 2$ treatment levels, indexed by $z\in\mathcal{Z} =\{1,\ldots, K\}$, and a study population of $T$ units, indexed by $t=\ot{T}$.
An adaptive design sequentially randomizes each unit $t$ to a treatment level at time $t =\ot{T}$, and determines the assignment probabilities adaptively based on the information accumulated up to time $t$, referred to as the {\it history}.
For each unit $t$, let $Z_t\in\mz$ denote its treatment assignment, $Y_t\in\mbr$ the outcome of interest, and $X_t\in\mathbb R^J$ a vector of $J$ baseline covariates. 
Let $H_t$ denote the history up to time $t$, which includes the covariates, treatment assignments, and outcomes of units $1$ through $t-1$, as well as the covariates of unit $t$: \smallskip\\
\cl{$\htfull.$}
We characterize the assignment mechanism by the conditional distribution of $Z_t$ given $H_t$, as formalized in Definition~\ref{def:ad} below.

\begin{definition}[Adaptive randomization]\label{def:ad}
Let $\etz =\pr(Z_t = z\mid\hht)$
denote the probability of unit $t$ receiving treatment level $z\in\mz$ given $H_t$. 
For $t =\ot{T}$ and $z\in\mz$, $\etz$ is a prespecified function of $\hht$ that satisfies (i) $\sumz\etz = 1$ and (ii) $\etz\in (0,1)$. 
\end{definition}

Definition~\ref{def:ad} requires positive assignment probability $\etz$ for all treatment levels at each time $t = \ot{T}$, but is otherwise general, allowing $\etz$ to depend arbitrarily on the history $\hht$, as illustrated in Examples \ref{ex:mab}--\ref{ex:cov} below.
When $\etz =\pr(\zt =z\mid\hht) =\pr(\zt =z\mid\xbt )$, so that the assignment $\zt$ of unit $t$ is independent of the previous units, Definition~\ref{def:ad} reduces to a nonadaptive design that randomizes each unit independently.

\begin{example}\label{ex:mab}
Multi-armed bandit designs dynamically update treatment probabilities to maximize the expected cumulative outcome. 
The assignment probability $\etz$ at time $t$ is a function of past assignments and outcomes up to time $t-1$, $\{(Z_s, Y_s): s = \ot{t-1}\}$, a subset of the history $H_t$.
Contextual bandits further incorporate time-specific contextual information, represented by a covariate vector $X_t$ for each time $t =\ot{T}$. The corresponding $\etz$ is a function of the full history $\hht$.
\end{example}

\begin{example}\label{ex:cov}
{\it Biased-coin designs} \citep{efron1971forcing}  dynamically update treatment probabilities to improve covariate balance. 
The assignment probability $\etz$ at time $t$ depends on the covariates of unit $t$ and the covariate balance among units $1$ through $t-1$, and is a function of $\{(Z_s,X_s): s=\ot{t-1}\}\cup\{X_t\}$, a subset of the history $H_t$.
\end{example}

\subsection{Potential outcomes and treatment effects}
We define treatment effects using the potential outcomes framework \citep{neyman1923on, imbens2015causal}. 
Let $\ytz$ denote the potential outcome of unit $t$ if assigned to treatment level $z$.
The observed outcome equals the potential outcome under the realized treatment level: $Y_t =\sumz\indz \ytz  = Y_t(Z_t)$. 
Let $\byz =\meant\ytz $ denote the population average potential outcome under treatment level $z$, and let 
$\bmy = (\by(1),\ldots,\by(K))^\T$ denote the vector of $\byz$ across all $K$ levels.
A general goal of \fp\ causal inference is to estimate linear combinations of $\{\byz: \ziz\}$, denoted by 
\cll{\tau_C = C\bmy =(c_1^\T\bmy, \ldots, c_Q^\T\bmy)^\T \in \mbr^Q,}
where $Q$ is the prespecified number of estimands, and $C = (c_1, \ldots, c_Q)^\T \in \mathbb{R}^{Q \times K}$ is a prespecified coefficient matrix with $c_q \in \mathbb{R}^K$ for $q = 1, \ldots, Q$.
Often, each $c_q$ is specified as a contrast vector, so that $c_q^\T\bmy$ represents a contrast among $\{\byz: \ziz\}$. 
For example, in a treatment-control experiment with $K = 2$, setting $C = (-1, 1)$ yields $\tau_C =\bar Y(2)-\bar Y(1)$, the \fp\ average treatment effect.
Alternatively, setting $C$ as the identity matrix yields $\tau_C =\bmy$, targeting the average potential outcomes by treatment.
Without loss of generality, we assume a general $C$ with $K$ columns and full row rank unless specified otherwise.

\subsection{Estimation by \ipwf}\label{sec:estimators_def}
\iipwf, also known as inverse probability-of-treatment weighting or importance sampling weighting, is standard for inference from adaptive designs.
Given the assignment probability $\etz$, the \ipwd\ (\ipwc) estimator of $Y_t(z)$ is 
\beginy\label{eq:hytiz}
\hat{Y}_{t,\ipw}(z) =\dfrac{\indz }{e_t(z)} Y_t. 
\endy
Define
\beginy\label{eq:hyi}
\hyi(z) =\meant\hyti(z) ,\quad\hmyi = (\hyi(1),\ldots,\hyi(K))^\T,\quad 
\hti = C\hmyi
\endy
as the corresponding \ipwc\ estimators of $\bar{Y}(z)$, $\bmy$, and $\tc$, respectively. We establish the {\db} guarantees of $\hti$ in \sec~\ref{sec:ipw}.

The augmented \ipwd\ (\aipwc) estimator augments the IPW estimator with \oreg\  \citep{robins1994estimation}.
Let $\hmtz$ denote an {\it adaptive} estimator of $\ytz$ constructed using only information in $\hht$ \citep{hadad2021confidence}. 
We define 
\beginy\label{eq:hytaz}
\hytaz =\hytiz +\left\{ 1 -\dfrac{\indz }{\etz }\right\}\hmtz 
\endy
as the {\it adaptive} \aipwc\ estimator of $\ytz$ based on $H_t$, and define  
\beginy\label{eq:hya}
\hyaz =\meant\hytaz,\quad\hmyai = (\hya(1),\ldots,\hya(K))^\T,\quad\hta = C\hmyai
\endy as the corresponding \aipwc\ estimators of $\byz$, $\bmy $, and $\tau_C $, respectively, generalizing $\hyiz$, $\hmyi$, and $\hti$ in \eqref{eq:hyi}. 
We establish the {\db} guarantees of $\hta$ in \sec~\ref{sec:aipw}. 

\section{Design-based theory of the IPW estimator}\label{sec:ipw}
We establish in this section the {\db} guarantees of the \ipwc\ estimator $\hti$ defined in \eqref{eq:hyi} for estimating $\tc$.
The {\db} framework views covariates and potential outcomes as fixed---or, equivalently, conditions on them---and evaluates the sampling properties of $\hti$ with respect to the randomness in treatment assignments $\{Z_t: t =\ot{T}\}$. 

\subsection{Sampling properties}
Let 
\beginy\label{eq:vi}
\vi  =\diag\left[\meant\E\left\{\etzinv\right\}\ytzs\right]_{\ziz} -\meant\myt\myt^\T,
\endy
where $\myt = (Y_t(1),\ldots, Y_t(K))^\T$.

\begin{theorem}\label{thm:ipw_fs}
\prear. 
Then 
$\E(\hti) =\tau_C$ and $\cov(\hti) = T^{-1} C\vi\ct$ for \allc.  
\end{theorem}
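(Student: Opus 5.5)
Since $\hti = C\hmyi$ is linear in $\hmyi$, it suffices to show $\E(\hmyi) = \bmy$ and $\cov(\hmyi) = T^{-1}\vi$. The plan is to center each term and exploit a martingale difference structure with respect to the histories. Define
\[
\md_t = \hmyti - \myt, \qquad \hmyti = (\hyti(1),\ldots,\hyti(K))^\T .
\]
Then $\hmyi - \bmy = \meant \md_t$.

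\textbf{Step 1 (conditional moments).} Because $e_t(z)$ is a function of $\hht$ and the potential outcomes are fixed, Definition~\ref{def:ad} gives
\[
\E\{\indz \mid \hht\} = \etz .
\]
Hence $\E\{\hytiz \mid \hht\} = \ytz$, so $\E(\md_t\mid \hht) = 0$. The iterated expectation then yields $\E(\hmyi)=\bmy$, and so $\E(\hti)=\tc$.

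For the second conditional moment I use $\I(Z_t=z)\I(Z_t=z')=0$ for $z\neq z'$. This gives
\[
\E\{\hmyti\hmyti^\T\mid\hht\} = \diag\{\ytzs/\etz\}_{\ziz}.
\]
Therefore
\[
\E(\md_t\md_t^\T\mid \hht) = \diag\{\ytzs/\etz\}_{\ziz} - \myt\myt^\T .
\]

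\textbf{Step 2 (cross terms vanish).} For $s<t$, the vector $\md_s$ is determined by $(Z_s, Y_s, X_s)$ together with $\hht[s]$ (i.e.\ $H_s$), and all of these lie in $\hht$. So $\md_s$ is $\hht$-measurable. It follows that
\[
\E(\md_s\md_t^\T) = \E\{\md_s \E(\md_t^\T\mid\hht)\} = 0 .
\]
Consequently
\[
\cov(\hmyi) = T^{-2}\sumt \E(\md_t\md_t^\T) = T^{-2}\sumt\Bigl[\diag\{\E(\etzinv)\ytzs\}_{\ziz} - \myt\myt^\T\Bigr],
\]
which equals $T^{-1}\vi$ by \eqref{eq:vi}. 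Multiplying on the left by $C$ and on the right by $\ct$ finishes the proof.

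\textbf{Main obstacle.} The only delicate point is the measurability in Step 2: one must check that $\md_s$ is a function of $\hht$ for $s<t$, given that $e_s(z)$ depends on $H_s\subset H_t$. Beyond that, one needs to note that taking expectations over $\hht$ is legitimate even though $\etz$ is random. The expectation $\E(\etzinv)$ is finite because $\etz$ is positive and takes finitely many values, since there are finitely many assignment paths.
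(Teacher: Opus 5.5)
Your proof is correct and follows essentially the same route as the paper, which obtains this theorem as the special case $\hat m_t(z)=0$ of Theorem~\ref{thm:aipw}\eqref{item:aipw_mean_var}. There, $\hmyti-\myt$ is shown to be a martingale difference sequence with conditional covariance $\diag\{\ytzs/\etz\}_{\ziz}-\myt\myt^\T$ given $\hht$, and the cross-covariances vanish (Lemmas~\ref{lem:mds_cov}--\ref{lem:D}); the only difference is that the paper computes through the weights $\wtz=\indz/\etz-1$ and their conditional moments, while you work with the indicator products directly.
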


\thm~\ref{thm:ipw_fs} establishes the unbiasedness of $\hti$ for $\tc$ and provides the explicit form of its covariance matrix. 
Setting $C = I_K$ implies that $\hyi(z)$ is unbiased for the population average potential outcome $\byz$, with $\vi = T\cov(\hmyi)$.
 
We next establish the central limit theorem for $\hti$ as $T\to\infty$. 
The \fp\ asymptotic regime embeds the study population into an infinite sequence of finite populations with increasing sizes $T = 1, 2,\ldots,$ and defines the asymptotic distribution of $\hti$ as the limit of its {\db} distributions along this sequence \citep{lehmann1975nonparametrics, lehmann1999elements, li2017general}.
Under adaptive randomization, it is intuitive to imagine a sequentially enrolled infinite population $\{X_t, Z_t,\ytz:\ziz\}_{t=1}^\infty$, and define the $T$-th finite population as the first $T$ units of this infinite population 
for $T = 1, 2,\ldots$.
However, our theory is general and applies not only to this intuitive scheme but also to the alternative scheme in which the finite populations consist of different units for each $T$.

Under the \db\ framework, at time $t$, the treatment assignments up to time $t-1$, $(Z_1, \ldots, Z_{t-1})$, are the only random elements in the history $H_t$, so that $H_t$ can take at most $|\mz|^{t-1}$ possible values. 
Let $\uet =\min_{H_t,\, \ziz}\etz$ denote the minimum of $\{e_t(z): z\in\mz\}$ over the $|\mz|^{t-1}\times |\mz|$ possible values of $(H_t,z)$ at time $t$.
Definition~\ref{def:ad} implies that $\uet > 0$ for all $t$. 
Let
$\lt =\max_{z\in\mathcal{Z}}|Y_t(z)|$ denote the maximum of $|\ytz|$ at time $t$. 
Recall the definition of $\vi$ from \eqref{eq:vi}. 
Let $\lm(\vi)$ denote the smallest eigenvalue of $\vi$, and let
\cll{\cvi=\diag\left[\ds\meant\dfrac{1}{\etz }\ytzs\right]_{\ziz} - \meant\myt\myt^\T}

with $\E(\cvi)=\vi$.
\cond~\ref{cond:ipw_clt} below specifies the regularity conditions we impose to establish the central limit theorem for $\hti$. 

\begin{condition}\label{cond:ipw_clt}
As $T\to\infty$, 
\begine[(i)]
\item\label{item:variance convergence_ipw} $\lmii \cdot\fn{\cvi  - \vi } =\op$.
\item\label{item:linde_ipw} $\lmii \cdot T^{-1}\max_{\ott }(\lt /\uet )^2 =\oo$. 
\ende
\end{condition}

We show in the \sms that  
\cll{
\vi = T^{-1}\ds\sumt\E\left\{\cov(\hmyti -\myt\mid\hht)\right\}, \quad 
\cvi = T^{-1}\ds\sumt\cov(\hmyti -\myt\mid\hht),
}
where $\{\hmyti -\myt: t = \ot{T}\}$ form a martingale difference sequence. 
\cond~\ref{cond:ipw_clt}\eqref{item:variance convergence_ipw} and \eqref{item:linde_ipw} correspond to the variance convergence and Lindeberg conditions, respectively, for applying the martingale central limit theorem to $\{\hmyti -\myt: t = \ot{T}\}$ \citep{brown1971martingale}.

\begin{theorem}\label{thm:ipw_clt}
\prear. If \cond~\ref{cond:ipw_clt} holds, then as $\ttinf$,
\cls{( C\vi\ct)^{-1/2}\cdot\sqrt{T} (\hat{\tau}_{\ipw, C} -\tau_C)
\converged\snrc}
for \allcf. 
\end{theorem}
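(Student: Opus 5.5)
We will prove Theorem~\ref{thm:ipw_clt} by applying the martingale central limit theorem \citep{brown1971martingale} to the normalized sum of the martingale differences $\mdt = \hmyti - \myt$, combined with the Cram\'er--Wold device.

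\textbf{Martingale structure.} Since $\E\{\I(Z_t=z)\mid H_t\} = \etz$ and $\myt$ is fixed, we have $\E(\hmyti\mid H_t)=\myt$. Moreover, $\mdt$ is measurable with respect to $H_{t+1}$, so $\{\mdt\}$ is a martingale difference sequence with respect to $\{H_{t+1}\}$. A direct computation gives
\[
\cov(\mdt\mid H_t)=\diag\{\ytzs/\etz\}_{\ziz}-\myt\myt^\T,
\]
and hence $\cvi = T^{-1}\sumt \cov(\mdt\mid H_t)$ and $\vi=\E(\cvi)$. These are the identities quoted before the theorem, and they also yield Theorem~\ref{thm:ipw_fs}.

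\textbf{Reduction to a scalar martingale.} We have $\sqrt T(\hti-\tc)=T^{-1/2}\sumt C\mdt$. Fix $C$ of full row rank and set $\Omega = C\vi\ct$. Then $\lm(\Omega)\ge \lm(CC^\T)\,\lm(\vi)>0$, which holds for large $T$ under Condition~\ref{cond:ipw_clt}(i). By Cram\'er--Wold, it suffices to show that for each fixed unit vector $b\in\mbr^Q$,
\[
S_T=\sumt \xi_t\converged \mn(0,1), \qquad \xi_t = T^{-1/2}\, b^\T\Omega^{-1/2}C\mdt .
\]
Note that $\Omega$ depends on $T$. This is harmless, because the martingale CLT applies to triangular arrays, and the weights $a_T^\T = b^\T\Omega^{-1/2}C$ are deterministic.

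\textbf{Conditional variance.} The conditional variance is
\[
\sumt\E(\xi_t^2\mid H_t)= a_T^\T\, C\cvi\ct\, a_T = 1 + a_T^\T C(\cvi-\vi)\ct a_T .
\]
The error term is bounded by
\[
\|a_T\|^2\,\|C\|^2\,\fn{\cvi-\vi} \le \lm(\Omega)^{-1}\|C\|^2\fn{\cvi-\vi} \le \frac{\|C\|^2}{\lm(CC^\T)}\,\lmii\,\fn{\cvi-\vi},
\]
which is $\op$ by Condition~\ref{cond:ipw_clt}(i). Here $C$ is fixed, so the ratio $\|C\|^2/\lm(CC^\T)$ is a constant.

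\textbf{Lindeberg condition.} Since $|\hytiz|\le \lt/\uet$ and $|\ytz|\le\lt$, every coordinate of $\mdt$ is bounded by $2\lt/\uet$. Hence $\|\mdt\|^2\le 4K(\lt/\uet)^2$ deterministically, and
\[
\max_t \xi_t^2 \le T^{-1}\|a_T\|^2\|C\|^2\, 4K \max_t(\lt/\uet)^2 \lesssim \lmii\, T^{-1}\max_t(\lt/\uet)^2 = \oo
\]
by Condition~\ref{cond:ipw_clt}(ii). Because this bound is uniform and deterministic, for any $\epsilon>0$ the indicators $\I(|\xi_t|>\epsilon)$ all vanish for large $T$. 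The conditional Lindeberg sum is therefore identically zero eventually.

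The main point needing care is the conditional-variance step, since $\vi$ may degenerate, with $\lm(\vi)\to0$. The key is to normalize by $\Omega^{-1/2}$ rather than by a fixed limit, and to control $\lm(\Omega)^{-1}$ through $\lm(\vi)^{-1}$ using the full-row-rank assumption. This is exactly why Condition~\ref{cond:ipw_clt} is stated with the $\lmii$ factor. With both martingale CLT conditions verified for every $b$, the conclusion follows.
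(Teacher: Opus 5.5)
Your proposal is correct and follows essentially the paper's argument. The paper proves the AIPW version (Theorem~\ref{thm:aipw}\eqref{item:aipw_clt}) and obtains this statement by setting $\hmtz=0$, using the same steps as you: Cram\'er--Wold, then the martingale CLT for the scalar array, with the variance-convergence error bounded by $\lmii\,\kappa(CC^\T)\,\fn{\cvi-\vi}$ via $\lm(C\vi\ct)\ge\lm(\vi)\,\lm(CC^\T)$, and with the Lindeberg sum vanishing eventually because of a deterministic uniform bound on the summands (your only slip is notational: your bounds treat $a_T=\Omega^{-1/2}b\in\mbr^Q$, so the weight vector applied to $\mdt$ is $C^\T a_T$, not $a_T$ as first written).
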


Recall from \thm~\ref{thm:ipw_fs} that $\vi = T\cov(\hmyi)$.
This implies 
\beginy\label{eq:lm_vi}
\lm(\vi) = \ds\min_{C\in\mathbb R^{1\times K},\, \|C\|_2 =1} C \vi C ^\T= \ds\min_{C\in\mathbb R^{1\times K},\, \|C\|_2 =1} T \var(\hti),
\endy 
so that $\lm(\vi)$ represents the minimum scaled variance of $\hti = C\hmyi$ over all $C\in\mathbb R^{1\times K}$ with $\|C\|_2 = 1$.
Given $T$ as the total sample size, 
when potential outcomes vary across units and not all units are assigned to the same treatment arm, 
we do not expect to estimate $\hti = C\hmyi$, as a linear combination of $\{\byz:\ziz\}$, at a rate faster than $T^{-1/2}$ for any nonzero $C\in\mathbb R^{1\times K}$. 
As a result, $\var(\hti)$ generally does not vanish at a rate faster than $T^{-1}$, so \eqref{eq:lm_vi} suggests it is reasonable to assume that $\lm(\vi)$ is uniformly bounded away from 0.
In addition, outcomes in practice are typically bounded, making it reasonable to assume that $\ytz$ is uniformly bounded. 
\prop~\ref{prop:ipw_sc} below builds on these intuitions and provides sufficient conditions for \cond~\ref{cond:ipw_clt}.  

\begin{proposition}\label{prop:ipw_sc}
\prear. Let $v_t =\maxz\var\{\etzi\}$ denote the maximum variance of $\etzi$ over $\ziz$ at time $t$.
If $\lm(\vi)$ is uniformly bounded away from 0 as $T\to\infty$, then 
\begine[(i)]
\item\label{item:ipw_sc_cond1} 
\cond~\ref{cond:ipw_clt}\eqref{item:variance convergence_ipw} holds if either of the following conditions is satisfied: 
\begine[(a)]
\item\label{item:ipw_ve = 0}\textbf{Vanishing variance:}
$T^{-1}\sumt v_t = o(1)$.
\item\label{item:ipw_limited dep}
\textbf{\llrd:} \prelimdepsc $\maxt v_t= o(T^{1-\beta})$, and \limdept, $\{\etz:\ziz\}$ \limdep. 
\ende 
\item\label{item:ipw_sc_cond2} 
\cond~\ref{cond:ipw_clt}\eqref{item:linde_ipw} holds if $  \max_{\ott } \lt /\uet = \tsr  \cdot \oo$.
If further $\ytz$ is uniformly bounded, then \cond~\ref{cond:ipw_clt}\eqref{item:linde_ipw} holds if $\max_{\ott } \uet^{-1} = \tsr  \cdot \oo$. 
\ende
\end{proposition}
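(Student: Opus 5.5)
Both parts reduce to explicit bounds on the quantities in \cond~\ref{cond:ipw_clt}, using $\lm(\vi)\ge c>0$ to drop the factor $\lmii$. The first step is to compute $\cvi-\vi$ exactly. The term $\meant\myt\myt^\T$ is nonrandom and appears in both matrices, so it cancels, and
\begin{equation*}
\cvi-\vi=\diag\Bigl[T^{-1}\textstyle\sumt \{\etzi-\E(\etzi)\}\ytzs\Bigr]_{\ziz}.
\end{equation*}
Hence $\fn{\cvi-\vi}\le\sumz |S_T(z)|$, where $S_T(z)=T^{-1}\sumt W_t(z)$ and $W_t(z)=\{\etzi-\E(\etzi)\}\ytzs$. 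It therefore suffices to show $S_T(z)=\op$ for each of the $K$ levels $z$, for which I will use Chebyshev's inequality or Markov's inequality. I will also use uniform boundedness of $\ytz$, as motivated just before the proposition. If the statement is meant without this, the bounds below carry an extra factor $\max_t L_t^2$.

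For (i)(a) I use an $L^1$ bound, since it needs no control of the dependence across $t$. Jensen's inequality gives $\E|W_t(z)|\le L_t^2\, v_t^{1/2}$. Applying Cauchy--Schwarz over $t$,
\begin{equation*}
\E|S_T(z)|\le \max_t L_t^2\cdot\bigl(T^{-1}\textstyle\sumt v_t\bigr)^{1/2}=o(1),
\end{equation*}
and Markov's inequality then gives $S_T(z)=\op$.

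For (i)(b) I use a second-moment bound, $\var\{S_T(z)\}=T^{-2}\sum_{t,s}\cov\{W_t(z),W_s(z)\}$. By Cauchy--Schwarz each covariance is at most $L_t^2L_s^2(v_tv_s)^{1/2}\lesssim \max_t v_t$. The idea is to show that only pairs with $|t-s|\lesssim c_0T^\beta$, together with pairs involving the first $T_0$ units, contribute. There are $O(T^{1+\beta})$ such pairs, so
\begin{equation*}
\var\{S_T(z)\}\lesssim T^{-2}\cdot T^{1+\beta}\max_t v_t=o(1)
\end{equation*}
under $\maxt v_t=o(T^{1-\beta})$.

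To justify dropping the distant pairs, I will write $\etz=f_t(Z_{t-m_t},\ldots,Z_{t-1})$ with $m_t\le c_0t^\beta$, the covariates being fixed in the design-based framework. For $s<t$ with $s<t-m_t$, I will condition on $H_{s+1}$ and try to show $\cov\{W_t(z),W_s(z)\}=0$. This step is the main obstacle. The window variables $Z_{t-m_t},\ldots,Z_{t-1}$ are themselves generated from their own recent windows, so the chain can in principle carry dependence over long ranges, and exact zero covariance need not hold.

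If it fails I would switch to a martingale (Doob) decomposition. With $\mf_k=\sigma(Z_1,\ldots,Z_k)$ and $D_{t,k}=\E\{W_t(z)\mid\mf_k\}-\E\{W_t(z)\mid\mf_{k-1}\}$, we have
\begin{equation*}
\sumt W_t(z)=\sum_k\sum_{t\ge k}D_{t,k},\qquad \var\Bigl\{\sumt W_t(z)\Bigr\}=\sum_k\E\Bigl(\sum_{t\ge k} D_{t,k}\Bigr)^2 .
\end{equation*}
The plan is then to use the limited-range structure to show that $D_{t,k}$ is nonnegligible only for $t-k\lesssim c_0t^\beta$. Combined with $\E D_{t,k}^2\le \var\{W_t(z)\}\lesssim v_t$, this again yields the bound $O(T^{1+\beta}\max_t v_t)/T^2$ for the variance of $S_T(z)$. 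The first $T_0$ units contribute only $O(T^{-1}\max_t v_t^{1/2})$ after scaling and are harmless.

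Part (ii) is immediate. Since $\lmii=O(1)$,
\begin{equation*}
T^{-1}\max_{\ott}(\lt/\uet)^2=\bigl\{T^{-1/2}\max_{\ott}\lt/\uet\bigr\}^2=o(1)
\end{equation*}
whenever $\max_{\ott}\lt/\uet=\tsr\cdot\oo$. If moreover $\lt\le B$ uniformly, then $\max_{\ott}\lt/\uet\le B\max_{\ott}\uet^{-1}$, which gives the second claim.
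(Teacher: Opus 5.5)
Your arguments for (i)(a) and (ii) are correct and agree with the paper. The paper obtains this proposition as the case $\hat m_t(z)=0$ of Proposition~\ref{prop:aipw_sc_app}. For (i)(a) it bounds $\var(V_{1,z})\le T^{-1}\sumt\gamma_t\lesssim T^{-1}\sumt v_t$ and applies Chebyshev, where you use an $L^1$ bound and Markov. You are also right that (i) needs uniformly bounded $Y_t(z)$; the paper's general version assumes it explicitly.

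\textbf{The gap is in (i)(b), and under the stated hypotheses it cannot be closed.} Your plan A is exactly the paper's argument. In the proof of Lemma~\ref{lem:v1}, cases (iii)--(iv) with $A_t(z)=Y_t(z)$, the paper asserts without justification that $\cov\{A_k(z)^2/e_k(z),A_t(z)^2/e_t(z)\}=0$ for $k<t-c_0t^\beta$. It then counts $O(T^{1+\beta})$ pairs, as you do. Your doubt about this step is well founded. Plan B inherits the same problem, because ``$D_{t,k}$ is negligible for $t-k\gg c_0t^\beta$'' is the same claim in martingale form. Neither claim follows from $e_t(z)$ being a function of the last $m_t\le c_0t^\beta$ assignments, since dependence propagates through the chain of assignments. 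For instance, if $e_t(1)=0.9$ when $Z_{t-1}=1$ and $0.1$ otherwise, then $\cov\{e_k(1)^{-1},e_t(1)^{-1}\}\propto 0.8^{t-k}\neq 0$ at every lag.

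For $\beta>0$ the conclusion itself fails. Take the following design:
\begin{itemize}
\item $K=2$ and $Y_t(1)=Y_t(2)=1$;
\item window length $m_t=\min\{t-1,\lceil t^{1/2}\rceil\}$, and $e_1(1)=0.5$;
\item for $t\ge 2$, $e_t(1)=0.9$, $0.1$ or $0.5$ according as a strict majority, a strict minority, or exactly half of $Z_{t-m_t},\ldots,Z_{t-1}$ equal $1$.
\end{itemize}
All hypotheses hold: $\uet\ge 0.1$, $v_t=O(1)$, and $\lm(\vi)$ stays bounded away from $0$ (and is at most $10$). While the whole window lies in the majority-$1$ regime, the number of $1$'s in it is $\mathrm{Bin}(m_t,0.9)$. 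This count is $\le m_t/2$ with probability at most $e^{-0.32m_t}$, which is summable in $t$. Hence, with probability bounded below uniformly in $T$, the design locks into $e_t(1)=0.9$ for all $t\ge 2$. By symmetry it locks into $e_t(1)=0.1$ with the same probability. So $T^{-1}\sumt e_t(1)^{-1}$ keeps non-vanishing mass near both $10/9$ and $10$. Therefore $\fn{\cvi-\vi}$ is not $\op$, and Condition~\ref{cond:ipw_clt}(i) fails.

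What the covariance step actually requires is a probabilistic condition rather than a functional one. One option is to assume $\{e_t(z)\}_{z}$ is independent of $\{e_k(z)\}_{z}$ whenever $t>T_0$ and $k<t-c_0t^\beta$; another is a mixing condition with sufficiently fast decay. Under such an assumption your plan A finishes the proof at once. Each covariance is at most $\max_t L_t^4\max_t v_t$, and only $O(T^{1+\beta})$ of them are nonzero, so $\var\{S_T(z)\}=O(T^{\beta-1}\max_t v_t)=o(1)$.
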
 

\prop~\ref{prop:ipw_sc}\eqref{item:ipw_sc_cond1} implies that when $\lm(\vi)$ is uniformly bounded away from 0, \cond~\ref{cond:ipw_clt}\eqref{item:variance convergence_ipw} holds if either $\etz$ has vanishing variance or depends only on recent history.
Specifically, \prop~\ref{prop:ipw_sc}\eqref{item:ipw_ve = 0} requires $\var\{\etzi\}$ to vanish asymptotically, which is likely to hold when $\etz$ converges. 
In contrast, \prop~\ref{prop:ipw_sc}\eqref{item:ipw_limited dep} imposes much weaker restrictions on $\var\{\etzi\}$ but requires $\etz$ to  depend only on recent history.
When $\etz$ is uniformly bounded away from 0, as in the $\epsilon$-greedy \mab, the corresponding $v_t$ is uniformly bounded, with $\maxt v_t= o(T^{1-\beta})$ for any $\beta < 1$. 
This allows choosing $\beta$ arbitrarily close to 1 to accommodate longer-range dependence. 

\prop~\ref{prop:ipw_sc}\eqref{item:ipw_sc_cond2} provides intuition for the rate at which $\uet$ may vanish under \cond~\ref{cond:ipw_clt}\eqref{item:linde_ipw}.
When $\ytz$ is uniformly bounded, the condition in \prop~\ref{prop:ipw_sc}\eqref{item:ipw_sc_cond2} holds if there exist constants $\delta, c_0 > 0$ such that 
$\min_{\ott} \uet \geq c_0 T^{-1/2 +\delta}$.
The rate $-1/2 + \delta$ is more stringent than the condition assumed by \cite{bibaut2021post} for identically distributed units, as their approach used reweighting to mitigate the impact of highly variable components in $\hyiz$.
However, this reweighting strategy may not be desirable when units are nonexchangeable. See \sec~\ref{sec:add_res} of the \sm\ for a simulated illustration.

\begin{remark}\label{rmk:rate}
When the treatment probability $\etz$ vanishes asymptotically for one or more treatment arms, 
the diagonal elements of $\vi$ may differ in order due to the small effective sample sizes for those arms.
As a result, the estimated average potential outcomes $\hyi(z)$ may converge at different rates across $\ziz$. When contrasts among treatment arms are of interest,
the rate required for the central limit theorem is typically determined by the arm with the smallest effective sample size.
Therefore, the central limit theorem may hold under weaker sufficient conditions than those in \prop~\ref{prop:ipw_sc}. 
For example, when the number of units in the first arm is substantially smaller than $T$ in order, the corresponding scaled variance of $\hyi(1)$, $T\var\{\hyi(1)\}$, may diverge as $T\rightarrow \infty$. 
In such cases, we may relax the sufficient conditions in \prop~\ref{prop:ipw_sc}, which are derived under only the assumption that $T\var\{\hyi(1)\}$, as the first diagonal element of $\vi$, is bounded away from 0.

For ease of presentation and to avoid technical clutter, we focus on the central limit theorem in \thm~\ref{thm:ipw_clt}, which simultaneously accommodates all contrast matrices and allows for arms with much smaller effective sample sizes relative to $T$. 
\hfill \qedsymbol
\end{remark}

\subsection{Variance estimation and confidence sets}\label{sec:ipw_var_est} 
We now construct confidence sets for $\tc$ based on $\hti$. 
Recall from \eqref{eq:vi} that $\myt = (\yt(1), \ldots, \yt(K))^\T$. 
Let $\hmyti = (\hat{Y}_{t,\ipw}(1),\ldots,\hat{Y}_{t,\ipw}(K))^\T$ be its \ipwc\ estimator, and define
\beginy\label{eq:hvi}
\hvi
=\dfrac{1}{T-1}\sumt (\hmyti -\hmyi ) (\hmyti -\hmyi )^\T 
\endy
as the sample covariance matrix of $\{\hmyti: t = \ot{T}\}$. \prop~\ref{prop:ipw_cov_est} below justifies using $\hvi$ to estimate $\vi$. 

\begin{condition}\label{cond:ipw_cs}
As $T\to\infty$, 
 $ \lmip^{-2}\cdot  T^{-2}\sumt\lt ^4/\uet ^3 = o(1)$.
\end{condition}

\begin{proposition}\label{prop:ipw_cov_est}
\prear. Let $
S = (T-1)^{-1}\sumt (\myt -\bmy ) (\myt -\bmy )^\T$
denote  the sample covariance matrix of $\{\myt: t = \ot{T}\}$. For \allc, 
\begine[(i)]
\item $
\E(C\hvi\ct) = C\vi\ct + CS\ct$, where $CS\ct = 0$ if and only if $\tau_{C,t} = C \myt$ is constant across $t = \ot{T}$.  
\item If Conditions~\ref{cond:ipw_clt}--\ref{cond:ipw_cs} hold, then $C\hvi\ct =C\vi\ct + C S\ct +  \lm(\vi) \cdot \op$. 
\ende
\end{proposition}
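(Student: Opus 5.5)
The plan is to write $\hmyti$ as the truth plus a martingale difference, decompose the sample covariance $\hvi$ into three pieces, and handle the pieces separately. Part (i) is an exact expectation calculation; part (ii) is a second-moment bound on each piece.

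\textbf{Decomposition.} Set $\mdt = \hmyti - \myt$ and $\bmd = T^{-1}\sumt \mdt = \hmyi - \bmy$. As noted after \cond~\ref{cond:ipw_clt}, $\{\mdt\}$ is \mart. Also $\E(\mdt\mdt^\T\mid\hht) = \cov(\hmyti-\myt\mid\hht)$, so that $T^{-1}\sumt \E(\mdt\mdt^\T\mid \hht)=\cvi$ and $T^{-1}\sumt\E(\mdt\mdt^\T) = \vi$. Since $\hmyti - \hmyi = (\myt-\bmy) + (\mdt - \bmd)$, we get
\begin{align*}
(T-1)\hvi = (T-1)S + \sumt\bigl\{(\myt-\bmy)\mdt^\T + \mdt(\myt-\bmy)^\T\bigr\} + \sumt \mdt\mdt^\T - T\bmd\bmd^\T .
\end{align*}
The cross terms involving $\bmd$ drop out because $\sumt(\myt-\bmy)=0$.

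\textbf{Part (i).} The cross term has mean zero because $\E(\mdt)=0$ and $\myt - \bmy$ is fixed. Martingale differences are uncorrelated, so $\E(\mdt\mdt_s^\T)=0$ for $s\neq t$. Hence $\E(T\bmd\bmd^\T) = T^{-1}\sumt\E(\mdt\mdt^\T)=\vi$, which is consistent with $\cov(\hmyi)=T^{-1}\vi$ from \thm~\ref{thm:ipw_fs}. Also $\E\sumt\mdt\mdt^\T = T\vi$. Therefore $\E\{(T-1)\hvi\} = (T-1)S + (T-1)\vi$, and multiplying by $C$ and $\ct$ gives the expectation formula. For the characterization, note that $CS\ct = (T-1)^{-1}\sumt C(\myt-\bmy)(\myt-\bmy)^\T\ct$ is a sum of positive semidefinite terms. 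It is therefore zero iff $C\myt = C\bmy$ for all $t$, i.e., iff $\tau_{C,t}$ is constant in $t$.

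\textbf{Part (ii).} It suffices to show that each random piece, divided by $T-1$, equals its limit plus $\lm(\vi)\cdot\op$. Replacing $T-1$ by $T$ costs only a factor $1+O(T^{-1})$. The key moment bound, obtained by conditioning on $\hht$ and using $\E\{\indz/\etz^4\mid\hht\}=\etz^{-3}$, is
\begin{align*}
\E(\|\mdt\|^4\mid\hht)\lesssim \lt^4/\uet^3, \qquad \E(\|\mdt\|^2\mid\hht)\lesssim \lt^2/\uet .
\end{align*}
With this in hand, the pieces are handled as follows.
\begin{enumerate}[(a)]
\item Write $T^{-1}\sumt\mdt\mdt^\T - \vi = T^{-1}\sumt\{\mdt\mdt^\T - \E(\mdt\mdt^\T\mid\hht)\} + (\cvi - \vi)$. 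The first sum is again a martingale difference sum. Its second moment is bounded by $T^{-2}\sumt\E\|\mdt\|^4\lesssim T^{-2}\sumt \lt^4/\uet^3 = o(\lm(\vi)^2)$ by \cond~\ref{cond:ipw_cs}. Chebyshev's inequality then gives $\lm(\vi)\cdot\op$. The second term is $\lm(\vi)\cdot\op$ by \cond~\ref{cond:ipw_clt}\eqref{item:variance convergence_ipw}.
\item For the cross term $T^{-1}\sumt(\myt-\bmy)\mdt^\T$, orthogonality of martingale differences bounds its second moment by $T^{-2}\sumt\|\myt-\bmy\|^2\,\lt^2/\uet$. Using $\|\myt-\bmy\|^2\lesssim \lt^2 + T^{-1}\sum_s L_s^2$, $\uet\le 1$, and Cauchy--Schwarz, this is controlled by $T^{-2}\sumt\lt^4/\uet^3$ and hence is $o(\lm(\vi)^2)$.
\item For $\bmd\bmd^\T$, we have $\E\|\bmd\|^2 = \mathrm{tr}(\vi)/T\lesssim T^{-2}\sumt\lt^2/\uet$. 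Cauchy--Schwarz gives $T^{-1}\sumt\lt^2/\uet\le(T^{-1}\sumt\lt^4/\uet^3)^{1/2}$, which is $T^{1/2}\,o(\lm(\vi))$ by \cond~\ref{cond:ipw_cs}. Hence $\bmd\bmd^\T = o_{\pr}(\lm(\vi))\,T^{-1/2}$, which is negligible.
\end{enumerate}

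I expect the main obstacle to be the bookkeeping in (b), specifically bounding the $\bmy$ contributions by the single quantity in \cond~\ref{cond:ipw_cs} without extra assumptions. The rest is routine martingale second-moment bounds.
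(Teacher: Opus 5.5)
Your proposal is correct and follows essentially the paper's route. The paper obtains this result as the special case $\hat m_t(z)=0$ of Proposition~\ref{prop:aipw_cov_est}, using the same decomposition \eqref{eq:cov_ss}, Lemma~\ref{lem:mds_cov} for the exact expectation, and Chebyshev bounds on the martingale term, the $\check V_{\ipw}-V_{\ipw}$ term, the $\bar{\bm D}\bar{\bm D}^{\top}$ term and the two cross terms (Lemmas~\ref{lem:Dtz_Dtzp}, \ref{lem:var_DtYt} and \ref{lem:V_hat_aipw_diff}); it works entrywise, whereas you use Frobenius-norm moment bounds.

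The one step you pass over quickly, as the paper also does, is absorbing the factor $T/(T-1)$, which needs $V_{\ipw}/(T-1)=\lambda_{\min}(V_{\ipw})\cdot o(1)$. This follows from your own estimate in (c), since by Condition~\ref{cond:ipw_cs} we have $\mathrm{tr}(V_{\ipw})\lesssim T^{-1}\sum_{t=1}^T L_t^2/\underline{e}_t\le (T^{-1}\sum_{t=1}^T L_t^4/\underline{e}_t^3)^{1/2}=T^{1/2}\,o\{\lambda_{\min}(V_{\ipw})\}$.
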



Recall from \thm~\ref{thm:ipw_fs} that $\cov(\hti) = T^{-1}C\vi\ct$.
\prop~\ref{prop:ipw_cov_est} implies that $C\hvi\ct$ is a conservative, and therefore valid, estimator of $C\vi\ct = T\cov(\hti)$ in both expectation and probability limit, with an upward bias of $CS\ct$. 
This conservativeness in variance estimation is inherent to the {\db} framework \citep{neyman1923on, imbens2015causal}, and vanishes if and only if the individual analogs of $\tc$, $\tau_{C,t} = C\myt$, are constant across all units. 
When covariates are available, this upward bias can be reduced by constructing a lower bound for $S$ to adjust $\hvi$. 
We present the details in \prop~\ref{prop:aipw_cov_est_2} in \sec~\ref{sec:aipw} as part of the theory for the AIPW estimator.

Let
\cls{
\si =\left\{\tau: (\hat{\tau}_{\ipw, C} -\tau )^\T (T^{-1}C\hat{V}_{\ipw}\ct)^{-1} 
(\hat{\tau}_{\ipw, C} -\tau )\le \chisqa\right\}
}

denote the $100(1-\alpha)\%$ confidence set for $\tc$ based on $(\hti,\hvi)$ and the normal approximation, where $\rc$ denotes the rank of $C$ and ${\chisqa}$ denotes the $1-\alpha$ quantile of the chi-square distribution with $\rc$ degrees of freedom. 
 \thm~\ref{thm:ipw_cs} below follows from \thm~\ref{thm:ipw_clt} and \prop~\ref{prop:ipw_cov_est}, and establishes the validity of large-sample Wald-type inference based on $\si$. This concludes our discussion of the IPW estimator.

\begin{theorem}\label{thm:ipw_cs}
\prear. 
If Conditions~\ref{cond:ipw_clt}--\ref{cond:ipw_cs} hold, then\\
\cl{$\liminf_{T\to\infty}\pr (\tau_C\in\si) \ge 1 -\alpha$ \quad for all $\alpha\in (0, 1)$}
 for \allcf. 
\end{theorem}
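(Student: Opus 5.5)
The plan is to combine the central limit theorem of \thm~\ref{thm:ipw_clt} with the probability-limit statement in \prop~\ref{prop:ipw_cov_est}(ii), and then argue via a Slutsky-type step that the Wald statistic built from $C\hvi\ct$ is asymptotically dominated by a $\chi^2_{\rc}$ variable. Write $V_C = C\vi\ct$, $\hat V_C = C\hvi\ct$ and $S_C = CS\ct$. Let $u_T = (V_C)^{-1/2}\sqrt T(\hti-\tc)$. \thm~\ref{thm:ipw_clt} gives $u_T\converged\snrc$. We have $\tc\in\si$ if and only if $W_T = \sqrt T(\hti-\tc)^\T \hat V_C^{-1}\sqrt T(\hti-\tc)\le\chisqa$, and $W_T = u_T^\T V_C^{1/2}\hat V_C^{-1}V_C^{1/2}u_T$.

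First, normalize the variance estimator. By \prop~\ref{prop:ipw_cov_est}(ii), $\hat V_C = V_C + S_C + \lm(\vi)\cdot\op$. Because $C$ has full row rank, $\lm(V_C)\ge \lm(\vi)\lm(\cct)$, and $\lm(\cct)>0$ is a fixed constant. Hence
\[
V_C^{-1/2}\hat V_C V_C^{-1/2} = I + V_C^{-1/2}S_C V_C^{-1/2} + \op .
\]
The remainder is $\op$ in operator norm, since $\|V_C^{-1/2}\|^2\lm(\vi)\le \lm(\cct)^{-1}$. Set $B_T = I+V_C^{-1/2}S_CV_C^{-1/2}$. This matrix is symmetric and satisfies $B_T\succeq I$ because $S\succeq 0$. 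In particular $B_T$ is invertible with $\|B_T^{-1}\|\le 1$.

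Second, show that $W_T \le u_T^\T B_T^{-1}u_T + \op$ and that $u_T^\T B_T^{-1}u_T\le \|u_T\|^2$. Write $V_C^{1/2}\hat V_C^{-1}V_C^{1/2} = (B_T + R_T)^{-1}$ with $R_T=\op$. The obstacle here is that $B_T$ need not converge, because $S_C$ may be unbounded relative to $V_C$. So the argument must be uniform rather than a plain continuous-mapping step. Since $B_T\succeq I$, on the event $\|R_T\|\le\epsilon<1$ we have $B_T+R_T\succeq (1-\epsilon)B_T\succeq(1-\epsilon)I$. This gives $(B_T+R_T)^{-1}\preceq (1-\epsilon)^{-1}B_T^{-1}\preceq(1-\epsilon)^{-1}I$. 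The same bound also shows that $\hat V_C$ is invertible with probability tending to one, so $\si$ is well defined. Therefore, with probability tending to one, $W_T\le (1-\epsilon)^{-1}\|u_T\|^2$.

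Finally, conclude the coverage bound. We have
\[
\pr(\tc\in\si)\ge \pr\{\|u_T\|^2\le(1-\epsilon)\chisqa\} - \pr(\|R_T\|>\epsilon).
\]
By the continuous mapping theorem, $\|u_T\|^2\converged\chi^2_{\rc}$, so the liminf of the right-hand side is at least $F_{\chi^2_{\rc}}\{(1-\epsilon)\chisqa\}$. Letting $\epsilon\downarrow0$ and using continuity of the chi-square distribution function yields $\liminf_{T\to\infty}\pr(\tc\in\si)\ge1-\alpha$. The only delicate step is the second one: handling the possibly non-convergent conservative bias $S_C$ uniformly. The ordering $B_T\succeq I$ resolves this without requiring any convergence of $S_C$.
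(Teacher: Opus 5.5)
Your proof is correct and takes essentially the same route as the paper. The paper's argument (the AIPW proof specialized to $\hat m_t(z)=0$) also normalizes by $(CV_{\mathrm{ipw}}C^\top)^{-1/2}$, uses $\lambda_{\min}(CV_{\mathrm{ipw}}C^\top)\ge\lambda_{\min}(V_{\mathrm{ipw}})\lambda_{\min}(CC^\top)$ together with the variance-estimator expansion, and exploits $CSC^\top\succeq 0$ through a Loewner-order comparison. The difference is minor: the paper subtracts $CSC^\top$ before inverting, so the infeasible statistic built from $C\hat V_{\mathrm{ipw}}C^\top-CSC^\top$ has an exact $\chi^2_{\mathrm{rank}(C)}$ limit and dominates yours via $(C\hat V_{\mathrm{ipw}}C^\top)^{-1}\preceq(C\hat V_{\mathrm{ipw}}C^\top-CSC^\top)^{-1}$, whereas your $(1-\epsilon)$ sandwich with $B_T\succeq I$ handles the possibly non-convergent bias directly.
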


\section{Design-based theory of the AIPW estimator}\label{sec:aipw}
We now extend \sec~\ref{sec:ipw} to the \aipwc\ estimator $\hta$, as defined in \eqref{eq:hya}.
Recall from \eqref{eq:hytaz} that $\hta$ is constructed using
\cls{
\hytaz =\hytiz +\left\{ 1 -\dfrac{\indz }{\etz }\right\}\hmtz,}

where $\hmtz$ denotes an adaptive estimator of $\ytz$, constructed using only the information in  $H_t$. The results for $\hta$ include those for $\hti$ as a special case when $\hmtz = 0$.

\subsection{Sampling properties}\label{sec:aipw_clt}
Let $\atz =\ytz -\hmtz$ denote an adjusted version of $\ytz$, interpreted as the residual from the outcome estimator $\hmtz$. 
Let $\mat = (\at(1),\ldots,\at(K))^\T$, analogous to $\myt$. 
Let 
\beginy\label{eq:va}
\begin{array}{rcl}
\va &=& \diag\left[\ds\meant\E\left\{\dfrac{\atzs}{\etz }\right\}\right]_{\ziz} - \ds\meant\E(\mat\matt),
\bigskip\\
\cva &=& \diag\left[\ds\meant\dfrac{\atzs}{\etz }\right]_{\ziz} - \ds\meant\mat\matt
\end{array}
\endy
denote the  \aipwc\ analogs of $\vi$ and $\cvi$, respectively, with $\E(\cva)=\va$. 

Recall that $\lt =\max_{z\in\mathcal{Z}}|Y_t(z)|$, and $\uet =\min_{H_t,\, \ziz}\etz$ denotes the minimum of $\etz$ over all $|\mz|^t$ possible values of $(H_t,z)$. 
Let $\mt=\max_{H_t,\, \ziz} |\hmtz|$ denote the maximum of $|\hmtz|$ over the same set. 
\thm~\ref{thm:aipw} below extends Theorems~\ref{thm:ipw_fs}--\ref{thm:ipw_clt}, and establishes the unbiasedness and central limit theorem for $\hta$.

\begin{condition}\label{cond:aipw_clt}
As $T\to\infty$, 
\begine[(i)] 
\item\label{item:clt_aipw_var_conv} $\lmai \cdot\fn{\cva - \va } =\op$.
\item\label{item:clt_aipw_lindeberg}$\lmai \cdot T^{-1}\max_{\ott } (\lt + \mt)^2/\uet^2 =\oo$. 
\ende
\end{condition}

\begin{theorem}\label{thm:aipw}
\prear. 
For \allcf, 
\begine[(i)]
\item\label{item:aipw_mean_var}
 $\E(\hta) =\tc$, \quad $\cov(\hta) = T^{-1} C\va\ct $. 
\item\label{item:aipw_clt} If \cond~\ref{cond:aipw_clt} holds, then $( C\va \ct)^{-1/2}\cdot\sqrt{T} 
(\hat{\tau}_{\aipw, C} -\tau_C)
\converged\snrc $.
\ende
\end{theorem}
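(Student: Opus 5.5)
The plan is to write $\hmya - \bmy = T^{-1}\sum_{t=1}^T \md_t$ with $\md_t = \hmyta - \myt$, where $\hmyta = (\hytaz)_{\ziz}$, and to show that $\{\md_t\}$ is \mart. Since $\hmtz$ and $\etz$ are functions of $H_t$, and the potential outcomes are fixed, we have
\[
\E\{\hytaz \mid H_t\} = \E\{\I(Z_t=z)\mid H_t\}\,\etzi\,\ytz + \bigl[1 - \E\{\I(Z_t=z)\mid H_t\}\,\etzi\bigr]\hmtz = \ytz .
\]
Hence $\E(\md_t\mid H_t)=0$. Also, $\md_t$ is measurable with respect to $H_{t+1}$, because $H_{t+1}$ contains $Z_t$ and $Y_t$. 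Taking total expectations gives $\E(\hta)=C\bmy=\tc$.

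For the covariance, martingale differences are uncorrelated, so $\cov(\hmya) = T^{-2}\sum_{t=1}^T \E\{\cov(\md_t\mid H_t)\}$. Write $\hytaz - \ytz = \{\I(Z_t=z)/\etz - 1\}\atz$ with $\atz = \ytz - \hmtz$. Using $\I(Z_t=z)\I(Z_t=z')=0$ for $z\ne z'$, the conditional covariance is
\[
\cov(\md_t\mid H_t) = \diag\{\atzs/\etz\}_{\ziz} - \mat\matt .
\]
Averaging over $t$ gives $T^{-1}\sum_{t=1}^T\cov(\md_t\mid H_t) = \cva$, and taking expectations gives $T\cov(\hmya) = \va$. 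This proves part (i) after premultiplying by $C$ and postmultiplying by $\ct$.

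For part (ii), I would apply the multivariate martingale CLT \citep{brown1971martingale} via Cramér--Wold. Take the normalized array $\xi_t = T^{-1/2}(C\va\ct)^{-1/2}C\md_t$. Its conditional variance sum is $(C\va\ct)^{-1/2}C\cva\ct(C\va\ct)^{-1/2}$. This differs from $I$ by an amount whose norm is bounded by $\|(C\va\ct)^{-1}\|\,\fn{C}^2\,\fn{\cva-\va}$.

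Because $C$ has full row rank, $\lm(C\va\ct)\ge \lm(\va)\,\lm(\cct)$, with $\lm(\cct)>0$ fixed. Condition~\ref{cond:aipw_clt}(i) therefore yields convergence in probability of the conditional variance to $I$. This is the step needing care, since the variance is normalized by the possibly diverging or vanishing $\va$; the eigenvalue bound handles it.

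For the Lindeberg condition, each entry of $\md_t$ satisfies
\[
|\hytaz-\ytz|\le |\atz|/\etz + |\atz| \le 2(\lt+\mt)/\uet .
\]
Hence
\[
\|\xi_t\|^2 \lesssim \lmai \cdot T^{-1}\max_{\ott}(\lt+\mt)^2/\uet^2 ,
\]
with constants depending only on $C$ and $K$, and this bound is deterministic and uniform in $t$. By Condition~\ref{cond:aipw_clt}(ii), $\max_t\|\xi_t\|\to 0$ uniformly. Consequently, for any $\epsilon>0$ the Lindeberg indicators $\I(\|\xi_t\|>\epsilon)$ all vanish for large $T$, and the Lindeberg condition holds trivially. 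The martingale CLT then gives the stated convergence to $\snrc$.
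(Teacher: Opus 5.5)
Your proposal is correct and follows essentially the same route as the paper. Part (i) uses the martingale-difference representation of $\hmya - \bmy$ with conditional covariance $\diag\{\atzs/\etz\}_{z\in\mathcal Z} - \mat\matt$; part (ii) uses Cram\'er--Wold and the scalar martingale CLT, with $\lm(C\va\ct)\ge\lm(\va)\lm(\cct)$ for variance convergence and a deterministic, $t$-uniform bound on the summands that makes the Lindeberg indicators vanish for large $T$. The differences are cosmetic: the paper scales $\md_t$ by $T^{-1/2}$, and its bound $|\hytaz-\ytz|\le(\lt+\mt)/\uet$ drops your harmless factor of $2$.
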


Recall from \thm~\ref{thm:ipw_fs} that the covariance matrix of $\hti$ is $T^{-1} C\vi\ct$.
\thm~\ref{thm:aipw} further establishes that the covariance matrix of $\hta$ is $T^{-1}C\va\ct$.
The relative efficiency of $\hta$ compared to $\hti$ therefore depends on the difference $\va-\vi$.
Let $\hmyta = (\hyta(1),\ldots,\hyta(K))^\T$ denote the  \aipwc\ estimator of $\myt = (Y_t(1),\ldots, Y_t(K))^\T$, generalizing the \ipwc\ estimator $\hmyti$ as defined in \eqref{eq:hvi}.
Echoing the discussion following \cond~\ref{cond:ipw_clt}, 
we show in the \sms that 
\beginy\label{eq:aipw_va_vi}
\vi = T^{-1}\sumt\E\left\{\cov(\hmyti -\myt\mid\hht)\right\}, \quad \va  
= T^{-1}\sumt\E\left\{\cov(\hmyta -\myt\mid\hht)\right\},
\endy 
where the elements of $\hmyti -\myt$ and $\hmyta -\myt$ are
\cll{\ds
\hyti(z) - \yt = \left\{\dfrac{\indz }{\etz } - 1\right\}   \yt,
\quad 
\hyta(z) - \yt 
= \left\{\dfrac{\indz }{\etz } - 1\right\}  \left\{\yt - \hmtz\right\} 
}

for $\ziz$, respectively, by \eqref{eq:hytiz} and \eqref{eq:hytaz}.
Heuristically,  if $\hmtz$ is a reasonable estimator of $\ytz$, then $\hmyta$ tends to be less variable than $\hmyti$, with $\cov(\hmyta -\myt\mid\hht) < \cov(\hmyti -\myt\mid\hht)$. This implies $\va < \vi$ by \eqref{eq:aipw_va_vi}, suggesting that $\hta$ improves efficiency over $\hti$. We illustrate this efficiency gain through simulation in \sec~\ref{sec:simulation}.  

\cond~\ref{cond:aipw_clt} extends \cond~\ref{cond:ipw_clt}, and specifies the regularity conditions for the central limit theorem of $\hta$ in \thm~\ref{thm:aipw}\eqref{item:aipw_clt}.
Specifically, \cond~\ref{cond:aipw_clt}\eqref{item:clt_aipw_var_conv} and \eqref{item:clt_aipw_lindeberg} are the variance convergence and Lindeberg conditions, respectively, for applying the martingale central limit theorem to the sequence $\{\hmyta - \myt: t = \ot{T}\}$. 
Parallel to the discussion following \thm~\ref{thm:ipw_clt}, $\lm(\va)$ is the minimum scaled variance of $\hta = C\hmya$ over all $C \in \mathbb R^{1\times K}$ with $\|C\|_2=1$, and it is reasonable to assume that $\lm(\va)$ is uniformly bounded away from 0.
In addition, outcomes in practice are typically bounded, making it also reasonable to assume that both $\ytz$ and $\hmtz$ are uniformly bounded. 
\prop~\ref{prop:aipw_sc} below builds on these intuitions and provides four sufficient conditions for \cond~\ref{cond:aipw_clt}. 
See \prop~\ref{prop:aipw_sc_app} and Lemma \ref{lem:aipw_sc_app} in the \sms for more general versions that allow for vanishing $\etz$ and unbounded outcomes. 

Recall that $v_t =\maxz\var\{\etzi\}$ from \prop~\ref{prop:ipw_sc}.
Let $\omt =\maxz\var\{\hmtz\}$ denote the maximum variance of $\hmtz$ at time $t$.

\begin{proposition}\label{prop:aipw_sc}
\prear.
If as $\ttinf$, (a) both $\lm(\va)$ and $\etz$ are uniformly bounded away from 0, and (b) both $ \ytz $ and $ \hmtz$ are uniformly bounded, 
then \cond~\ref{cond:aipw_clt} holds if any of the following conditions holds:
\begine[(i)]
\item\label{item:V_i_main_c}{\bf Vanishing $\var\{\etzi\}$ and $\var\{\hmtz\}$:} $T^{-1}\sumt v_t = \oo$; \ \
$T^{-1}\sumt\omt = \oo$.
\item\label{item:V_ii_main_c}{\bf Vanishing $\var\{\etzi\}$ and \lrd\ of $\hmtz$:}\\
$\meant\sqrt{v_t} =\oo$;\\
\prelimdep, $\{\hmtz\}_{\ziz}$ \limdep. 
\item\label{item:V_iii_main_c}{\bf Vanishing $\var\{\hmtz\}$ and \lrd\ of $\etz$:}\\
$T^{-1}\sumt\omt = o(1)$;\\
\prelimdep, $\{\etz\}_{\ziz}$ \limdep.
\item\label{item:V_iv_main_c} 
{\bf \llrd\ of $\etz$ and $\hmtz$:}
\prelimdepc, $\{\etz,\hmtz:\ziz\}$ \limdep.
\ende
\end{proposition}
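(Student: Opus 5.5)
Under (a) and (b), $\lmai$ is bounded and $\lt+\mt$, $\uet^{-1}$ are uniformly bounded. Hence \cond~\ref{cond:aipw_clt}\eqref{item:clt_aipw_lindeberg} is immediate: $T^{-1}\max_{\ott}(\lt+\mt)^2/\uet^2 = O(T^{-1}) = \oo$. The work is all in \cond~\ref{cond:aipw_clt}\eqref{item:clt_aipw_var_conv}. Since $\lm(\va)$ is bounded away from 0, it suffices to show $\fn{\cva-\va}=\op$. By Chebyshev's inequality, it is enough to show that each entry of $\cva-\va$ has variance $o(1)$. Each entry is $T^{-1}\sumt (W_t - \E W_t)$, where either $W_t = \atzs/\etz$ (diagonal terms) or $W_t = \at(z)\at(z')$. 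I would write
\[
\atzs/\etz = \ytzs\,\etzi - 2\ytz\hmtz\,\etzi + \hmtz^2\,\etzi,
\]
so each $W_t$ is a finite sum of products of a fixed bounded coefficient with the random quantities $\etzi$, $\hmtz$, $\hmtz^2$, $\hmtz\,\etzi$, $\hmtz^2\etzi$, and $\hmtz\hmtzp$. I would then bound $\var(T^{-1}\sumt W_t)\le T^{-2}\sum_{s,t}|\cov(W_s,W_t)|$ separately under each of the four regimes.

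In case (i) I would use $\var(T^{-1}\sumt W_t)\le \{T^{-1}\sumt \var(W_t)^{1/2}\}^2 \le T^{-1}\sumt\var(W_t)$ by Minkowski and Jensen. Because $\etzi$ and $\hmtz$ are bounded, each product term has variance bounded by a constant times $\var(\etzi)+\var(\hmtz)$, using $\var(fg)\ls \var f+\var g$ for bounded $f,g$. This gives $\var(W_t)\ls v_t+\omt$, and the result follows.

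In case (iv) I would use limited-range dependence. For $t>s+c_0t^{\beta}$ with both indices beyond $T_0$, $W_t$ is a function of $\xbt$ and the assignments $Z_{t-c_0t^\beta},\dots,Z_{t-1}$. I would show $\cov(W_s,W_t)=0$ by conditioning on $H_{s+1}$, or more precisely by arguing that $W_t$ depends on $(Z_{s+1},\dots,Z_{t-1})$ only. The hard point is that these later assignments have distributions depending on earlier history, so independence is not automatic. I expect this to be the main obstacle. My first attempt would be to show that, after an appropriate reindexing, the conditional law of $(Z_{t-c_0t^\beta},\dots,Z_{t-1})$ given $H_{s+1}$ is a composition of kernels that still depend on the past. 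That does not give exact zero covariance, so instead I would exploit the martingale structure: write $W_t-\E W_t$ as a telescoping sum of martingale increments $\E(W_t\mid \mf_u)-\E(W_t\mid\mf_{u-1})$. Increments with $u<t-c_0t^\beta$ can be controlled because $W_t$ is a bounded function of recent assignments only. Cross terms between far-apart increments then vanish by orthogonality of martingale differences. The nonzero pairs number $O(T\cdot T^{\beta})$ with bounded covariances, giving variance $O(T^{\beta-1})=o(1)$, with the first $T_0$ terms contributing $O(T^{-1})$.

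Cases (ii) and (iii) are hybrids. I would split each product term, for example $\hmtz\etzi = \hmtz\,\E(\etzi) + \hmtz\{\etzi-\E\etzi\}$. In (ii), the first piece depends only on the limited-range quantity $\hmtz$ and is handled by the case-(iv) argument. The second piece has $L^2$ norm $\ls\sqrt{v_t}$, so its average has $L^1$ norm (indeed $L^2$ norm, by Minkowski) at most $T^{-1}\sumt\sqrt{v_t}=\oo$. This explains why (ii) requires $\sqrt{v_t}$ rather than $v_t$. Case (iii) is symmetric, centering $\hmtz$ instead; Cauchy--Schwarz with bounded $\etzi$ then suffices with $T^{-1}\sumt\omt$, via Jensen. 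Combining these with Chebyshev gives $\fn{\cva-\va}=\op$ in all four cases.
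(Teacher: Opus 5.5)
Your Lindeberg argument, the entrywise Chebyshev reduction, case (i), and the centering split in (ii)/(iii) are sound. In the split, the Minkowski bounds $T^{-1}\sum_t\sqrt{v_t}$ and $(T^{-1}\sum_t\omega_t)^{1/2}$ control the centered pieces. All of this follows the paper's route through Lemmas~\ref{lem:v1}--\ref{lem:var_bounds}. The gap is the limited-range step, on which (iv) and the first pieces of your splits in (ii) and (iii) all rest. You claim the increments $\E(W_t\mid\mathcal F_u)-\E(W_t\mid\mathcal F_{u-1})$ with $u<t-c_0t^\beta$ ``can be controlled because $W_t$ is a bounded function of recent assignments only''. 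Nothing supports this. The law of $(Z_{t-c_0t^\beta},\dots,Z_{t-1})$ given $\mathcal F_u$ is produced by composing propensities that each look back $c_0s^\beta$ steps, so information in $\mathcal F_u$ is carried forward indefinitely. Boundedness gives no decay in $t-u$. Orthogonality only separates increments at different $u$; far-apart $W_s$ and $W_t$ share every increment with $u<\min(s,t)$. Your count of $O(T^{1+\beta})$ correlated pairs therefore assumes exactly what is missing. In (ii) the problem is worse, since the propensities generating the window of $\hmtz$ may depend on the entire past.

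This step cannot be closed from the stated hypotheses. Take $K=2$, $Y_t(z)\equiv1$, $\hat m_t(z)\equiv0$, and fair coins up to a burn-in $t_0$. For $t>t_0$, set $e_t(1)=0.9$ if at least half of the last $\lfloor c_0t^\beta\rfloor$ assignments were arm~1, and $e_t(1)=0.1$ otherwise, with $\beta\in(0,1)$. A phase switch at time $t$ needs a binomial deviation of probability at most $\exp(-0.32\lfloor c_0t^\beta\rfloor)$, which is summable. So for large $t_0$, each phase persists forever with probability bounded away from zero. Let $M(e)=\diag\{1/e,1/(1-e)\}-11^\T$. Then $\cva=T^{-1}\sum_tM\{e_t(1)\}$ is near $M(0.9)$ on one event and near $M(0.1)$ on the other, so $\fn{\cva-\va}\not\to0$ in probability. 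Yet $\lm(\va)$ stays bounded away from zero, because $M(0.9)$ and $M(0.1)$ are rank-one positive semidefinite matrices with different null vectors $(e,1-e)^\T$. All hypotheses of (iii) and (iv) hold here, but Condition~\ref{cond:aipw_clt}\eqref{item:clt_aipw_var_conv} fails.

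The paper's proof does not resolve this either. In Lemma~\ref{lem:v1}, cases (ii)--(iv), and in Lemma~\ref{lem:V2}, it simply asserts that the covariance between the $k$th and $t$th summands vanishes for $k<t-c_0t^\beta$, which is the very step you distrusted. That assertion holds only if ``depends only on the most recent units'' means stochastic independence of $\{\etz,\hmtz\}$ from $(Z_1,\dots,Z_{t-c_0t^\beta-1})$. Under that reading your counting works as written. Otherwise an explicit mixing assumption must be added. For example, for (iii)--(iv) with $\beta=0$, a Doeblin argument on the chain of windows, using $\uet$ bounded below, gives geometrically decaying covariances, which suffices.
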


Each of the four sufficient conditions in \prop~\ref{prop:aipw_sc} imposes restrictions on both $\etz$ and $\hmtz$.
Together, they require:\\
- \ $\etz$ has {\it either} vanishing $\var\{\etzi\}$ {\it or} \lrd\ on past units, {\it and}\\
- \ $\hmtz$ has {\it either} vanishing $\var\{\hmtz\}$ {\it or} \lrd\ on past units,\\
as illustrated by the table below:

\begin{center}
    \begin{tabular}{c|c|c} \hline
        & Vanishing $\var\{\hmtz\}$ & \renewcommand{\arraystretch}{.6}\begin{tabular}{c}Limited-range\\ dependence of $\hmtz$\end{tabular}\\\hline
Vanishing $\var\{\etzi\}$ & \prop~\ref{prop:aipw_sc}\eqref{item:V_i_main_c} & \prop~\ref{prop:aipw_sc}\eqref{item:V_ii_main_c}\\
\llrd\ of $\etz$ & \prop~\ref{prop:aipw_sc}\eqref{item:V_iii_main_c} & \prop~\ref{prop:aipw_sc}\eqref{item:V_iv_main_c} \\\hline
    \end{tabular}
\end{center}

\smallskip

Consider the special case in which the sequence of finite populations consists of the first $T$ units of a sequentially enrolled infinite population $\{X_t, Z_t,\ytz:\ziz\}_{t=1}^\infty$ for $T = 1, 2,\ldots$.
By the dominated convergence theorem \citep{spall2003introduction}, 
the conditions of vanishing $\var\{\etzi\}$ in \prop~\ref{prop:aipw_sc}\eqref{item:V_i_main_c}--\eqref{item:V_ii_main_c}, $T^{-1}\sumt v_t = \oo$ and $T^{-1}\sumt \sqrt v_t = \oo$,  hold if $\etz$ converges in probability to a constant as $t\to\infty$.
In contrast, the \lrd\ condition on $\etz$ in \prop~\ref{prop:aipw_sc}\eqref{item:V_iii_main_c}--\eqref{item:V_iv_main_c} allows for nonconverging $\etz$. 
The same applies to the outcome estimator $\hmtz$. 
Moreover, there is no restriction on $\beta$ beyond $\beta \in [0,1)$, so we can choose $\beta$ arbitrarily close to 1 to allow for longer-range dependence.

\subsection{Variance estimation and confidence sets}\label{sec:aipw_var_est}
We now construct confidence sets for $\tc$ based on $\hta$.
Recall from \eqref{eq:aipw_va_vi} that $\hmyta = (\hyta(1),\ldots,\hyta(K))^\T$ denotes the  \aipwc\ estimator of $\myt = (Y_t(1),\ldots, Y_t(K))^\T$. 
Parallel to $\hvi$ as defined in \eqref{eq:hvi}, define 
\cll{
\hva  
=\dfrac{1}{T-1}\ds\sumt (\hmyta -\hmya) (\hmyta -\hmya)^\T,  
}

as the sample covariance matrix of $\{\hmyta : t =\ot{T}\}$, where $\hmya = \meant \hmyta$.
Recall that $S$ denotes the sample covariance matrix of $\{\myt: t = \ot{T}\}$.  %
\prop~\ref{prop:aipw_cov_est} below extends \prop~\ref{prop:ipw_cov_est}, and justifies using $C\hva\ct$ to estimate $C\va \ct$, with an upward bias of $CS\ct$ that vanishes when $C\myt$ is constant across $\ott$. 

\begin{condition}\label{cond:aipw_cs}
As $T\to\infty$, 
 $ \lmap^{-2}\cdot  T^{-2}\sumt (\lt +\mt) ^4/\uet ^3 = o(1)$.
\end{condition}

 
\begin{proposition}\label{prop:aipw_cov_est}
\prear. For \allc,
\begine[(i)]
\item\label{item:aipw_var_est_exact} $\E(C\hva \ct) = C\va\ct + CS\ct$, where 
$CS\ct = 0$ if and only if $\tau_{C,t} = C \myt$ is constant across $t = \ot{T}$. 
\item\label{item:aipw_var_est_asym} If Conditions~\ref{cond:aipw_clt}--\ref{cond:aipw_cs} hold, then $C\hva\ct = C\va\ct  + CS\ct + \lm(\va) \cdot \op$.  
\ende
\end{proposition}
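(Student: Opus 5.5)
We decompose $\hmyta = \myt + \md_t$, where $\md_t = \hmyta - \myt$ has entries $D_t(z) = \{\indz/\etz - 1\}\atz$. By construction $\E(\md_t\mid\hht)=0$, so $\{\md_t\}$ is a martingale difference sequence with respect to $\{H_{t+1}\}$. Writing $\bmd = \meant \md_t$ and $\bmy=\meant\myt$, we have $\hmya - \bmy = \bmd$. Hence
\begin{align*}
(T-1)\hva = \sumt (\myt-\bmy)(\myt-\bmy)^\T + \sumt (\md_t-\bmd)(\md_t-\bmd)^\T + \sumt\big\{(\myt-\bmy)(\md_t-\bmd)^\T + (\md_t-\bmd)(\myt-\bmy)^\T\big\}.
\end{align*}
The first term equals $(T-1)S$. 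The cross terms reduce to $\sumt(\myt-\bmy)\md_t^\T$ plus its transpose, because $\sumt(\myt-\bmy)=0$. These have mean zero since $\myt$ is fixed and $\E(\md_t)=0$. The middle term equals $\sumt\md_t\md_t^\T - T\bmd\bmd^\T$. A direct computation gives $\E(\md_t\md_t^\T\mid\hht) = \diag\{\atzs/\etz\} - \mat\matt$, so $\E\sumt\md_t\md_t^\T = T\va$. By orthogonality of martingale differences, $\E(T\bmd\bmd^\T)=T^{-1}\sumt\E(\md_t\md_t^\T)=\va$. The middle term therefore has expectation $(T-1)\va$. Dividing by $T-1$ and pre- and post-multiplying by $C$ and $\ct$ gives part (i). The equivalence $CS\ct=0$ iff $C\myt$ is constant holds because $CS\ct$ is the sample covariance of $\{C\myt\}$, and a sum of the PSD terms $(C\myt-C\bmy)(C\myt-C\bmy)^\T$ vanishes only if each term does.

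For part (ii), we show each random piece deviates from its target by $\lm(\va)\cdot\op$ in Frobenius norm. We use four steps.

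1. Write $T^{-1}\sumt\md_t\md_t^\T - \va = (T^{-1}\sumt\md_t\md_t^\T - \cva) + (\cva-\va)$. The second piece is $\lm(\va)\op$ by \cond~\ref{cond:aipw_clt}(i).

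2. The first piece is an average of martingale differences $\md_t\md_t^\T - \E(\md_t\md_t^\T\mid\hht)$. Its second moment is bounded by $T^{-2}\sumt\E\|\md_t\|^4$. Since $|D_t(z)|\le \uet^{-1}|\atz|$, $|\atz|\le \lt+\mt$, and $\indz/\etz$ has conditional fourth moment of order $\etz^{-3}$, we get $\E\|\md_t\|^4 \lesssim (\lt+\mt)^4/\uet^3$. Chebyshev's inequality together with \cond~\ref{cond:aipw_cs} then makes this piece $\lm(\va)\op$.

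3. The term $\bmd\bmd^\T$ has expectation $T^{-1}\va$, with $\fn{\va}\lesssim\max_t(\lt+\mt)^2/\uet$. Thus $\fn{\bmd\bmd^\T}=O_\pr(T^{-1}\fn{\va})$. This is $\lm(\va)\op$ by \cond~\ref{cond:aipw_clt}(ii), using $\uet\le1$.

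4. The cross term $T^{-1}\sumt(\myt-\bmy)\md_t^\T$ is the step requiring care, because $\myt$ may be large relative to $\lm(\va)$. Its variance is $T^{-2}\sumt \E\{\|\myt-\bmy\|^2\|\md_t\|^2\}$. A naive bound gives roughly $T^{-1}\max_t \lt^2(\lt+\mt)^2/\uet$, which does not directly fit the conditions. We would instead apply Cauchy--Schwarz in the form $\fn{\text{cross}} \le \{\text{tr}(S)\}^{1/2}\{T^{-1}\sumt\|\md_t\|^2\}^{1/2}\cdot T^{-1/2}$-type bounds, or more precisely bound the cross term relative to $S+\va$. The target is $\text{cross} = o_\pr(1)\cdot\{\fn{S}+\lm(\va)\}$, with the $\fn{S}$ part absorbed as $\op\cdot CS\ct$.

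If the absorption fails, a fallback is to bound the variance by $T^{-2}\sumt\lt^2(\lt+\mt)^2/\uet \le \{T^{-2}\sumt(\lt+\mt)^4/\uet^3\}$ times a factor that \cond~\ref{cond:aipw_cs} controls, since $\lt^2\le(\lt+\mt)^2$ and $\uet\le1$ give $\lt^2(\lt+\mt)^2/\uet\le(\lt+\mt)^4/\uet^3$. This yields variance $o(\lm(\va)^2)$ directly, and the cross term is then $\lm(\va)\op$. We expect this fallback to succeed, which makes step 4 straightforward.

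Combining steps 1 through 4 and multiplying by the fixed matrix $C$ gives part (ii).
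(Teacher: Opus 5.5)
Your part (i) and Steps 1--3 follow the paper's route. In the proof of Lemma~\ref{lem:V_hat_aipw_diff}, the paper decomposes $(T-1)\hva$ exactly as you do. It controls the quadratic-variation term by a conditional-variance bound of order $\uet^{-3}$ plus Chebyshev, and it controls the $\bmd\bmd^\T$ term through $\va$. The problem is Step 4.

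The $(z',z)$ entry of your cross term $T^{-1}\sum_t(\myt-\bmy)\md_t^\T$ has variance
$T^{-2}\sum_t\{Y_t(z')-\bar Y(z')\}^2\,\E[\{e_t(z)^{-1}-1\}A_t(z)^2]$.
Your fallback replaces $\{Y_t(z')-\bar Y(z')\}^2$ by $\lt^2$, but this factor is not bounded by a multiple of $\lt^2$. Take a unit with $\lt=0$: it has $A_t(z)=-\hat m_t(z)$ and contributes $T^{-2}\bar Y(z')^2\,\E[\{e_t(z)^{-1}-1\}\hat m_t(z)^2]$ to the variance, whereas your bound assigns it zero. So the inequality you rely on is false, and the centering term needs its own argument.

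To repair it, use $\{Y_t(z')-\bar Y(z')\}^2\le 2\lt^2+2\bar Y(z')^2$. This leaves the extra piece $2T^{-2}\max_z\bar Y(z)^2\sum_t(\lt+\mt)^2/\uet$. The paper (Lemma~\ref{lem:var_DtYt} and the facts in the proof of Lemma~\ref{lem:V_hat_aipw_diff}) bounds it using $\psi=T^{-1}\sum_t(\lt+\mt)^4/\uet^3$ and two inequalities. First, $|\bar Y(z)|\le T^{-1}\sum_t(\lt+\mt)\le\psi^{1/4}$, by the power-mean inequality and $\uet\le 1$. Second, $T^{-1}\sum_t(\lt+\mt)^2/\uet\le\psi^{1/2}$, by Cauchy--Schwarz. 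Hence the piece is at most $2\psi/T$, which is $o\{\lm(\va)^2\}$ by Condition~\ref{cond:aipw_cs}.

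Your other suggestion, absorbing part of the cross term as $\op\cdot CS\ct$, would not give the stated $\lm(\va)\cdot\op$ remainder. Nothing in Conditions~\ref{cond:aipw_clt}--\ref{cond:aipw_cs} keeps $\fn{S}/\lm(\va)$ bounded; they only give $\fn{S}=o\{\sqrt{T}\,\lm(\va)\}$.

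Once the centering term is handled as above, your argument is complete and coincides with the paper's. Note that the leftover $(T-1)^{-1}\va$ from the $T/(T-1)$ factor is already covered by your Step~3 bound.
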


As previewed after \prop~\ref{prop:ipw_cov_est}, we can reduce the upward bias $CS\ct$ by adjusting $\hva$ using a lower bound for $S$.
We formalize the improved covariance estimator below. 

Let $\hbmt = (\hmt(1),\ldots,\hmt(K))^\T$ denote the vector of outcome estimators for unit $t$ across all $K$ levels.
Let $\tmyta =\hmyta -\hbmt$, and define
\cll{\tva  
=\dfrac{1}{T-1}\ds\sumt (\tmyta -\tmya) (\tmyta -\tmya)^\T}

as the sample covariance matrix of $\{\tmyta: t = \ot{T}\}$, where $\tmya = \meant \tmyta$.

\begin{proposition}\label{prop:aipw_cov_est_2}
\prear. 
If Conditions~\ref{cond:aipw_clt}--\ref{cond:aipw_cs} hold, then $C\tva\ct =  C\vaipw\ct + C\Omega\ct + \lm(\va)\cdot \op$, where $\Omega$ denotes the sample covariance matrix of $\{\myt - \hbmt: t = \ot{T}\}$.
\end{proposition}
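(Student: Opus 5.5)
The plan is to compare $\tva$ with the quantities already controlled in the proof of \prop~\ref{prop:aipw_cov_est}\eqref{item:aipw_var_est_asym}, by separating a martingale part from a predictable part. By \eqref{eq:hytiz} and \eqref{eq:hytaz}, the $z$-th entry of $\tmyta=\hmyta-\hbmt$ is $\indz\,\atz/\etz$. I would decompose
\[
\tmyta = \md_t + \mat, \qquad \md_t = \hmyta - \myt, \quad \mat = \myt - \hbmt .
\]
Here $\{\md_t\}$ is the martingale difference sequence with respect to $\{H_{t+1}\}$ used for \thm~\ref{thm:aipw}, and $\mat$ is $H_t$-measurable. Since the sample covariance is bilinear, $\tva$ splits into three pieces: the sample covariance of $\{\md_t\}$, the sample covariance of $\{\mat\}$, and the two cross terms. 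The second piece is exactly $\Omega$, because $\mat=\myt-\hbmt$. The proof therefore reduces to two claims, each to be shown after pre- and post-multiplying by $C$ and $\ct$: (a) the sample covariance of $\{\md_t\}$ equals $\va+\lm(\va)\cdot\op$; (b) the cross terms are $\lm(\va)\cdot\op$.

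For (a), I would write $(T-1)^{-1}\sumt(\md_t-\bmd)(\md_t-\bmd)^\T$, where $\bmd = \hmya-\bmy$ is the mean of the $\md_t$. The first step is to show that $T^{-1}\sumt \md_t\md_t^\T - \cva$ is a sum of martingale differences, since $\E(\md_t\md_t^\T\mid H_t)=\cov(\hmyta-\myt\mid H_t)$. Each entry of $\md_t$ is bounded by $(\lt+\mt)/\uet$, and $\E(\md_{t,z}^2\mid H_t)\le(\lt+\mt)^2/\uet$. Hence the second moment of this martingale is at most of order $T^{-2}\sumt(\lt+\mt)^4/\uet^3$, which is $o(\lmap^2)$ by \cond~\ref{cond:aipw_cs}. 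Combined with \cond~\ref{cond:aipw_clt}\eqref{item:clt_aipw_var_conv}, this gives $T^{-1}\sumt\md_t\md_t^\T=\va+\lm(\va)\cdot\op$. The remaining term $\bmd\bmd^\T$ has expectation $T^{-1}\va$ by \thm~\ref{thm:aipw}, so it is negligible after multiplying by $C$ and $\ct$. This is the same computation that underlies \prop~\ref{prop:aipw_cov_est}\eqref{item:aipw_var_est_asym}, so I would reuse it rather than redo it.

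For (b), the cross term equals $(T-1)^{-1}\bigl\{\sumt \md_t\mat^\T - T\,\bmd\,\bar{\bm A}^\T\bigr\}$, where $\bar{\bm A}=\meant\mat$, plus its transpose. The first part, $\sumt\md_t\mat^\T$, is a martingale because $\mat$ is predictable. Its second moment is bounded by $\sumt \E\{\E(\md_{t,z}^2\mid H_t)\,\at(z')^2\}\lesssim\sumt(\lt+\mt)^4/\uet$, and this is again $T^2\, o(\lmap^2)$ by \cond~\ref{cond:aipw_cs}, since $\uet\le 1$.

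I expect the main obstacle to be the product $\bmd\,\bar{\bm A}^\T$, because $\bar{\bm A}$ is random and is not predictable. My first attempt would use $\bar{\bm A}^\T = \meant\mat^\T$, so that $\bmd\,\bar{\bm A}^\T=T^{-1}\sum_s \bmd\,\bm A_s^\T$. I would then bound this by Cauchy--Schwarz: $\|C\bmd\|_2=O_{\pr}\bigl(\{T^{-1}\lmax(C\va\ct)\}^{1/2}\bigr)$, and $\|\bar{\bm A}\|_2\le \meant(\lt+\mt)$. The goal is to show the resulting product is $\lm(\va)\cdot\op$, using \cond~\ref{cond:aipw_clt}\eqref{item:clt_aipw_lindeberg}. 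If that bound is not sharp enough, the fallback is to center $\mat$ at a deterministic or predictable proxy, for example $\E(\mat)$. The fluctuation $\bar{\bm A}-\E\bar{\bm A}$ would then be absorbed into the already-controlled martingale term plus a remainder whose second moment is again bounded through \cond~\ref{cond:aipw_cs}.
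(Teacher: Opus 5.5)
Your split $\tmyta=\md_t+\mat$ is sound, and it is the paper's argument with slightly different bookkeeping. The paper writes $\tva=\hva-\hat\Sigma-\hat\Sigma^\T+\scov{\hbmt}$ and $S=\Omega+\Sigma+\Sigma^\T-\scov{\hbmt}$, where $\Sigma=\scov{\myt,\hbmt}$ and $\hat\Sigma=\scov{\hmyta,\hbmt}$. It then invokes \prop~\ref{prop:aipw_cov_est} for $\hva-\va-S$, which leaves only $\scov{\hmyta-\myt,\hbmt}=\lm(\va)\cdot\op$ to prove. Your cross term $\scov{\md_t,\mat}=\scov{\md_t,\myt}-\scov{\md_t,\hbmt}$ merges two pieces: the fourth and fifth terms of \eqref{eq:diff_Vhat_V_aipw}, which the paper handles in \lem~\ref{lem:V_hat_aipw_diff}, and this remaining piece. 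The tools are the same in both cases: a martingale sum with an $H_t$-measurable multiplier, controlled by Chebyshev's inequality and \cond~\ref{cond:aipw_cs}, plus a product of two sample means. Your part (a) and the martingale half of (b) are correct as stated.

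The step that does not close as planned is $\bmd\,\bar{\bm A}^\T$. The non-predictability of $\bar{\bm A}$ is not the real obstacle. Your deterministic bound on $\tn{\bar{\bm A}}$ is the right move, and the centering fallback is unnecessary. The problem is that you intend to finish with \cond~\ref{cond:aipw_clt}\eqref{item:clt_aipw_lindeberg}.

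Divided by $\lm(\va)$, your bound $T^{-1/2}\lmax(\va)^{1/2}\meant(\lt+\mt)$ becomes the product of $\{\meant(\lt+\mt)\}\{T\lm(\va)\}^{-1/2}$ and $\{\lmax(\va)/\lm(\va)\}^{1/2}$. The Lindeberg condition makes the first factor vanish. It does not control the condition number of $\va$, which can diverge when some arm has vanishing propensity (Remark~\ref{rmk:rate}).

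Instead, bound $\lmax(\va)$ by the trace and use \cond~\ref{cond:aipw_cs}:
\[
\E\tn{\bmd}^2=\frac{\mathrm{tr}(\va)}{T}\le\frac{K}{T}\,\meant\frac{(\lt+\mt)^2}{\uet},\qquad
\tn{\bar{\bm A}}\le\sqrt{K}\,\meant(\lt+\mt)\le\sqrt{K}\,\Bigl\{\meant\frac{(\lt+\mt)^2}{\uet}\Bigr\}^{1/2}.
\]
By Markov's inequality and Cauchy--Schwarz, this gives
\[
\tn{\bmd}\,\tn{\bar{\bm A}}\le\frac{K}{\sqrt T}\,\meant\frac{(\lt+\mt)^2}{\uet}\cdot\oop\le K\Bigl\{\frac{1}{T^2}\sumt\frac{(\lt+\mt)^4}{\uet^3}\Bigr\}^{1/2}\cdot\oop=\lm(\va)\cdot\op .
\]
This is exactly how the paper handles its term $(\hmya-\bmy)\bbm^\T$, using $|\meant\hmt(z')|\le\meant\mt$.
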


\props~\ref{prop:aipw_cov_est}--\ref{prop:aipw_cov_est_2} imply that 
both $\hva$ and $\tva$ are asymptotically valid for estimating $\va$, 
with upward biases $S$ and $\Omega$, respectively. 
Since $S$ and $\Omega$ are the sample covariance matrices of $\myt$ and $\myt - \hbmt$, respectively, 
we heuristically expect $\Omega \leq S$ when $\hmtz$ is a reasonable estimator of $\ytz$, making $\tva$ less conservative than $\hva$ for inference. 
We illustrate the improved precision through simulation in \sec~\ref{sec:simulation}.

Let
\bigskip\\
\cl{$
\begin{array}{l}
\sa =\left\{\tau: (\hat{\tau}_{\aipw, C} -\tau )^\T (T^{-1}C\hva \ct)^{-1} 
(\hat{\tau}_{\aipw, C} -\tau )\le \chisqa\right\},\\
\tsa =\left\{\tau: (\hat{\tau}_{\aipw, C} -\tau )^\T (T^{-1}C\tva \ct)^{-1} 
(\hat{\tau}_{\aipw, C} -\tau )\le\chi^2_{\rc, 1-\alpha}\right\}
\end{array}
$}

\smallskip

denote the $100(1-\alpha)\%$ Wald-type confidence sets for $\tc$ based on $(\hta,\hva)$ and $(\hta, \tva)$, respectively. 
\thm~\ref{thm:aipw_cs} below follows from \thm~\ref{thm:aipw} and  \props~\ref{prop:aipw_cov_est}--\ref{prop:aipw_cov_est_2}, and establishes the validity of large-sample Wald-type inference based on $\sa$ and $\tsa$. This concludes our theory of the AIPW estimator for estimating treatment effects under \ar.

\begin{theorem}\label{thm:aipw_cs}
\prear. 
If Conditions~\ref{cond:aipw_clt}--\ref{cond:aipw_cs} hold,  
then for \allcf,  
\cls{
\ds\liminf_{T\to\infty}\pr (\tau_C\in\sa)
\ge 1 -\alpha, \quad \liminf_{T\to\infty}\pr (\tau_C\in\tsa)
\ge 1 -\alpha \ \ \text{for all $\alpha\in (0, 1)$}.}
\end{theorem}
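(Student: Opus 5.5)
The plan is to combine the central limit theorem in \thm~\ref{thm:aipw}\eqref{item:aipw_clt} with the conservative covariance limits in \props~\ref{prop:aipw_cov_est}--\ref{prop:aipw_cov_est_2}, after normalizing everything by $\lm(\va)$. We treat $\sa$ and $\tsa$ in parallel. Write $\hat W$ for either $C\hva\ct$ or $C\tva\ct$, and write $W = C\va\ct$. Let $B$ denote the corresponding positive semidefinite bias, $CS\ct$ or $C\Omega\ct$. Since $C$ has full row rank, $\lm(W) \ge \lm(CC^\T)\,\lm(\va)$, with $\lm(CC^\T)>0$ fixed. The propositions then give
\[
\hat W = W + B + \lm(\va)\cdot\op = (W+B)\{I + \op\},
\]
where the remainder is small relative to $W+B$ in operator norm. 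Consequently $\hat W$ is invertible with probability tending to one.

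Next, set $G_T = W^{-1/2}\sqrt T(\hta - \tc)$. By \thm~\ref{thm:aipw}\eqref{item:aipw_clt}, $G_T\converged\snrc$. The Wald statistic is
\[
\sqrt T(\hta-\tc)^\T \hat W^{-1}\sqrt T(\hta-\tc) = G_T^\T\, W^{1/2}\hat W^{-1}W^{1/2}\, G_T .
\]
Define $M_T = W^{1/2}(W+B)^{-1}W^{1/2}$. Because $W+B \ge W$, we have $M_T \le I$, and in fact all eigenvalues of $M_T$ lie in $(0,1]$. The consistency step gives $W^{1/2}\hat W^{-1}W^{1/2} = M_T + \op$. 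This uses $\fn{W^{1/2}(W+B)^{-1/2}}\le \sqrt{\rc}$ together with the relative error bound from the previous paragraph. The statistic therefore equals $G_T^\T M_T G_T + \op$, and $G_T^\T M_T G_T \le \|G_T\|_2^2$.

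To conclude, note that $\|G_T\|_2^2 \converged \chi^2_{\rc}$ by the continuous mapping theorem. Fix $\epsilon>0$. Then
\[
\pr(\tc\notin\sa) \le \pr\bigl(\|G_T\|_2^2 > \chisqa - \epsilon\bigr) + \pr(|\op|>\epsilon).
\]
Taking $\limsup$ as $\ttinf$ and then letting $\epsilon\to0$, continuity of the chi-square distribution gives the bound $\alpha$. The same argument applies verbatim to $\tsa$.

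The main obstacle is the relative-error step. The propositions control the error only at scale $\lm(\va)$, while $W$ and $B$ may have unbounded or very different eigenvalues. I would handle this by pre- and post-multiplying by $(W+B)^{-1/2}$ and using
\[
\|(W+B)^{-1/2}\|_2^2 \le \{\lm(CC^\T)\lm(\va)\}^{-1}.
\]
This gives $(W+B)^{-1/2}\hat W(W+B)^{-1/2} = I+\op$, and matrix inversion is continuous at $I$.
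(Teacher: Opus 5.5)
Your proof is correct and follows essentially the paper's route. Both arguments combine three ingredients: the CLT of Theorem~\ref{thm:aipw}\eqref{item:aipw_clt}; covariance consistency at scale $\lambda_{\min}(V_{\textup{aipw}})$, which the paper takes directly from Lemma~\ref{lem:V_hat_aipw_diff}, the lemma underlying Propositions~\ref{prop:aipw_cov_est}--\ref{prop:aipw_cov_est_2}; and the bound $\lambda_{\min}(CV_{\textup{aipw}}C^\top)\ge\lambda_{\min}(V_{\textup{aipw}})\lambda_{\min}(CC^\top)$, which turns that consistency into a relative error. The only difference is bookkeeping: the paper normalizes by $W^{-1/2}$, shows the Wald statistic built from $\hat W-B$ is asymptotically $\chi^2_{\rank{C}}$, and then uses $\hat W^{-1}\le(\hat W-B)^{-1}$, whereas you normalize by $(W+B)^{-1/2}$ and use $M_T\le I$, which is the same Loewner monotonicity of the matrix inverse.
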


\subsection{Adaptive analysis of nonadaptive designs}\label{sec:adaptive analysis}
Recall that \defo\ includes both adaptive and nonadaptive designs. 
Therefore, the definition of the AIPW estimator $\hta$ in \eqref{eq:hya} with an adaptively constructed outcome estimator $\hmt(z)$, along with all its \db\ guarantees in \secs~\ref{sec:aipw_clt}--\ref{sec:aipw_var_est}, also applies to nonadaptive designs.

As previewed in the \sec~\ref{sec:intro}, we propose using $\hta$ for {\it \aca} even in nonadaptive designs, especially when outcome estimation via black-box machine learning methods is desired, for the following benefits:
\begine[(i)]
\item Compared with covariate adjustment strategies that construct $\hmtz$ using all units, \aca\ induces a martingale structure that enables valid inference under weaker regularity conditions on $\hmtz$, thereby allowing more flexible use of black-box machine learning estimation methods to improve efficiency. 
\item Compared with covariate adjustment strategies that construct $\hmtz$ using \cfbc, \aca\ improves efficiency.
\ende  
We illustrate these improvements through simulation in \sec~\ref{sec:simulation}. 

\section{Extension to block adaptive designs}\label{sec:block_ad}
\subsection{Treatment assignment mechanism}
The discussion so far has focused on designs with treatment assigned adaptively at the unit level.
We now extend our theory to {\it block adaptive randomization}, where treatment is assigned sequentially, group by group, and the treatment probabilities for a new group may depend on the covariates, assignments, and outcomes of earlier groups.

Consider an experiment with $K \geq 2$ treatment levels, indexed by $z\in\mz =\{\ot{K}\}$, and a study population of $T$ groups of units, indexed by $t=\ot{T}$.
A block adaptive design sequentially randomizes the units in group $t$ to treatments at time $t$, with the assignment probabilities determined by the history up to time $t$.
Let $\nt$ denote the size of group $t$, and index the units in group $t$ by $\{ti: i =\ot{\nt}\}$. 
For each unit $ti$, let $Z_{ti} \in \mz$ denote its treatment assignment, $Y_{ti} \in \mathbb{R}$ its outcome, and $X_{ti} \in \mathbb{R}^J$ its covariate vector.
Renew 
\cll{\zbt = (Z_{t1},\ldots, Z_{t,\nt}), \quad\ybt =  (Y_{t1},\ldots, Y_{t,\nt}), \quad \xbt = (X_{t1},\ldots, X_{t, \nt})}
as the collections of $Z_{ti}$, $Y_{ti}$, and $X_{ti}$ for units in group $t$, and\cll{\htfull}
as the history up to time $t$ with the renewed $(Z_s, Y_s, X_s)$ and $X_t$. 
This and all subsequent reuse of notation are justified by the fact that $(Z_t, Y_t, X_t, H_t)$ revert to their original definitions in \sec~\ref{sec:setup} when $\nt = 1$ for all $t$.
Definition~\ref{def:block_ad} below generalizes Definition~\ref{def:ad} and characterizes block adaptive randomization via the conditional distribution of $\zbt $ given $\hht$.

\begin{definition}[Block adaptive randomization]\label{def:block_ad}
Let $\ebtz =\pr(Z_t =\bz\mid H_t)$ denote the probability that $\zbt = (Z_{t1},\ldots, Z_{t,\nt})$ equals $\bz = (z_1,\ldots, z_{\nt})\in\mz^{\nt} $ given $H_t$. 
Let $\etiz =\pr(Z_{ti} = z\mid\hht)$  
denote the marginal probability of $Z_{ti} = z$ given $H_t$ for $\ziz$, obtained by summing $\ebtz$ over all possible values of $\zt \setminus \{Z_{ti}\}$.
For $t =\ot{T}$ and all $\bz\in\mz^{\nt}$, $\ebtz$ is a prespecified function of $\hht$ that satisfies: (i) $\sum_{\bz\in\mathcal{Z}^{\nt}}\ebtz = 1$; (ii) $\etiz\in (0,1)$ for all  $i =\ot{\nt}$ and $z\in\mz$.
\end{definition}

Definition~\ref{def:block_ad} requires that each unit-treatment pair $(ti, z)$ has positive assignment probability $\etiz$ at each time $t = \ot{T}$, but is otherwise general, allowing $\etiz$ to depend arbitrarily on the history $\hht$.
Examples \ref{ex:paired}--\ref{ex:rerand} below review pairwise sequential randomization and \sr\ as two special cases previously studied in the literature. 
When $\ebtz =\pr(\zbt =\bz\mid\hht) =\pr(\zbt =\bz\mid\xbt )$, so that the assignment $\zt$ of group $t$ is independent of the previous groups, Definition~\ref{def:block_ad} reduces to a nonadaptive block design that conducts an independent randomization within each group.
When $\nt=1$ for all $t$, Definition~\ref{def:block_ad} reverts to the unit-level adaptive randomization in Definition~\ref{def:ad}.  

\begin{example}[Pairwise sequential randomization \citep{psr}]\label{ex:paired}
Consider a treatment-control experiment with $T$ pairs of units, indexed by $t =\ot{T}$.
Pairwise sequential randomization randomly assigns one unit in pair $t$ to treatment and the other to control at each time $t$, with assignment probabilities determined by the covariates of these two units, $X_t = (X_{t1}, X_{t2})$, and the current covariate balance among pairs $1$ through $t-1$. 
The corresponding $\ebtz$ is a function of $\{(Z_s, X_s): s =\ot{t-1}\}\cup\{X_t\}$.
\end{example}

\begin{example}[Sequential rerandomization \citep{zhou2018sequential}]\label{ex:rerand}
Consider an experiment with $T$ groups of units, indexed by $t=\ot{T}$. 
Sequential rerandomization randomly assigns the units in group $t$ to treatments at each time $t$ as follows: for a tentative assignment $\bz\in\mz^{\nt}$ for the $\nt$ units in group $t$,
 accept $\zbt =\bz$ if $\{(Z_{s}, X_{s}): s =\ot{t-1}\}\cup (\bz, \xbt)$ satisfies a prespecified balance criterion and rerandomize by permuting the elements of $\bz$ otherwise. 
The corresponding $\ebtz $ is a function of $\{(Z_{s}, X_{s}): s =\ot{t-1}\}\cup\{\xbt\}$. 
\end{example}

\subsection{Inference using the IPW and AIPW estimators}
Let $\ytiz$ denote the potential outcome of unit $ti$ under treatment level $z$.
Renew 
\cll{
\byz = \meanti \ytiz,\where N =\sumt \nt,}

as the population average potential outcome under treatment level $z$, with $N$ denoting the total number of units across all groups.
The goal is to infer a set of linear combinations of the renewed $\{\byz: \ziz\}$, denoted by $\tau_C = C\bmy$, where $\bmy = (\by(1),\ldots,\by(K))^\T$ is the vector of the renewed $\byz$ and $C$ is a prespecified coefficient matrix with $K$ columns.

Recall from Definition~\ref{def:block_ad} that $\etiz =\pr(Z_{ti} = z\mid\hht)$ denotes the treatment probability for unit $ti$ given the history $\hht$. 
Parallel to the IPW and AIPW estimators of $\ytz$ under \ur\ in \eqref{eq:hytiz} and \eqref{eq:hytaz}, define 
\cll{
\hy_{ti,\ipw}(z) = \dfrac{\I(\zti = z)}{\etiz} Y_{ti}, \quad 
\hytiaz =\hat Y_{ti,\ipw}(z) +\left\{1-\dfrac{\I(Z_{ti} = z)}{\etiz }\right\}\hmtiz 
}

as the \ipwc\ and \aipwc\ estimators of $\ytiz$, respectively, where $\hmtiz$ is an {\it adaptive} estimator of $\ytiz$ constructed using only information in $\hht$. 
Renew
\beginy\label{eq:estimators_block}
\begin{array}{llll}
\hyiz = \displaystyle\meanti\hy_{ti,\ipw}(z),&\hmyi = (\hyi(1),\ldots,\hyi(K))^\T,&\hti = C\hmyi,\medskip\\
\hyaz = \displaystyle\meanti\hytiaz,&\hmya = (\hya(1),\ldots,\hya(K))^\T,&\hta = C\hmya 
\end{array}
\endy 
as the \ipwc\ and \aipwc\ estimators of the renewed $\by(z)$, $\bmy$, and $\tau_C$, respectively. 
We establish below the {\db} guarantees of the renewed $\hta$ in \eqref{eq:estimators_block} for estimating the renewed $\tc$. 
The results include $\hti$ as a special case with $\hat m_{ti}(z) = 0$ for all $ti$ and $z$. 

Define $\bytz = \meani \ytiz$ as the average potential outcome in group $t$ under treatment level $z$, and renew 
\beginy\label{eq:hytaz_block}
\hytaz = \meani\hytiaz
\endy as the \aae\ of $\bytz$. 
Let $\pt = \nt/N$ denote the proportion of units within group $t$, and define $\rhot = T \cdot \pt$. 
By direct algebra, 
\beginy\label{eq:hyaz_block}
\byz = \sumt \pt \bytz =  \meant \rhot \bytz,\quad 
\hyaz =\sumt \pt \hytaz = \meant \rhot \hytaz,
\endy
%
paralleling the expressions of $\byz$ and $\hyaz$ in \eqref{eq:hya} under \ur, with the unit-level $\{\ytz, \hytaz\}$ in \eqref{eq:hytaz} replaced by the group-level $\{\rhot\byt(z), \rhot\hytaz\}$ in  \eqref{eq:hyaz_block}. 
Therefore, all results in \sec~\ref{sec:aipw} extend to inference of the renewed $\tc$ using the renewed $\hta$ under \br\ after replacing $\{\ytz, \hytaz\}$ with $\{\rhot\byt(z), \rhot\hytaz\}$. 
We state the main results below.

Renew $\hmyta = (\hyta(1),\ldots,\hyta(K))^\T$ as the vector of the renewed $\hyta(z)$ in \eqref{eq:hytaz_block}.
Renew $\hbmt = (\hmt(1),\ldots,\hmt(K))^\T$, where $\hmtz = \meani \hmtiz$, and $\tmyta =\hmyta -\hbmt$ with the renewed $\hmyta$ and $\hbmt$. 
Renew 
\beginy\label{eq:hva_block}
\begin{array}{ll}
\hvba 
 =\dfrac{1}{T-1}\ds\sumt (\rhot\hmyta -\hmya ) (\rhot\hmyta -\hmya )^\T,\medskip\\
\tva  
=\dfrac{1}{T-1}\ds\sumt (\rhot\tmyta -\tmya) (\rhot\tmyta -\tmya)^\T
\end{array}
\endy
as the sample covariance matrices of $(\rhot\hmyta)_{t=1}^T$ and $(\rhot\tmyta)_{t=1}^T$, respectively, and
renew $\sa$ and $\tsa$ as the $100(1-\alpha)\%$ Wald-type confidence sets for $\tc$ based on the renewed $(\hta,\hva)$ and $(\hta,\tva)$ in \eqref{eq:estimators_block} and \eqref{eq:hva_block}, respectively.
Renew 
\cls{\vba = \ds\meant \E\left\{\cov(\rhot\hmyta\mid\hht)\right\}, 
\quad
\cva = \ds\meant  \cov(\rhot\hmyta\mid\hht),}

and renew
\cls{\lt =\ds\max_{z\in\mathcal{Z},\, \iit}|\ytiz|$,\quad $\mt = \ds\max_{H_t, \, \ziz, \, \iit} \hmtiz$, \quad $\uet = \ds\min_{H_t, \, \ziz, \, \iit} \etiz}

as the respective extrema of $|\ytiz|$, $\hmtiz$, and $\etiz$ over all possible values of $(H_t, z, i)$.
\thm~\ref{thm:block_clt} below extends \thms~\ref{thm:aipw}--\ref{thm:aipw_cs} to \br, and justifies large-sample Wald-type inference of  the renewed $\tc$ using the renewed $(\hta, \hva, \sa)$. 

\begin{condition}\label{cond:block_clt}
As $T\to\infty$, 
\begine[(i)]
\item $\lmai  \cdot  \fn{\cva -\vba} = o(1)$; 
\item $\lmai \cdot T^{-1}\max_{\ott }\{\rhot(\lt+\mt) /\uet\}^2 =\oo$. 
\ende 
\end{condition}

\begin{condition}\label{cond:block_cs}
As $T\to\infty$, 
 $ \lmap^{-2}\cdot  T^{-2}\sumt \{\rhot(\lt +\mt)\} ^4/\uet ^3 = o(1)$.
\end{condition}

\begin{theorem}\label{thm:block_clt}
\prebar. For \allcf, 
\begine[(i)]
\item\label{item:block_mean_var}
 $\E(\hta) =\tc$ and $\cov(\hta) = T^{-1} C\va\ct $. 
\item\label{item:block_clt} If \cond~\ref{cond:block_clt} holds, then $( C\va \ct)^{-1/2}\cdot\sqrt{T} 
(\hat{\tau}_{\aipw, C} -\tau_C)
\converged\snrc $.
\item\label{item:block_cs}
If Conditions~\ref{cond:block_clt}--\ref{cond:block_cs} hold,  
then\smallskip\\
\cl{$\liminf_{T\to\infty}\pr (\tau_C\in\sa)
\ge 1 -\alpha$, \quad  $\liminf_{T\to\infty}\pr (\tau_C\in\tsa)
\ge 1 -\alpha$}
for all $\alpha\in (0, 1)$. 
\ende
\end{theorem}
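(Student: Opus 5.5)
The plan is to reduce Theorem~\ref{thm:block_clt} to the unit-level results in Theorem~\ref{thm:aipw}, Propositions~\ref{prop:aipw_cov_est}--\ref{prop:aipw_cov_est_2} and Theorem~\ref{thm:aipw_cs}. The key observation is \eqref{eq:hyaz_block}. At the group level, $\bmy=\meant\rhot\bar{\bm Y}_t$ and $\hmya=\meant\rhot\hmyta$, where $\bar{\bm Y}_t=(\byt(1),\ldots,\byt(K))^\T$ and $\rhot$ is a fixed constant. The main work is to verify that the martingale structure behind the unit-level proofs survives this replacement.

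\textbf{Martingale structure and part (i).} I will first show that the increments $\md_t=\rhot(\hmyta-\bar{\bm Y}_t)$ form \mart. Since $\hmtiz$ and $\etiz$ are functions of $\hht$, and $\E\{\I(Z_{ti}=z)\mid\hht\}=\etiz$, we get
\[
\E\{\hytiaz\mid\hht\}=\ytiz+\{1-1\}\hmtiz=\ytiz .
\]
Averaging over $i$ gives $\E(\hmyta\mid\hht)=\bar{\bm Y}_t$, so $\E(\md_t\mid\hht)=0$. This yields $\E(\hta)=\tc$. The martingale-difference property makes cross terms vanish, so
\[
\cov(\hmya)=T^{-2}\sumt\E\{\cov(\rhot\hmyta\mid\hht)\}=T^{-1}\vba ,
\]
which gives the covariance formula. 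Within-group dependence among the $Z_{ti}$ is irrelevant here, because it enters only through $\cov(\cdot\mid\hht)$, which is exactly the definition of the renewed $\vba$.

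\textbf{Part (ii).} I will apply the multivariate martingale CLT \citep{brown1971martingale}, via Cram\'er--Wold, to $T^{-1/2}\sumt C\md_t$ normalized by $(C\vba\ct)^{-1/2}$, following the same route as Theorem~\ref{thm:aipw}(ii).
\begin{itemize}
\item \emph{Conditional variance.} The conditional variance sum is $C\cva\ct$. Condition~\ref{cond:block_clt}(i), together with $\lm(C\vba\ct)\ge \lm(\vba)\lm(CC^\T)$, gives $(C\vba\ct)^{-1/2}C\cva\ct(C\vba\ct)^{-1/2}\to I$.
\item \emph{Lindeberg.} I need a deterministic bound on the increments. Since $|\hytiaz-\ytiz|\le (\lt+\mt)/\uet$ up to a constant factor, we have $\|\md_t\|\lesssim\rhot(\lt+\mt)/\uet$. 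Condition~\ref{cond:block_clt}(ii) then makes $\max_t\|T^{-1/2}(C\vba\ct)^{-1/2}C\md_t\|\to0$ uniformly, which implies the Lindeberg condition.
\end{itemize}

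\textbf{Part (iii).} I will mirror Propositions~\ref{prop:aipw_cov_est}--\ref{prop:aipw_cov_est_2}. Expand
\[
\hvba=\frac{1}{T-1}\sumt(\md_t+\rhot\bar{\bm Y}_t-\bmy-\bar{\md})(\cdots)^\T .
\]
This splits into three pieces:
\begin{itemize}
\item $T^{-1}\sumt\md_t\md_t^\T$, which must be shown to equal $\cva+\lm(\vba)\op$;
\item the fixed sample covariance $S$ of $\{\rhot\bar{\bm Y}_t\}$, which is positive semidefinite;
\item cross terms $T^{-1}\sumt\md_t(\rhot\bar{\bm Y}_t-\bmy)^\T$ and $\bar{\md}\bar{\md}^\T$, which are negligible.
\end{itemize}

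\textbf{Main obstacle.} The hard step is the first of these pieces. $\md_t\md_t^\T-\cov(\md_t\mid\hht)$ is itself a martingale difference, so its second moment is $T^{-2}\sumt\E\|\md_t\|^4$. Bounding $\E\|\md_t\|^4$ by $\{\rhot(\lt+\mt)\}^4/\uet^3$ requires the finer fact that $\E\{\I(Z_{ti}=z)/\etiz^4\mid\hht\}=\etiz^{-3}$. For group averages I will use Jensen, $(\meani a_i)^4\le\meani a_i^4$, to reduce to unit-level terms. Condition~\ref{cond:block_cs} then gives the $\op$ rate after scaling by $\lm(\vba)^{-1}$. The cross terms are handled by Chebyshev with the same bounds, and Condition~\ref{cond:block_clt}(i) replaces $\cva$ by $\vba$.

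\textbf{Second estimator and coverage.} The same argument with $\hbmt$ subtracted gives the result for $\tva$, with $\Omega$ in place of $S$; this works because $\hbmt$ is $\hht$-measurable. Finally, the coverage statement follows as in Theorem~\ref{thm:aipw_cs}. The quadratic form equals the asymptotically $\chi^2_{\rc}$ statistic multiplied by a matrix ratio that is asymptotically $\le I$, because $CS\ct,C\Omega\ct\succeq0$. Hence $\liminf\ge1-\alpha$.
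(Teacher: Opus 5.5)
Your proposal is correct and follows the paper's route: reduce to the unit-level AIPW theory by treating each group as a unit, with $\rhot\bar Y_t(z)$ and $\rhot\hytaz$ in place of $\ytz$ and $\hytaz$, and then run the same martingale CLT, Chebyshev, and conservative-Wald arguments. Your group-level re-derivation of the moment bounds (pathwise Jensen plus $\E\{\I(Z_{ti}=z)/e_{ti}(z)^4\mid H_t\}=e_{ti}(z)^{-3}$) fills in what the paper's ``follows directly'' leaves implicit, because the unit-level identity $\hytaz-\ytz=w_t(z)A_t(z)$ behind Lemmas~\ref{lem:D}, \ref{lem:var_wtzwtzp}, \ref{lem:Dtz_Dtzp}, and \ref{lem:var_DtYt} does not carry over literally to groups.
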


Theorem~\ref{thm:block_clt}\eqref{item:block_mean_var}--\eqref{item:block_clt} extend \thm~\ref{thm:aipw} and imply that $\hta$ is an unbiased, consistent, and asymptotically normal estimator of $\tc$, with $\cov(\hta) = T^{-1}C\va\ct$.
Theorem~\ref{thm:block_clt}\eqref{item:block_cs} extends \thm~\ref{thm:aipw_cs} and establishes the coverage guarantee of $\sa$ and $\tsa$.  

\begin{remark}[Adaptive covariate adjustment for nonadaptive block designs]
Parallel to our recommendation in \sec~\ref{sec:adaptive analysis}, the \aae\ in \eqref{eq:estimators_block} also applies to data from nonadaptive block randomization and offers several advantages over standard nonadaptive covariate adjustment methods.
\end{remark} 

\subsection{An alternative covariance estimator}\label{sec:block_covb}
Recall from \prop~\ref{prop:aipw_cov_est} that under the \ur, 
$C\hva\ct$ has an upward bias of $CS\ct$ in estimating $C\va\ct$, which vanishes if and only if $C\myt$ is constant across $\ott$.
Under \br,
let $\myti = (Y_{ti}(1), \ldots, Y_{ti}(K))^\T$ denote the potential outcomes vector for unit $ti$, and let $\bmyt = \meani \myti = (\byt(1), \ldots, \byt(K))^\T$  denote the group average.
The discussion following \eqref{eq:hyaz_block} implies that $\rhot\bmyt$ is the analog of $\myt$ under \br, so the renewed $C\hva\ct$ is unbiased if and only if $C(\rhot\bmyt)$ is constant across $\ott$. 

Note that $C(\rhot\bmyt) = \bar n^{-1} \sumi C\my_{ti} =  \bar n^{-1} \sumi \tau_{C,ti}$,
where $\tau_{C,ti} = C \my_{ti}$ denotes the individual analog of $\tc$. Therefore, $C\hva \ct$ is unbiased if and only if the group {\it total} of $\tau_{C,ti}$, $\sumi \tau_{C,ti}$, is constant across $\ott$.
When it is more plausible that the group {\it average} of $\tau_{C,ti}$, denoted by $\bar \tau_{C,t}= \meani \tau_{C,ti}$, is constant across $t$, we propose using
\cls{
\hvab = \ds\sumt b_t (\hmyta - \hmya)(\hmyta - \hmya)^\T,}
where 
\cls{
  b_t = T \cdot \dfrac{\pt^2/(1-2\pt)}{1 +  \sums \ps^2/(1-2\ps)}\quad\text{with $\pt = \nt/N$},}

as an alternative estimator of $\va$, motivated by variance estimation for stratified experiments in \cite{pashley2021insights} and \cite{ding2024first}. 
Proposition \ref{prop:hvab} below establishes the unbiasedness of $C\hvab \ct$  for $C\va\ct$ when $\bar \tau_{C,t}$ is constant across $t$. 
The same intuition extends to $\tvab =  \sumt b_t (\tmyta - \tmya)(\tmyta - \tmya)^\T$, a variant of $\tva$. We provide the details in the \sm.

\begin{proposition}\label{prop:hvab}
\prebar. If the proportion of units within any group is below $1/2$, i.e., $\pt = \nt / N < 1/2$ for all $\ott$, then the following holds:
\begine[(i)]
\item\label{item:hva_bt} $b_t >0$ for $\ott$
\item \label{item:hva_i} $\hvab= \hva$ when $n_t$ is constant across $\ott$. 
\item\label{item:hva_ii}  $\E(\hvab) = T\cov(\hmya) + \sumt b_t(\bmyt - \bmy)(\bmyt - \bmy)^\T$.
\item\label{item:hva_iii}  For \allc, 
$\E(C\hvab\ct) - T\cov(\hta)\geq 0$, 
where the equality holds if and only if $\bar \tau_{C,t}=\meani \tau_{C,ti}$ is constant across $\ott$.
\ende 
\end{proposition}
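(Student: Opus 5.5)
The plan is to obtain (iii) by an exact second-moment calculation driven by the martingale structure, and then to read off (i), (ii) and (iv) from it.

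Parts (i) and (ii) are direct. Since $\pt<1/2$ for every $t$, each summand $\ps^2/(1-2\ps)$ is positive, so the numerator and denominator of $b_t$ are both positive and $b_t>0$. If $n_t$ is constant, then $\pt=1/T$ and $\rhot=1$. In that case
\[
b_t=\frac{T^{-1}/(1-2/T)}{1+T^{-1}/(1-2/T)}=\frac{1}{T-1},
\]
and $\hvab$ coincides with the renewed $\hva$ in \eqref{eq:hva_block}.

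For (iii), I would write $\hmyta=\bmyt+\md_t$, where $\md_t=\hmyta-\bmyt$. As in the unit-level case, $\E(\md_t\mid\hht)=0$, because $\hmtiz$ is $\hht$-measurable and $\E\{\I(Z_{ti}=z)\mid\hht\}=\etiz$. Hence $\{\md_t\}$ is a martingale difference sequence. This gives $\E(\md_t)=0$ and $\E(\md_t\md_s^\T)=0$ for $s\neq t$. Put $W_t=\E(\md_t\md_t^\T)$.

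By \eqref{eq:hyaz_block}, $\hmya=\sumt\pt\hmyta$ and $\bmy=\sumt\pt\bmyt$. Therefore
\[
\hmyta-\hmya=(\bmyt-\bmy)+\md_t-\sum_s\ps\md_s .
\]
Two consequences follow from the orthogonality above:
\[
T\cov(\hmya)=T\sumt\pt^2W_t,
\qquad
\E\Big\{\big(\md_t-\textstyle\sum_s\ps\md_s\big)\big(\md_t-\textstyle\sum_s\ps\md_s\big)^\T\Big\}=(1-2\pt)W_t+\sum_s\ps^2W_s .
\]
The cross terms between the fixed vector $\bmyt-\bmy$ and the $\md$'s have zero expectation.

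It then remains to check the identity
\[
\sumt b_t\Big\{(1-2\pt)W_t+\sum_s\ps^2W_s\Big\}=T\sumt\pt^2W_t .
\]
Write $c=\{1+\sum_s\ps^2/(1-2\ps)\}^{-1}$. The first part of the left side is $\sumt b_t(1-2\pt)W_t=Tc\sumt\pt^2W_t$. For the second part, $\sumt b_t=Tc\sum_s\ps^2/(1-2\ps)=T(1-c)$. Adding the two parts gives $Tc\sum\pt^2W_t+T(1-c)\sum\ps^2W_s=T\sum\pt^2W_t$. This is exactly the calculation that the specific choice of $b_t$ is designed for, and it is the main (though short) step; the care needed is in the bookkeeping of the expansion. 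The fixed part contributes $\sumt b_t(\bmyt-\bmy)(\bmyt-\bmy)^\T$, which completes (iii).

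For (iv), multiply (iii) on the left by $C$ and on the right by $\ct$, and use $\cov(\hta)=C\cov(\hmya)\ct$. This gives
\[
\E(C\hvab\ct)-T\cov(\hta)=\sumt b_t(\bar\tau_{C,t}-\tc)(\bar\tau_{C,t}-\tc)^\T ,
\]
because $C\bmyt=\bar\tau_{C,t}$ and $C\bmy=\sumt\pt\bar\tau_{C,t}=\tc$. Each term is positive semidefinite and $b_t>0$ by (i), so the difference is $\geq0$. It equals zero if and only if $\bar\tau_{C,t}=\tc$ for all $t$. That forces $\bar\tau_{C,t}$ to be constant. Conversely, if $\bar\tau_{C,t}$ is constant, its $\pt$-weighted average $\tc$ equals that constant, so the difference vanishes.
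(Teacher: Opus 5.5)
Your proof is correct and is essentially the paper's argument. The paper expands $\hvab$ as $B_1+B_2-B_3-B_3^\T$ and takes expectations term by term, rather than first centering at $\bmyt-\bmy$. Both versions rest on the same orthogonality of the $\hmyta$ across groups and on the same identity $\sumt b_t(1-2\pt)\cov(\hmyta)+(\sumt b_t)\cov(\hmya)=T\cov(\hmya)$, which is what the choice of $b_t$ is designed for. Your explicit treatment of (iv), including the converse via $\tc=\sumt\pt\bar\tau_{C,t}$, just fills in what the paper states as an immediate consequence of (i) and (iii).
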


\section{Simulation}\label{sec:simulation}
\subsection{Adaptive covariate adjustment}\label{sec:simu_adaptive covariate}
We now illustrate the advantages of {\aca} introduced in \sec~\ref{sec:adaptive analysis} through simulation. 
Assume $K = 2$ treatment levels, labeled $z = 1, 2$, and $T =$ 2,000 units, indexed by $t = \ot{T}$. 
\def\iidsample{We generate potential outcomes and covariates $\{\yto, \yt(2), X_t\}$ as i.i.d. samples from the following model:}
\iidsample
\cll{
X \sim \mathcal N(0, 1), \quad Y(1) = 1+ 2X + \epsilon_{1}, \quad Y(2) = 1 + 4X + \epsilon_{2},}

where $\epsilon_1, \epsilon_2$ are independent $\mn(0,1)$, and assign treatments via Bernoulli randomization with $\pr(\zt=1) = \pr(\zt =2)= 0.5$.
The observed outcome is $\yt = \I(Z_t=1) \yto + \I(Z_t = 2) \yt(2)$ for unit $t$. 
The goal is to estimate the average treatment effect $\tau =\bar Y(2) -\bar Y(1)$. 

We estimate $\bar Y(z)$ using the following five strategies, indexed by $* = \mtds$, and estimate $\tau$ as $\hts = \hys(2) - \hys(1)$, where $\hys(z)$ denotes the estimator of $\byz$ by Strategy $*$.
Strategies ``all'' and ``cf'' represent methods for covariate adjustment that estimate the outcome model using all units and via \cff, respectively. 
\begine
\item[``ipw'':] Construct $\hyiz$ as the IPW estimator defined in \eqref{eq:hyi}.


\item[``sm'':]  Construct $\hy_\spm(z)$ as the sample mean of observed outcomes in treatment arm $z$. 

\item[``aipw'':] Construct  $\hyaz$ as the \aae\ defined in \eqref{eq:hya}.

\item[``all'':] Construct $\hy_\all(z)$ as $\hy_\all(z) =   N_z^{-1}\sum_{t: Z_t = z}\{Y_t -\hmt^\all(z)\} +\meant\hmt^\all(z)$, where $\hmt^\all(z)$ is an estimator of $\ytz$ constructed using all $T$ units. 
 
\item[``cf'':] Construct $\hy_\cf(z)$ as follows: 
(i) Split the units into $G$ folds, indexed by $g = \ot{G}$.
(ii) For each fold $g$, estimate the fold average of $\ytz$, denoted by $\hygz$, using a variant of Strategy ``all'' with $\hmtz$ estimated using units from the other folds. (iii) Construct $\hy_\cf(z)$ as a weighted average of $\{\hygz\}_{g=1}^G$. See \sec~\ref{sec:add_res} of the \sm\ for details.

\ende
We implement the \oreg\ in Strategies ``aipw'', ``all'', and ``cf'' using both ordinary least squares (OLS) regression and random forest regression of $Y_t$ on $X_t$ by treatment level. 
For $\htau_\aipw$ from Strategy ``aipw'', we construct confidence intervals using both $\hva$ and $\tva$ as defined in \sec~\ref{sec:aipw_var_est} as the variance estimator. 
For $\htau_\cf$ from Strategy ``cf'', we set $G = 2$ and construct confidence intervals with and without \bc. The variant without \bc\ may lack theoretical guarantee.

\def\violin{the distributions of $\htau_* - \tau$ for each strategy $* = \mtds$ across 1,000 independent replications}
\def\tbl{squared errors of $\hts$ ($* = \mtds$), along with the coverage rates and average lengths of the corresponding 95\% confidence intervals}
\def\tbls{squared errors of $\hts$, along with the coverage rates and average lengths of the corresponding 95\% confidence intervals}

Figure \ref{fig:simu_bias} shows \violin.
All distributions are approximately normal and centered at zero, coherent with the consistency and asymptotic normality of the five point estimators.
%
Table \ref{tb:simu_ci} reports the mean \tbls. 
Key observations are threefold: 
\begine[(i)]
\item Overall,  \aca\ (aipw1, aipw2) achieves valid coverage under both OLS and random forest regressions, and yields shorter confidence intervals than  unadjusted inferences (ipw, sm) and cross-fitting with \bc\ (cf1). 
\item Compared with \naa\ that uses all units for outcome regression (all), \aca\ with improved variance estimator (aipw2)\\
- performs comparably with OLS regression, where the outcome model is correctly specified so that $\htau_\all$ is asymptotically most efficient by standard theory;\\
- ensures valid inference with random forest regression, whereas Strategy ``all'' leads to substantial undercoverage (0.687). 
\item The precision of aipw2 with improved variance estimator is comparable to that of \cff\ without \bc\ (cf2), which may lack theoretical guarantee.  
\ende

%
%
\begin{figure}[t!]
\begin{center}
\includegraphics[width = .5\textwidth]{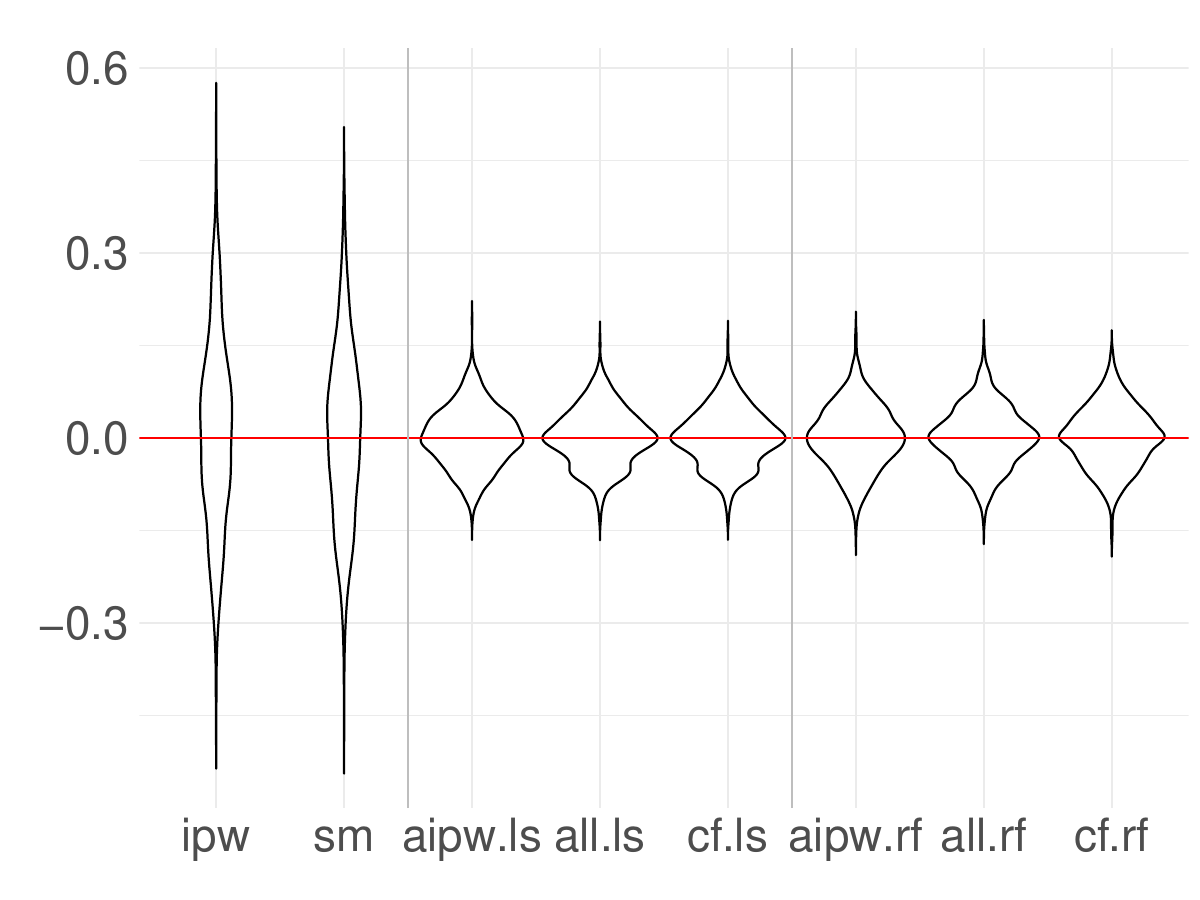}
\end{center}
\caption{\label{fig:simu_bias}Violin plots of \violin. 
The suffixes ``.ls'' and ``.rf'' indicate \oreg\ using OLS and random forest, respectively.}
\end{figure} 

%
%
\begin{table}[ht!]
\caption{\label{tb:simu_ci} 
Mean \tbl. 
``aipw1'' and ``aipw2'' refer to two variants of inference based on $\htau_\aipw$ from Strategy ``aipw'', using $\hva$ and $\tva$ to estimate $\var(\htau_\aipw)$, respectively.
``cf1'' and ``cf2'' refer to inference based on $\htau_\cf$ from Strategy ``cf'', with and without \bc, respectively.
}
 
\centering 

\resizebox{1.16\textwidth}{!}{
\begin{tabular}{l|rr|ccccc|ccccc}\hline
&&& \multicolumn{5}{c|}{ OLS} &\multicolumn{5}{c}{  Random forest}\\
\hline
 & ipw & sm & aipw1 & aipw2 & all & cf1 & cf2 & aipw1 & aipw2 & all & cf1 & cf2 \\ 
\hline
\avesqe & 0.023 & 0.020 & 0.002 & 0.002 & 0.002 & 0.002 & 0.002 & 0.003 & 0.003 & 0.002 & 0.002 & 0.002 \\\hline 
  \coverage & 0.963 & 0.962 & 0.996 & 0.945 & 0.946 & 0.998 & 0.948 & 0.994 & 0.965 & 0.687 & 1.000 & 0.965 \\ \hline
  \aveci & 0.607 & 0.581 & 0.254 & 0.185 & 0.175 & 0.284 & 0.175 & 0.266 & 0.214 & 0.099 & 0.330 & 0.204 \\ 
   \hline
\end{tabular}
}
\end{table}

\subsection{Application to \sr}

We now conduct a simulation study of \sr. 
Assume that units arrive sequentially in blocks of size $8$, with a total of $T=200$ blocks, indexed by $t = \ot{T}$. 
Within each block, we assign half of the units to treatment and the other half to control. 
Following \citet{zhou2018sequential}, 
when randomizing units in the $t$-th block, we measure the covariate imbalance of each possible treatment allocation by the Mahalanobis distance between the covariate means of the treated and control groups across all units in the first $t$ blocks. 
We then implement the best-choice rerandomization within each block \citep{WANG2025106049}, which randomly selects from the best $7$ treatment assignments with the smallest Mahalanobis distances out of the total $\binom{8}{4}=70$ possible assignments. 
In cases where some units receive only treatment or only control among the best 7 assignments, we iteratively expand the candidate set by one additional assignment at a time, each with the smallest Mahalanobis distance among the remaining, until every unit has a positive probability of receiving both treatment and control.

We generate potential outcomes and covariates as i.i.d.~samples from the following model: 
\cll{X \sim \mathcal{N}(0,1), 
    \quad Y(1) = Y(0) = 5X + \epsilon, 
    \quad 
    \text{where } \  \epsilon \indep X \ \text{ and }  \ \epsilon \sim \mathcal{N}(0,1).}
Once generated, they will be kept fixed, mimicking the finite population inference. 
We consider the sequential rerandomization design (SRD) described above and the classical completely randomized design (CRD), both of which assign half of the units to treatment and the remaining to control, and simulate $10^3$ treatment assignments from each design. 

Figure \ref{fig:seqre} shows the histogram of the difference-in-means estimator under the CRD, 
and that of the IPW estimator under the SRD, along with Gaussian approximations based on their sample averages and variances. 
From Figure \ref{fig:seqre}, the estimators under both designs are approximately unbiased and Gaussian distributed, coherent with the theory for these two designs in \citet{neyman1923on} and Section \ref{sec:block_ad}. 
Moreover, the estimator under the SRD is more precise than that under the CRD. 
Specifically, the root mean squared errors (RMSEs) for the two estimators under SRD and CRD are $0.077$ and $0.262$, respectively, indicating a $70.6\%$ reduction in RMSE under SRD.

We then investigate the performance of our confidence intervals from Section \ref{sec:block_ad}. 
Across the $10^3$ simulated assignments under the SRD, 
the $95\%$ confidence interval has $94.8\%$ coverage rate, with an average length of $0.296$. 
In contrast, \citet{neyman1923on}'s $95\%$ confidence interval under the CRD has $95.5\%$ coverage rate, with an average length of $1.03$.  
Both coverage rates are close to the nominal level, while confidence interval under the SRD is on average much shorter, with a $71.3\%$ reduction in average length. 

\begin{figure}[t!]
\begin{center}
\includegraphics[width = 0.55\textwidth]{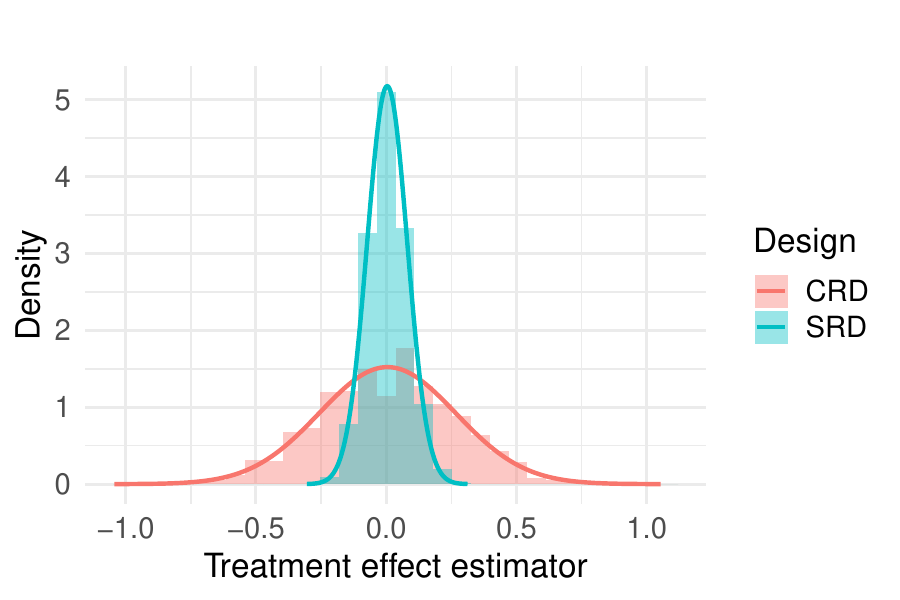}
\end{center}
\caption{Histograms of the difference-in-means estimator and the IPW estimator under the completely randomized design and the sequential rerandomization design across 1,000 independent replications, along with densities from their Gaussian approximations.}
\label{fig:seqre}
\end{figure} 

\subsection{Influence of vanishing $\uet$ on inference}\label{sec:simu_vanishing_etz}
We illustrate below the influence of $\uet$ on inference; c.f. the comments after \prop~\ref{prop:ipw_sc} and Remark \ref{rmk:rate}. 
Consider a two-armed bandit with $T = 200$ units, indexed by $t = \ot{200}$, and arms labeled as $z = 1, 2$. 
We generate potential outcomes as $\yto \sim \mn(t, 1)$ and $Y_t(2) \sim \mn(2t,1)$ for $t = \ot{200}$, and assign treatment levels by the following rule:
\begine[(i)]
\item For $t = \ot{50}$, randomly assign unit $t$ to arm 1 or 2 with equal probability $0.5$. 
\item For $t = 51, \ldots, 200$, assign unit $t$ to the arm with higher sample mean of the observed outcomes from units 1 to $t-1$ with probability $1 - t^{-1/2 + \delta}$.  
\ende 
This assignment mechanism implies 
$\uet =  \min_{H_t,\, z \in \{1,2\}}\etz = t^{-1/2 + \delta}$, 
which vanishes asymptotically at rate $-1/2 + \delta$.  

Figure \ref{fig:vanishing_etz} illustrates the performance of the IPW estimator $\hti$ for estimating the average treatment effect $\tau = \by(2) - \by(1)$ at $\delta = 0.2, 0, -0.2, -0.4$. 
As $\delta$ decreases below 0, the distribution of $\hti-\tau$ becomes increasingly skewed, deviating from normality.  
The table below reports the corresponding empirical bias, sample skewness, and mean squared error of the point estimator, along with the coverage rate and average length of the 95\% confidence interval.  
Coherent with our theory and the discussion following \prop~\ref{prop:ipw_sc}, inference based on $\hti$ and normal approximation is valid when $\delta \geq 0$, but gradually fails to ensure coverage as $\delta$ decreases below 0. 

\begin{figure}[ht!]
\begin{center}
\includegraphics[width = .44\textwidth]{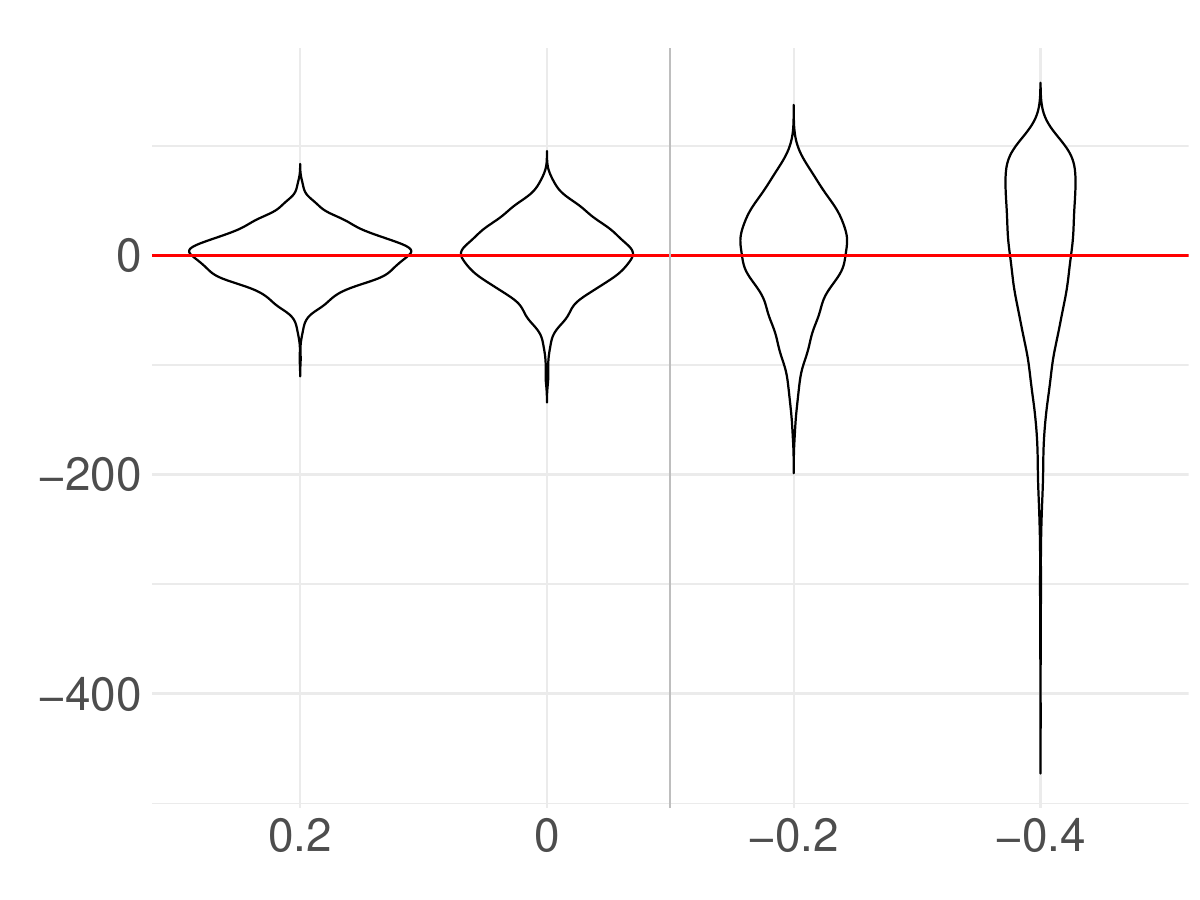}
\end{center}

\centering
\begin{tabular}{l|rrrr}\hline
& \multicolumn{4}{c}{$\delta \ (\uet)$}\\\cline{2-5}
 & \multicolumn{1}{c}{0.2} & \multicolumn{1}{c}{0} &  \multicolumn{1}{c}{$-0.2$}    &  \multicolumn{1}{c}{$-0.4$}    \\ 
  & ($\uet = 1/t^{0.3}$) &  ($\uet = 1/t^{0.5}$) &  ($\uet = 1/t^{0.7}$) &  ($\uet = 1/t^{0.9}$) \\\hline
Empirical bias & $-$1.051 & $-$1.020 & $-$2.282 & 0.342 \\ 
   \hline Sample skewness & $-$0.112 & $-$0.312 & $-$0.517 & $-$1.096 \\ 
   \hline Mean squared error & 575.447 & 957.284 & 2486.650 & 6088.563 \\ 
   \hline Coverage rate & 0.949 & 0.945 & 0.886 & 0.754 \\ 
   \hline Average CI length & 95.909 & 126.777 & 186.430 & 263.494 \\ 
   \hline
\end{tabular}
\caption{\label{fig:vanishing_etz}Violin plots of the distributions of $\hti - \tau$ across 1,000 independent replications at $\delta = 0.2, 0, -0.2, -0.4$.}
\end{figure}

\section{Discussion}\label{sec:discussion}
We establish the finite-population, {\db} theory for causal inference under adaptive designs that accommodate nonexchangeable units, both nonconverging and vanishing treatment probabilities, and nonconverging, possibly black-box, outcome models.
Our framework encompasses widely used designs such as \mab\ algorithms, \car, and \sr. 
As an application, we proposed the \aca\ approach for analyzing even nonadaptive experiments, which offers multiple advantages over standard \naa\ methods.

The \ars\ proposed by \cite{hadad2021confidence} and \cite{bibaut2021post} reweights the units based on their propensity scores, and may yield biased and inconsistent estimates of \fp\ average treatment effects when units are nonexchangeable. 
An alternative is to reweight units based on covariates to better align the study population with a target population of interest. 
Our framework accommodates this setting.

\bibliographystyle{chicago}
\bibliography{refs_adaptive_rand.bib}

\newpage
\setcounter{equation}{0}
\setcounter{section}{0}
\setcounter{figure}{0}
\setcounter{example}{0}
\setcounter{proposition}{0}
\setcounter{corollary}{0}
\setcounter{theorem}{0}
\setcounter{table}{0}
\setcounter{condition}{0}
\setcounter{lemma}{0}
\setcounter{remark}{0}

\renewcommand {\theproposition} {S\arabic{proposition}}
\renewcommand {\theexample} {S\arabic{example}}
\renewcommand {\thefigure} {S\arabic{figure}}
\renewcommand {\thetable} {S\arabic{table}}
\renewcommand {\theequation} {S\arabic{equation}}
\renewcommand {\thelemma} {S\arabic{lemma}}
\renewcommand {\thesection} {S\arabic{section}}
\renewcommand {\thetheorem} {S\arabic{theorem}}
\renewcommand {\thecorollary} {S\arabic{corollary}}
\renewcommand {\thecondition} {S\arabic{condition}}
\renewcommand {\thepage} {S\arabic{page}}

\setcounter{page}{1}
\spacingset{1.5}
\begin{center}
\bf\Large 
Supplementary Material 
\end{center}

\sec~\ref{sec:add_res} provides the additional results that complement the main paper. 

\sec~\ref{sec:aipw_app} provides the proofs of the results in Sections \ref{sec:ipw}--\ref{sec:aipw} of the main paper. 

\sec~\ref{sec:block_app} provides the proofs of the results in Section  \ref{sec:block_ad} of the main paper. 

\sec~\ref{sec:proof_sc_app} provides the proofs of the results in \sec~\ref{sec:add_res}. 

\paragraph*{Notation and useful facts.} 	
Let $\I(\cdot)$ denote the indicator function.
Let $\mn(\cdot, \cdot)$ denote the normal distribution.   
For a positive integer $m$, let $0_m$ denote the $m\times 1$ zero vector, $I_m$ the $m\times m$ identity matrix, and $\chi^2_{m, 1-\alpha}$ the $(1-\alpha)$th quantile of the chi-squared distribution with $m$ degrees of freedom.
We omit the subscript $m$ in $0_m$ and $I_m$ when the dimension is clear from context.
%
For a collection of numbers $\{a_z \in \mathbb R: z\in\mz\}$, where $\mz$ is the index set, let $\diag(a_z)_{z\in\mz}$ denote the diagonal matrix with $a_z$ on the diagonal.
For a set of vectors $\{a_i \in \mathbb R^m: i = \ot{n}\}$, 
define its {\it sample covariance matrix} as $(n-1)^{-1}\sum_{i=1}^n (a_i - \bar a)(a_i - \bar a)^\T$, where $\bar a = n^{-1}\sum_{i=1}^n a_i$.

For an $m\times n$ matrix $A = (a_{ij})_{m\times n}$, let $\fn{A} = (\sum_{i=1}^m\sum_{j=1}^n a_{ij}^2)^{1/2}$ denote the {\it Frobenius norm} of $A$, and $\tn{A}$ denote the {\it spectral norm} of $A$. 
For an $n\times n$ square matrix $B$, let $\lambda_{\max}(B)$ and $\lm(B)$ denote the largest and smallest eigenvalues of $B$, respectively.
Standard results imply that the Frobenius and spectral norms are both sub-multiplicative with 
\beginy\label{eq:l2}
\begin{array}{c}
1\leq\fn{A}/\tn{A}\leq\sqrt{\rank{A}},\medskip\\
\tn{A}^2 =\tn{A^\T A} =\tn{A A^\T} =\lambda_{\max}(A^\T A) =\lambda_{\max}( AA^\T).
\end{array}
\endy


Assume the probability measure induced by the treatment assignment mechanism throughout.
Let $\converged$ denote convergence in distribution, and let $\op$ denote a sequence converging to 0 in probability.

\section{Additional results}\label{sec:add_res}	
\subsection{Covariate adjustment using cross-fitting in \sec~\ref{sec:simu_adaptive covariate}}
Recall that $\hy_{\cf}(z)$ denotes the estimator of $\byz$ using \cff\ in Strategy ``cf'' in \sec~\ref{sec:simu_adaptive covariate}. We construct it in the following steps:
\begine[-]
\item  Split the units into $G$ folds, indexed by $g = \ot{G}$. Let $\mtg$ denote the index set of fold $g$. 

\item For fold $g$, 
estimate the fold  average potential outcome $\bar Y_{[g]}(z) = |\mathcal T_g|^{-1}\sum_{t\in \mathcal T_g} Y_t(z)$ using 
\cll{
\hat Y_{[g]}(z) = |\mtgz|^{-1}\sum_{t\in\mtgz}\{Y_t -\hmt^\cf(z)\} + |\mtg|^{-1}\sum_{t\in\mtg}\hmt^\cf(z), 
}

where $\mtgz = \{t\in\mtg: Z_t = z\}$ denotes the index set of units receiving treatment $z$ in fold $g$ and
$\hmt^\cf(z)$ is an estimator of $\ytz$  based on units in the other $G-1$ folds.

\item Estimate $\byz$ by \cls{
\hy_{\cf}(z) =\sum_{g=1}^G w_g\hat Y_{[g]}(z),}

where $w_g = |\mtg|/T$ denotes the proportion of units in fold $g$. This weighting scheme is motivated by \cls{
\byz = \meant \ytz = T^{-1}\sum_{g=1}^G \left\{ \sum_{t\in \mtg} \ytz\right\} =  \sum_{g=1}^G w_g   \bar Y_{[g]}(z).}
\ende

\subsection{Possible biases of adaptive reweighting}\label{sec:ex_bias}
\cite{hadad2021confidence} and \cite{bibaut2021post} proposed an {\it \arw} strategy for causal inference under adaptive designs with exchangeable units. 
We illustrate below the possible bias of this strategy when applied to nonexchangeable units. 

Consider a two-armed bandit with $T = 200$ units, indexed by $t = \ot{200}$, and arms labeled as $z = 1, 2$. 
We generate potential outcomes as $\yto \sim \mn(t, 1)$ and $Y_t(2) \sim \mn(2t,1)$ for $t = \ot{200}$, and assign treatment levels using the following $\epsilon$-greedy rule:
\begine[(i)]
\item For $t = \ot{50}$, assign unit $t$ to arm 1 or 2 with equal probability $0.5$; 
\item For $t = 51, \ldots, 200$, assign unit $t$ to the arm with higher sample mean of observed outcomes among units 1 to $t-1$ with probability $0.8$.
\ende  

We estimate $\by(1)$, $\by(2)$, and $\tau = \by(2) - \by(1)$ using the IPW estimators defined in \eqref{eq:hyi} and the \ares\ proposed by \cite{hadad2021confidence}.
Figure \ref{fig:ex_bias} shows the distributions of their deviations from the true values across 1,000 independent replications, with empirical biases summarized in the table below.
The IPW estimators in \eqref{eq:hyi} appear empirically unbiased, whereas the \ares\ show noticeable biases.

To see why, note that the data-generating process implies upward trends in $\yto$, $\yt(2)$, and the individual treatment effect $\tau_t = Y_t(2) - Y_t(1)$, with arm 2 being the ``better'' arm.  
Accordingly, after time $t = 50$, most units are assigned to arm 2 with probability $0.8$, which is higher than the initial $0.5$. 
The propensity-score based reweighting strategy in \cite{hadad2021confidence} thus assigns greater weights to later units in arm 2, leading to an overestimation of $\by(2)$.
Conversely, after time $t = 50$, most units are assigned to arm 1 with probability $0.2$, lower than the initial $0.5$. 
The reweighting strategy in \cite{hadad2021confidence} thus assigns greater weights to earlier units in arm 1, leading to an underestimation of $\by(1)$. 
Together, these two biases lead to an overestimation of the average treatment effect $\tau$.

\begin{figure}[t!]
\begin{center}
\includegraphics[width = .48\textwidth]{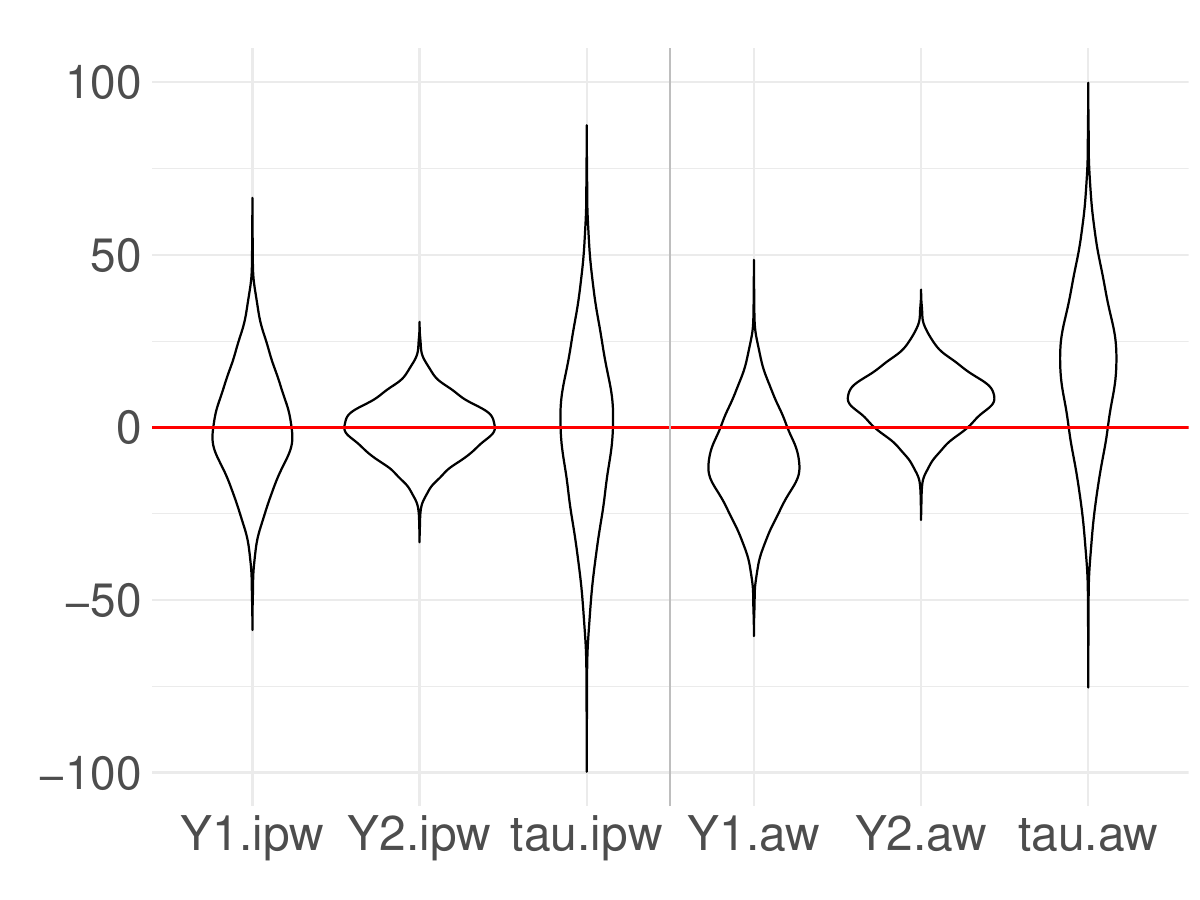}

\medskip 

\begin{tabular}{l|rrr|rrr}
  \hline
 & Y1.ipw & Y2.ipw & tau.ipw & Y1.aw & Y2.aw & tau.aw \\ 
  \hline
Empirical bias & 0.543 & $-$0.286 & $-$0.830 & $-$9.034 & 7.991 & 17.025 \\ 
   \hline
\end{tabular}
\end{center}
\caption{\label{fig:ex_bias}
Violin plots of the distributions of deviations of the estimators from the true values $(\by(1), \by(2), \tau)$ over 1,000 independent replications. The suffix ``.ipw'' refers to the standard IPW estimators in \eqref{eq:hyi}. The suffix ``.aw'' refers to the \ares\ proposed by \cite{hadad2021confidence}.}
\end{figure}  

\subsection{Sufficient conditions for \cond~\ref{cond:aipw_clt}}\label{sec:sc_app}
Assume the adaptive randomization in Definition~\ref{def:ad}. \prop~\ref{prop:aipw_sc_app} below generalizes the sufficient conditions in \prop~\ref{prop:aipw_sc} by allowing for vanishing $\etz$.
Recall that 
\begina
\uet =\min_{H_t,\, \ziz}\etz, 
\quad
v_t =\maxz\var\{\etzi\}, 
\quad \omt =\maxz\var\{\hmtz\}.
\enda

\begin{proposition}\label{prop:aipw_sc_app}
\prear. 
If as $T\to\infty$, (a) $\lm(\va)$ is uniformly bounded away from 0, and (b) both $\ytz $ and $ \hmtz$ are uniformly bounded, then \cond~\ref{cond:aipw_clt}\eqref{item:clt_aipw_var_conv} is ensured by $\fn{\cva - \va} = \op$ and holds if any of the following conditions is satisfied:
\begine[(i)]
\item\label{item:V_i_main} \underline{\bf Vanishing $\var\{\etzi\}$ and $\var\{\hmtz\}$:}

 $T^{-1}\sumt v_t = o(1)$; \quad $T^{-1}\sumt\omt\cdot\uet^{-2} = o(1)$.
 
 
\item\label{item:V_ii_main}
\underline{\bf Vanishing $\var\{\etzi\}$ and \lrd\ of $\hmtz$:}

$\meant (\maxs\ue_s^{-1})\cdot \sqrt{v_t} =\oo$; 

\prelimdeps
\begine[-]
\item $(\meant\uet^{-1})\cdot (\maxt\omt\cdot\uet^{-2}) = o(T^{1-\beta})$; 
\item $(\meant\uet^{-1} )\cdot (\maxt v_t) = o(T^{1-\beta})$; 
\item \limdept, $\{\hmtz:\ziz\}$ \limdep.
\ende

\item\label{item:V_iii_main} 
\underline{\bf Vanishing $\var\{\hmtz\}$ and \lrd\ of $\etz$:}

$T^{-1}\sumt\omt = o(1)$; 
\quad 
$\meant (\maxs\ue_s^{-1}) \cdot\sqrt{\omt}\cdot\uet^{-1} = o(1)$;

\prelimdeps 
\begine[-]
\item $\maxt\omega_t \cdot\uet^{-2} = o(T^{1-\beta})$; 
\item $\maxt v_t = o(T^{1-\beta})$;
\item \limdept, $\{\etz:\ziz\}$ \limdep.
\ende


\item\label{item:V_iv_main} 
\underline{\bf {\llrd} of $\etz$ and $\hmtz$:} \prelimdepsc 
\begine[-]
\item $\maxt\omega_t \cdot\uet^{-2} = o(T^{1-\beta})$;
\item $\maxt v_t = o(T^{1-\beta})$; 
\item \limdept, $\{\etz,\hmtz:\ziz\}$ \limdep. 
\ende
\ende 
\end{proposition}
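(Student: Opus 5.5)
Since $\lm(\va)$ is bounded away from $0$, Condition~\ref{cond:aipw_clt}\eqref{item:clt_aipw_var_conv} reduces to $\fn{\cva-\va}=\op$. Because $K$ is fixed, it suffices to show $\cva_{zz'}-\va_{zz'}=\op$ for each pair $(z,z')$. I will use Chebyshev's inequality, i.e. I will show $\var(\cva_{zz'})\to 0$, or, where only a first-moment bound is available, an $L^1$ version of the same statement. Write $\atz=\ytz-\hmtz$. The diagonal entries are $T^{-1}\sumt\{\atzs\etzi-\atzs\}$ and the off-diagonal entries are $-T^{-1}\sumt\atz\atzp$. Each summand $\xi_t$ is a function of $H_t$, and $\cva-\va=T^{-1}\sumt(\xi_t-\E\xi_t)$. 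Expanding $\atzs\etzi=\ytzs\etzi-2\ytz\hmtz\etzi+\hmtz^2\etzi$ splits every summand into pieces of three kinds: products of bounded constants with $\etzi$, with $\hmtz$ (or $\hmtz\hmtzp$), and with mixed products $\hmtz\etzi$ or $\hmtz^2\etzi$. I treat each piece separately and combine them with the triangle inequality.

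\textbf{Case (i), vanishing variances.} For a centred piece $\xi_t-\E\xi_t$ I use $\var(T^{-1}\sumt\xi_t)\le (T^{-1}\sumt\sqrt{\var\xi_t})^2\le T^{-1}\sumt\var\xi_t$ (Cauchy--Schwarz). For the pieces I bound $\var\xi_t$ as follows. For $c\,\etzi$, $\var\xi_t\ls v_t$. For $\hmtz^2$ or $\hmtz\hmtzp$, boundedness gives $\var\xi_t\ls\omt$. For the mixed products I write $\hmtz\etzi-\E(\cdot)$ as $(\hmtz-\E\hmtz)\etzi+\E\hmtz(\etzi-\E\etzi)$ minus the corresponding mean. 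This gives $\var\xi_t\ls\omt\uet^{-2}+v_t$; for the squared term $\hmtz^2$ I use $|\hmtz^2-\E\hmtz^2|\ls|\hmtz-\E\hmtz|$ plus a mean correction. The assumed averages then give $\var(\cva_{zz'})=o(1)$.

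\textbf{Cases (ii)--(iv), limited-range dependence.} Here I expand $\var(T^{-1}\sumt\xi_t)=T^{-2}\sum_{s,t}\cov(\xi_s,\xi_t)$. For $t>T_0$ and $|t-s|>c_0t^\beta$, the relevant quantity for unit $t$ depends only on assignments $Z_{t-c_0t^\beta},\dots,Z_{t-1}$, while the quantity for unit $s$ is a function of $Z_1,\dots,Z_{s-1}$. The covariance does not vanish outright, since the $Z$'s are dependent. I plan to condition on $H_{s+1}$ and couple: the law of the recent window given a far past is a product of kernels that genuinely depend on history. So instead of claiming independence, I use the martingale representation $\xi_t-\E\xi_t=\sum_{u<t}(\E[\xi_t\mid H_{u+1}]-\E[\xi_t\mid H_u])$. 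The increments with $u<t-c_0t^\beta$ should be controlled by the ``depends only on recent units'' hypothesis. The paper states this as the function being determined by the recent window, and I interpret the hypothesis as implying that these increments vanish; this is the step I expect to be the main obstacle, and I will make this interpretation explicit first. Granting it, only $O(T\cdot T^\beta)$ pairs contribute. Each contributes at most $\sqrt{\var\xi_s\var\xi_t}$, which gives $\var\ls T^{\beta-1}\maxt(v_t+\omt\uet^{-2})=o(1)$ by the rate assumptions.

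\textbf{Mixed cases (ii) and (iii).} In the mixed cases one factor is near-constant and the other has limited range. For case (ii), I decompose $\hmtz\etzi=\hmtz\E\etzi+\hmtz(\etzi-\E\etzi)$. The first piece has limited range and a weight of order $\uet^{-1}$, which gives the conditions involving $(\meant\uet^{-1})\maxt\omt\uet^{-2}$. The second piece I bound in $L^1$ by $T^{-1}\sumt\max_s\ue_s^{-1}\sqrt{v_t}=o(1)$, yielding convergence in probability by Markov's inequality; the factor $\max_{s\le t}\ue_s^{-1}$ bounds $|\hmtz|$ times the relevant inverse weights. Case (iii) is symmetric, with the roles of $\etz$ and $\hmtz$ swapped. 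Both the variance terms and the $L^1$ remainder terms tend to zero, so $\fn{\cva-\va}=\op$.

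Finally, Condition~\ref{cond:aipw_clt}\eqref{item:clt_aipw_lindeberg} is immediate. With $\lt,\mt$ bounded and $\lm(\va)$ bounded below, it requires $T^{-1}\maxt\uet^{-2}=o(1)$, i.e. $\maxt\uet^{-1}=o(\sqrt T)$. I will state this as part of the setting, as in Proposition~\ref{prop:ipw_sc}\eqref{item:ipw_sc_cond2}; the proposition as displayed only asserts part \eqref{item:clt_aipw_var_conv}.
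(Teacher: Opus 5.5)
Your argument is correct at the same level of rigor as the paper's. For cases (i) and (iv) it is essentially the paper's proof:
\begin{itemize}
\item Chebyshev entrywise;
\item $\var(\meant\xi_t)\le\meant\var(\xi_t)$ in case (i);
\item in case (iv), vanishing covariances outside a window of length $c_0t^\beta$, plus a pair count against $\maxt\var(\xi_t)\ls\maxt\omt\uet^{-2}+\maxt v_t$.
\end{itemize}
The paper packages your monomial-wise variance bounds into $\gamma_t$ and $\phi_t$ (Lemma~\ref{lem:var_bounds}). It also expands covariances of $A_t(z)^2/e_t(z)$ and $A_t(z)A_t(z')$ as wholes rather than monomial by monomial (Lemmas~\ref{lem:v1}--\ref{lem:V2}).

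The genuine difference is in the mixed cases (ii)--(iii).
\begin{itemize}
\item \emph{The paper's route.} It centres $A_t(z)^2/e_t(z)$ at $\E\{e_t(z)^{-1}\}$, resp.\ $\E\{A_t(z)^2\}$, as you do. It then bounds the covariance of the fluctuation term with \emph{every} earlier summand $A_k(z)^2/e_k(z)$, $k<t$, through sup norms. That is where the factor $\maxs\ue_s^{-1}$ comes from.
\item \emph{Your route.} You dispose of the fluctuation term with a first-moment bound and Markov's inequality. This needs only $\meant\sqrt{v_t}=o(1)$, resp.\ $\meant\uet^{-1}\sqrt{\omt}=o(1)$, both implied by the hypotheses since $\ue_s\le 1$. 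The leading term then needs only $\maxt\omt\uet^{-2}=o(T^{1-\beta})$, resp.\ $\maxt v_t=o(T^{1-\beta})$ together with $\meant\omt=o(1)$ for the $\hat m$-only pieces.
\end{itemize}
So your route is shorter and works under weaker rates. The factors $\maxs\ue_s^{-1}$ and $\meant\uet^{-1}$ play no role in it, so your remark that $\maxs\ue_s^{-1}$ ``bounds $|\hat m_t(z)|$ times the inverse weights'' is unnecessary. The paper's variance-only route has the modest advantage of being stated for unbounded outcomes in Lemma~\ref{lem:aipw_sc_app}.

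The step you flag is exactly where the paper's proof simply asserts identities such as $\cov\{A_k(z)^2/e_k(z),A_t(z)^2\}=0$ and $\cov\{A_k(z)^2/e_k(z),e_t(z)^{-1}\}=0$ for $k<t-c_0t^\beta$. Your reading, $\E(\xi_t\mid H_{u+1})=\E(\xi_t\mid H_u)$ for $u<t-c_0t^\beta$, is what those assertions require. Making it an explicit part of the hypothesis therefore costs you nothing relative to the paper.

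You are right that the literal wording does not deliver it. If $e_t(1)$ depends only on $Z_{t-1}$, the assignments form a Markov chain that is correlated at every lag. A sticky rule makes it fail outright. Take $K=2$, $\hat m_t=0$, $e_1(1)=1/2$, and $e_t(1)\in\{\epsilon,1-\epsilon\}$ set by a majority vote over the last $c_0t^\beta$ assignments, with $\beta>0$ and $\epsilon$ small. With probability near one the chain never leaves the state fixed by $Z_1$. Then $\cva$ does not concentrate, even though (a), (b) and every rate condition in (iii)--(iv) hold. Hence this reading, or some mixing assumption, is genuinely needed, in your proof and in the paper's alike.
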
 

\prop~\ref{prop:aipw_sc_app} clarifies the rate at which $\etz$ can diminish to 0 under \cond~\ref{cond:aipw_clt}.
If further $\etz$ is uniformly bounded away from 0, then  the four conditions in \prop~\ref{prop:aipw_sc_app} are guaranteed by the corresponding conditions in \prop~\ref{prop:aipw_sc}.

\section{Proofs of the results in Sections \ref{sec:ipw}--\ref{sec:aipw}}\label{sec:aipw_app}
We provide below the proofs of the results on the augmented \ipwd\ (\aipwc) estimator in \sec~\ref{sec:aipw} of the main paper. 
The results on the \ipwd\ (\ipwc) estimator in \sec~\ref{sec:ipw} are a special case with $\hmtz = 0$ for all $t$ and $z$. 
 
Assume the adaptive randomization in Definition~\ref{def:ad} throughout this section with $\etz =\pr(Z_t = z\mid\hht)$. 
Let 
\beginy\label{eq:wtz}
\rtz =\dfrac{1}{\etz } - 1, \quad 
\wtz =\dfrac{\indz }{\etz } - 1=\left\{ 
\begin{array}{cll}
 \rtz &\ \text{if $Z_t = z$}\\
-1 &\ \text{if $Z_t\neq z$} 
\end{array}\right.
\endy
with 
\beginy\label{eq:rtz_1}
\begin{array}{c}
|\wtz|, \ \rtz, \
 |\rtz - 1| \le\max\{r_t(z), 1\}\le\etzi\le\ueti,\medskip\\
%
%
\begin{array}{rcl}
\rtz^2\cdot\etz +1 -\etz&=&\rtz^2\cdot\etz +\rtz\cdot\etz \\
&=& \rtz \cdot \etz \cdot \left\{\rtz + 1\right\} =\rtz. 
\end{array}
\end{array}
\endy
Recall that $\atz =\ytz-\hmtz$ denotes the residual from the adaptive \oreg. Let 
\beginy\label{eq:dtz}
\dtz &=& \tsrif\left\{\hytaz -\ytz \right\} = \tsrif\wtz\cdot\atz, 
\endy
where the second equality follows from 
\begina
\hytaz &\overset{\eqref{eq:hytiz}+\eqref{eq:hytaz}}{=}& \dfrac{\indz }{\etz } Y_t +\left\{ 1 -\dfrac{\indz }{\etz }\right\}\hmtz\nnb\\
&\overset{\eqref{eq:wtz}}{=}& \{\wtz +1\}\ytz -\wtz\hmtz\nnb\\
&=&\ytz+\wtz\cdot\atz. 
\enda
Let 
\beginy\label{eq:dtz_def}
\begin{array}{rcl}
\mdt &=& (D_t(1),\ldots, D_t(K))^\T =
\tsrif\left(\hmyta -\myt\right),\medskip\\
\bmd  &=& T^{-1}\displaystyle\sumt\mdt  = (\bar{D}(1),\ldots,\bar{D}(K))^\T, \quad \text{where} \ \  \bar{D}(z) = \ti\displaystyle\sumt\dtz,   
\end{array}
\endy
to write 
\beginy\label{eq:mdt} 
\hmyta -\myt =  \tsr\mdt, \quad \hmya -\bmy = \meant (\hmyta -\myt ) = \tsrif\displaystyle\sumt\mdt = \tsr\bmd.
\endy
Recall that
\begina
\cva =\diag\left[\meant\dfrac{\atzs}{\etz }\right]_{\ziz} - \meant\mat\matt, 
\enda
with
$\E(\cva)=\va$. 
Let ${\cv} _{\aipw, zz'}$ denote the $(z,z')$th element of $\cva $. 

\subsection{Lemmas}
\subsubsection{Lemmas for the finite-sample exact results}
\begin{lemma}\label{lem:mds_cov}
Let $\{\xi_{t}\in\mathbb{R}^m: 1\le t\le T\}$ be a 
square-integrable martingale difference sequence with respect to filtration $\{\mf_t:\ott\}$. 
Let 
\begina
V = \cov\left(\sumt\xi_t\right),\quad 
{\cv}  = \sumt\cov(\xi_t\mid\mf_{t-1}), \quad
\hat{V} = T (T-1)^{-1}\sumt (\xi_t -\bar{\xi}) (\xi_t -\bar{\xi})^\T,
\enda
where $\mf_0 =\emptyset$ and $\bar{\xi} = T^{-1}\sumt\xi_t$. 
For $t\neq t'\in \{\ot{T}\}$, 
\begine[(i)]
\item\label{item:lem11} $\E(\xi_t\mid\mf_{t-1}) = 0_m$, \quad $\cov(\xi_t\mid\mf_{t-1}) =\E(\xi_t\xi_t^\T\mid\mf_{t-1})$; 
\item\label{item:lem12} $
 \E(\xi_t) = 0_m$,\quad $\cov(\xi_t) =\E(\xi_t\xi_t^\T) =\E\{\cov(\xi_t\mid\mf_{t-1})\}$, \quad $
 \cov(\xi_t,\xi_{t'}) = 0_{m\times m}$; 
\item\label{item:lem13} $V 
=\ds\sumt\cov(\xi_t) 
= 
\E({\cv} ) 
=\E(\hat{V})$, where $\E(\cv) = \sumt\E\{\cov(\xi_t\mid\mf_{t-1})\}$. 
\ende  
\end{lemma}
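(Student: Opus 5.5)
The plan is to prove the three parts in order, each resting on the previous one and on the martingale difference property $\E(\xi_t\mid\mf_{t-1}) = 0_m$ together with the tower property.

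\emph{Part (i).} The first identity is just the definition of a martingale difference sequence with respect to $\{\mf_t\}$. For the second, expand
\[
\cov(\xi_t\mid\mf_{t-1}) = \E(\xi_t\xi_t^\T\mid\mf_{t-1}) - \E(\xi_t\mid\mf_{t-1})\E(\xi_t\mid\mf_{t-1})^\T
\]
and note that the last term vanishes. Square integrability guarantees that all these conditional moments exist.

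\emph{Part (ii).} Taking expectations of $\E(\xi_t\mid\mf_{t-1})=0_m$ gives $\E(\xi_t)=0_m$. Hence
\[
\cov(\xi_t)=\E(\xi_t\xi_t^\T)=\E\{\E(\xi_t\xi_t^\T\mid\mf_{t-1})\}=\E\{\cov(\xi_t\mid\mf_{t-1})\},
\]
using part (i). For $t<t'$ (the case $t>t'$ follows by transposition), $\xi_t$ is $\mf_{t'-1}$-measurable because $t\le t'-1$. Therefore
\[
\E(\xi_t\xi_{t'}^\T)=\E\{\xi_t\,\E(\xi_{t'}\mid\mf_{t'-1})^\T\}=0_{m\times m},
\]
and since both means are zero, $\cov(\xi_t,\xi_{t'})=0$.

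\emph{Part (iii).} Expand $V=\sum_t\sum_{t'}\cov(\xi_t,\xi_{t'})$. By part (ii) the cross terms vanish, so $V=\sumt\cov(\xi_t)=\sumt\E\{\cov(\xi_t\mid\mf_{t-1})\}=\E(\cv)$ by linearity. For $\hat V$, write
\[
\sumt(\xi_t-\bar\xi)(\xi_t-\bar\xi)^\T=\sumt\xi_t\xi_t^\T-T\bar\xi\bar\xi^\T.
\]
The expectation of the first term is $\sumt\E(\xi_t\xi_t^\T)=V$. For the second, $\E(T\bar\xi\bar\xi^\T)=T^{-1}\E\{(\sumt\xi_t)(\sumt\xi_t)^\T\}=T^{-1}V$, using the mean-zero property. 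Thus $\E\{\sumt(\xi_t-\bar\xi)(\xi_t-\bar\xi)^\T\}=(1-T^{-1})V=(T-1)T^{-1}V$, and multiplying by $T(T-1)^{-1}$ gives $\E(\hat V)=V$.

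No step is a real obstacle. The only point needing care is the measurability of $\xi_t$ with respect to $\mf_{t'-1}$ for $t<t'$, which is what kills the cross terms in both $V$ and $\E(\bar\xi\bar\xi^\T)$.
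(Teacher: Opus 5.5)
Your proposal is correct and follows essentially the same argument as the paper. Both use the tower property for (i)--(ii), kill the cross-covariances by conditioning on the past of the later index, and obtain $\E(\hat V)=V$ from the identity $\sum_{t=1}^T(\xi_t-\bar\xi)(\xi_t-\bar\xi)^\top=\sum_{t=1}^T\xi_t\xi_t^\top-T\bar\xi\bar\xi^\top$ together with $\E(\bar\xi\bar\xi^\top)=T^{-2}V$.
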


\begin{proof}[\bf Proof of \lem~\ref{lem:mds_cov}]
\lem~\ref{lem:mds_cov}\eqref{item:lem11} follows from the definition of martingale difference sequence with $\E(\xi_t\mid\mf_{t-1}) = 0_m$.
\lem~\ref{lem:mds_cov}\eqref{item:lem12} follows from
 \beginy 
 \begin{array}{ll}
&\E(\xi_t) =\E\left\{\E(\xi_t\mid\mf_{t-1})\right\} = 0_m,\medskip\\
& \cov(\xi_t) =\E(\xi_t \xi_t^\T) =\E\left\{\E(\xi_t \xi_t^\T \mid\mf_{t-1})\right\} =\E\left\{\cov(\xi_t\mid\mf_{t-1})\right\},\medskip\\
 \text{for  $ t' < t$,}& \cov(\xi_t,\xi_{t'}) =\E(\xi_t\xi_{t'}^\T) =\E\left\{\E(\xi_t\xi_{t'}^\T\mid\mf_{t-1})\right\}
= 
\E\left\{ \E(\xi_t\mid\mf_{t-1}) \cdot \xi_{t'}^\T\right\} = 0_{m\times m}. 
\end{array}
 \label{eq:cov_xx} 
 \endy
Equation \eqref{eq:cov_xx} implies the first and second equalities regarding $V$ in \lem~\ref{lem:mds_cov}\eqref{item:lem12} as follows: 
\begin{align}\label{eq:ext_1}
V & =\cov\left(\sumt\xi_t\right)
=\sumt\cov(\xi_t) + 2\sum_{t'<t}\cov(\xi_t,\xi_{t'})
\overset{ \eqref{eq:cov_xx}}{=} \sumt\cov(\xi_t) 
\\
& \overset{ \eqref{eq:cov_xx}}{=}
\sumt\E\left\{\cov(\xi_t\mid\mf_{t-1})\right\}
= 
\E\left\{\sumt\cov(\xi_t\mid\mf_{t-1})\right\}
= 
\E({\cv} )\nnb. 
\end{align}
In addition, \eqref{eq:cov_xx}--\eqref{eq:ext_1} imply
\beginy\label{eq:ext_2}
\sumt\E(\xi_t\xi_t^\T) \overset{ \eqref{eq:cov_xx}}{=}\sumt\cov(\xi_t) \overset{ \eqref{eq:ext_1}}{=} V,\quad \E(\bar{\xi}\bar{\xi}^\T ) \overset{ \eqref{eq:cov_xx}}{=}\cov(\bar{\xi}) = T^{-2} V. 
 \endy
It then follows from $T^{-1} (T-1)\hv =\sumt (\xi_t -\bar{\xi}) (\xi_t -\bar{\xi})^\T =\sumt\xi_t\xi_t^\T - T\bar\xi\bar\xi^\T$ that 
\begina
T^{-1}(T-1) \E(\hat{V}) =
\sumt\E(\xi_t\xi_t^\T ) - T\E(\bar{\xi}\bar{\xi}^\T )\overset{ \eqref{eq:ext_2}}{=} V- T^{-1} V = T^{-1}(T-1) V,
 \enda
and therefore $
\E(\hat{V}) = V. 
$
\end{proof}

\begin{lemma}\label{lem:w_hist}
For $z \neq z'\in\mathcal{Z}$ and $\ott $,
\begina
&&\E\{\wtz\mid\hht\}=0, \quad
\cov\{\wtz\mid\hht\} =\E\{\wtz^2\mid\hht\} = 
	\rtz,\\
&&\cov\{\wtz,\wtzp\mid\hht\} =\E\{\wtz\wtzp\mid\hht\} =
	-1. 
\enda
\end{lemma}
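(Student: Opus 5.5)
The plan is to condition on $\hht$ and exploit that, under Definition~\ref{def:ad}, $\etz$ is a deterministic function of $\hht$. Hence $\rtz$ is $\hht$-measurable, and the only randomness in $\wtz$ given $\hht$ comes from the indicator $\indz$. This indicator satisfies $\E\{\indz\mid\hht\} = \etz$ and $\indz\,\I(Z_t = z') = 0$ for $z\neq z'$. Everything then reduces to direct computation with the two-point representation of $\wtz$ in \eqref{eq:wtz}: it takes the value $\rtz$ with conditional probability $\etz$ and the value $-1$ with conditional probability $1-\etz$.

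The first step is the conditional mean:
\[
\E\{\wtz\mid\hht\} = \E\{\indz\mid\hht\}/\etz - 1 = 0.
\]
Because of this, the conditional variance and covariance coincide with the conditional second moments, which gives the first equality in each displayed identity.

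For the second moment, I would compute
\[
\E\{\wtz^2\mid\hht\} = \rtz^2\,\etz + (1-\etz),
\]
and then invoke the algebraic identity already recorded in \eqref{eq:rtz_1}, which shows that this quantity equals $\rtz$. (Equivalently, $\E\{\indz\mid\hht\}/\etz^2 - 1 = 1/\etz - 1$.)

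For the cross moment with $z\neq z'$, I would expand
\[
\wtz\,\wtzp = \frac{\indz\,\I(Z_t=z')}{\etz\,\etzp} - \frac{\indz}{\etz} - \frac{\I(Z_t=z')}{\etzp} + 1.
\]
The first term vanishes identically because the events $\{Z_t = z\}$ and $\{Z_t = z'\}$ are disjoint. The middle two terms each have conditional expectation $1$. The total is therefore $-1$.

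There is no real obstacle here. The only point requiring care is to state explicitly that $\etz$ and $\rtz$ are $\hht$-measurable, which is exactly the content of Definition~\ref{def:ad} in the design-based setting, since $\hht$ is determined by $(Z_1,\ldots,Z_{t-1})$. Once that is noted, these quantities can be pulled out of the conditional expectations and the rest is routine.
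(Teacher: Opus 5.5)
Your proposal is correct and follows the paper's proof: both condition on $H_t$, use the two-point form of $w_t(z)$ for the conditional mean, and use the identity in \eqref{eq:rtz_1} for the second moment. The only difference is in the cross moment, where you expand $w_t(z)w_t(z')$ algebraically and use that $\{Z_t=z\}$ and $\{Z_t=z'\}$ are disjoint, whereas the paper splits into the three cases $Z_t=z$, $Z_t=z'$, and $Z_t\notin\{z,z'\}$; this is a cosmetic variation that yields the same value $-1$.
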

\begin{proof}[\bf Proof of \lem~\ref{lem:w_hist}]
The expressions of $\rtz$ and $\wtz$ in \eqref{eq:wtz} implies  
\begina
\E\{\wtz\mid\hht\} &\overset{\eqref{eq:wtz}}{=}&  \rtz\cdot\pr(Z_t = z\mid H_t) + ( -1)\cdot\pr(Z_t\neq z\mid H_t) \\
&=& \rtz\cdot\etz + ( -1)\cdot\{1-\etz\} \overset{\eqref{eq:wtz}}{=} 0.
\enda
This ensures 
\begina
\cov\{\wtz\mid\hht\} &=& \E\{\wtz^2\mid\hht\} \\
&\overset{\eqref{eq:wtz}}{=}& 
\rtz^2\cdot\pr(Z_t = z\mid H_t) + ( -1)^2\cdot\pr(Z_t\neq z\mid H_t)\\
&=&
\rtz^2\cdot\etz + 1- \etz\\
&\overset{\eqref{eq:rtz_1}}{=}&  \rtz,\\
\cov\{\wtz,\wtzp\mid\hht\} 
&=&\E\{\wtz\wtzp\mid\hht\}\\
&\overset{\eqref{eq:wtz}}{=}&
		-\rtz\cdot\pr(Z_t = z\mid\hht) 
+\{-r_t(z')\}\cdot\pr(Z_t = z'\mid\hht)
+ 1\cdot\pr(Z_t\neq z, z'\mid\hht)
\\
&\overset{\eqref{eq:wtz}}{=}&
-\{\etzi -1\}\cdot\etz 
-\{e_t^{-1}(z')-1\}\cdot e_t(z')
+\{1 -\etz - e_t(z')\}
\\
&=& -1.
\enda
\end{proof}

\begin{lemma}\label{lem:D}
$\{\mdt:\ott\}$ as defined in \eqref{eq:dtz_def} is {\mart} that satisfies the following: 
\begine[(i)]
\item\label{item:lem 3 1} $\E(\mdt\mid\hht) = 0_K$,
\begina
\cov(\mdt \mid\hht) 
&=& \E(\mdt\mdt^\T\mid\hht)\\
&=& T^{-1}
\left(
\begin{array}{rrcr}
r_t(1)\at(1)^2 &-\at(1)\at(2) &\cdots & -\at(1)\at(K)\\
-\at(2)\at(1) & r_t(2)\at(2)^2 &\cdots & -\at(2)\at(K)\\
\vdots&\vdots&&\vdots\\
-\at(K)\at(1)&-\at(K)\at(2) &\cdots & r_t(K)\at(K)^2 
\end{array}
\right)\\
&=& T^{-1}\left[\diag\left\{\dfrac{\atzs}{\etz}\right\}_{\ziz} - \mat\matt\right], 
\enda
$\ds\sumt\cov(\mdt\mid\hht) = \ds\sumt\E(\mdt\mdt^\T\mid\hht) =\cva $.

\medskip

\item\label{item:lem 3 2} $\E(\mdt) = 0_K$, \quad $\cov\left(\ds\sumt\mdt\right) =\ds\sumt\cov(\mdt) =\va$.
\ende
\end{lemma}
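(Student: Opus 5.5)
The plan is to read everything off the representation $\mdt = \tsrif\,\bigl(\wtz\cdot\atz\bigr)_{\ziz}$ from \eqref{eq:dtz}, combined with \lem~\ref{lem:w_hist}. The one structural fact needed is that $\atz = \ytz - \hmtz$ is $\hht$-measurable. Under the \db\ framework $\ytz$ is fixed, and $\hmtz$ is constructed from $\hht$ only. The same holds for $\etz$ and $\rtz$. Conditioning on $\hht$ therefore treats $\atz$ and $\rtz$ as constants, and all the randomness in $\mdt$ comes through $\wtz$.

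\textbf{Step 1: the martingale property.} Take $\mf_t = H_{t+1}$. Then $\mdt$ is $\mf_t$-measurable, since it depends on $Z_t$, on quantities in $\hht$, and on fixed potential outcomes. The filtration is nested because $H_t \subseteq H_{t+1}$. By \lem~\ref{lem:w_hist},
$\E\{D_t(z)\mid \hht\} = T^{-1/2}\atz\,\E\{\wtz\mid\hht\} = 0$.
This gives $\E(\mdt\mid\hht)=0_K$. Square-integrability holds because the design-based sample space is finite and every quantity is bounded; for instance $|\wtz|\le \ueti$ by \eqref{eq:rtz_1}. Hence $\{\mdt\}$ is a martingale difference sequence with respect to $\{\mf_t\}$.

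\textbf{Step 2: the conditional covariance.} Because $\E(\mdt\mid\hht)=0_K$, we have $\cov(\mdt\mid\hht)=\E(\mdt\mdt^\T\mid\hht)$. Its $(z,z')$ entry equals $T^{-1}\atz\atzp\,\E\{\wtz\wtzp\mid\hht\}$.
\begin{itemize}
\item On the diagonal, \lem~\ref{lem:w_hist} gives the entry $T^{-1}\rtz\atzs$.
\item Off the diagonal, it gives $-T^{-1}\atz\atzp$.
\end{itemize}
Since $\rtz = \etzi - 1$, the diagonal entry can be rewritten as $T^{-1}\{\atzs/\etz - \atzs\}$. The whole matrix is therefore $T^{-1}\bigl[\diag\{\atzs/\etz\}_{\ziz} - \mat\matt\bigr]$. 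Summing over $t$ and comparing with the definition of $\cva$ in \eqref{eq:va} yields $\sumt \cov(\mdt\mid\hht)=\cva$.

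\textbf{Step 3: the unconditional statements.} Apply \lem~\ref{lem:mds_cov}\eqref{item:lem12}--\eqref{item:lem13} with $\xi_t=\mdt$, noting that $\mf_{t-1}=\hht$. This gives $\E(\mdt)=0_K$ and $\cov(\sumt\mdt)=\sumt\cov(\mdt)=\E(\cva)$. Finally, $\E(\cva)=\va$ follows by linearity of expectation and the definition of $\va$ in \eqref{eq:va}.

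There is no real obstacle in this argument. The only point requiring care is the indexing of the filtration: $\mf_{t-1}$ must coincide with $\hht$ so that \lem~\ref{lem:mds_cov} applies. With $\mf_t = H_{t+1}$ this holds, and we take $\mf_0 = H_1$, which is trivial (it contains only the fixed $X_1$) and is consistent with $\mf_0=\emptyset$ in \lem~\ref{lem:mds_cov}.
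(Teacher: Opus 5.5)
Your proof is correct and follows essentially the same route as the paper. The paper also writes $D_t(z) = T^{-1/2} w_t(z) A_t(z)$ with $A_t(z)$ fixed given $H_t$, reads the conditional mean and covariance entries off Lemma~\ref{lem:w_hist}, and gets part (ii) from Lemma~\ref{lem:mds_cov}; your remarks on measurability, square-integrability, the indexing $\mathcal{F}_{t-1} = H_t$, and the identity $r_t(z)A_t(z)^2 = A_t(z)^2/e_t(z) - A_t(z)^2$ just make explicit what the paper leaves implicit.
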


\begin{proof}[\bf Proof of {\lemd}]
Recall from \eqref{eq:dtz} that $\dtz = \tsri\wtz\cdot\atz$, where $\atz$ is fixed given $H_t$.
This, together with \lem~\ref{lem:w_hist}, implies  
\begina
\E\{\dtz\mid\hht\} &\oeq{\eqref{eq:dtz}}& T^{-1/2}\cdot\E\{\wtz\mid\hht\}\cdot\atz  \oeqt{\lem~\ref{lem:w_hist}}  0,\\
\cov\left\{\dtz, \dtzp\mid\hht\right\}
&=& \E\left\{\dtz\dtzp\mid\hht\right\}\\
&\oeq{\eqref{eq:dtz}}&
T^{-1}\E\left\{\wtz w_t(z')\mid\hht\right\}\cdot\atz\atzp\\
&\oeqt{\lem~\ref{lem:w_hist}}& 
\begin{cases}
T^{-1}\rtz\atz^2 &\text{if } z=z'\\
- T^{-1}\atz\atzp &\text{if } z\ne z' 
\end{cases}
\enda
 for all $\ott $ and $z,z'\in\mathcal{Z}$. This implies {\lemd}\eqref{item:lem 3 1}.
{\lemd}\eqref{item:lem 3 2} then follows from \lem~\ref{lem:mds_cov}.
\end{proof}

\subsubsection{Lemmas for the central limit theorem}
\lem~\ref{lem:martingale_clt} below reviews the martingale central limit theorem from \citet[\thm~2]{brown1971martingale}; see also \citet[][Corollary 3.1]{hall2014martingale}. 

\begin{lemma}[Martingale central limit theorem]
\label{lem:martingale_clt}
Let $\{\xi_{Tt}\in\mathbb R,\mf_{Tt}: 1\le t\le T, T\ge 1\}$ be an array of square-integrable martingale difference sequences. 
Let \medskip\\
\cl{$
\cvt =\ds\sumt\E(\xi^2_{Tt}\mid\mf_{T,t-1}),\where\mf_{T0} =\emptyset; \quad V_{T} =\E({\cv} _{T}).$}

As $T\to\infty$, if
\begine[(i)]
\item (Variance convergence condition) ${\cv} _T/V_T = 1 +\op$; 
\item (Lindeberg condition) $V_T^{-1}\sumt\E\big\{\xi_{Tt}^2\cdot\I\big( |\xi_{Tt} |\geq\epsilon V_T^{-1/2}\big)\big\} =\op$ for all $\epsilon > 0$, 
\ende
then 
$
V_T^{-1/2}\sumt\xi_{Tt}\converged\mathcal{N}(0,1). 
$
\end{lemma}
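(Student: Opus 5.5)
Since this is the classical result of \citet[\thm~2]{brown1971martingale}, the plan is to reproduce the standard characteristic-function argument, adapted to the triangular array. First normalize: set $\tilde\xi_{Tt} = V_T^{-1/2}\xi_{Tt}$, so that $\tilde V_T = \sumt \E(\tilde\xi_{Tt}^2\mid\mf_{T,t-1}) = \cvt/V_T \to 1$ in probability by condition (i). Condition (ii) becomes $L_T(\epsilon) = \sumt \E\{\tilde\xi_{Tt}^2\I(|\tilde\xi_{Tt}|\ge\epsilon)\}\to 0$ for every $\epsilon>0$; note that $L_T(\epsilon)$ is deterministic. Two consequences follow from this. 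One is $\E\{\max_t \E(\tilde\xi_{Tt}^2\mid\mf_{T,t-1})\} \le \epsilon^2 + L_T(\epsilon)$, so the largest conditional variance vanishes in $L^1$. The other is that the conditional Lindeberg sum $\sumt \E\{\tilde\xi_{Tt}^2\I(|\tilde\xi_{Tt}|\ge\epsilon)\mid\mf_{T,t-1}\}$ vanishes in $L^1$, and hence in probability.

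Next, localize so that the conditional variance is bounded. Let $\sigma_{Tt}^2 = \E(\tilde\xi_{Tt}^2\mid\mf_{T,t-1})$, which is $\mf_{T,t-1}$-measurable. Define the truncated array $\xi^*_{Tt} = \tilde\xi_{Tt}\,\I(\sum_{s\le t}\sigma_{Ts}^2 \le 2)$. The indicator is $\mf_{T,t-1}$-measurable, so $\{\xi^*_{Tt}\}$ is again a martingale difference array with conditional variance sum at most $2$. It still satisfies conditions (i) and (ii), and $\pr(\xi^*_{Tt}\ne\tilde\xi_{Tt}\text{ for some }t)\le\pr(\tilde V_T>2)\to 0$. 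It therefore suffices to prove the result for $\xi^*$.

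For the truncated array, fix $u\in\mbr$ and set $S_{Tk}=\sum_{t\le k}\xi^*_{Tt}$ and $W_{Tk}=\sum_{t\le k}\sigma^{*2}_{Tt}$. The aim is to show
\[
\E\left\{\exp\left(iuS_{TT} + u^2W_{TT}/2\right)\right\} - 1 \to 0 .
\]
For this, telescope over $k$ and condition on $\mf_{T,k-1}$. Each increment equals $\E[\exp(iuS_{T,k-1}+u^2W_{Tk}/2)\{\E(e^{iu\xi^*_{Tk}}\mid\mf_{T,k-1}) - e^{-u^2\sigma^{*2}_{Tk}/2}\}]$. Because $W\le 2$, the prefactor is bounded by $e^{u^2}$.

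The bracket is controlled by the Taylor bound $|e^{ix}-1-ix+x^2/2|\le\min(|x|^3,x^2)$ together with $|e^{-y}-1+y|\le y^2$. Splitting at $|\xi|\ge\epsilon$ bounds the sum of brackets by $C_u\{\epsilon\,W_{TT} + \text{(conditional Lindeberg sum)} + \max_t\sigma^{*2}_{Tt}\cdot W_{TT}\}$. Its expectation is therefore $O(\epsilon)+o(1)$, by the first paragraph.

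Finally, since $W_{TT}\to 1$ in probability and all quantities are bounded, replace $e^{u^2W_{TT}/2}$ by $e^{u^2/2}$. This step uses $|e^{iuS}|=1$ and bounded convergence on the difference $e^{u^2W_{TT}/2}-e^{u^2/2}$. It gives $\E e^{iuS_{TT}}\to e^{-u^2/2}$, and Lévy's continuity theorem finishes the proof.

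The main obstacle is the random conditional variance: the characteristic-function product cannot be factored directly. The truncation step and the exponential tilting by $u^2W/2$ are what handle this. I expect care to be needed in verifying that the truncation preserves the Lindeberg-type bounds and keeps the tilted prefactor uniformly bounded.
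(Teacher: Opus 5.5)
Your proposal is correct, but it does not follow the paper, because the paper gives no proof of this lemma. It only invokes \citet[Theorem~2]{brown1971martingale} (see also \citet[Corollary~3.1]{hall2014martingale}). You instead give a self-contained argument: stop the cumulative conditional variance at level $2$, then show $\E\{\exp(iuS_{TT}+u^2W_{TT}/2)\}\to 1$ by telescoping. The steps check out:
\begin{itemize}
\item The truncation indicator is $\mathcal{F}_{T,t-1}$-measurable. So $\xi^*$ is still a martingale difference array, with $W_{TT}\le 2$ and $W_{TT}\to 1$ in probability.
\item The prefactor $\exp(iuS_{T,k-1}+u^2W_{Tk}/2)$ is $\mathcal{F}_{T,k-1}$-measurable and bounded by $e^{u^2}$.
\item The summed bracket bound $C_u\{\epsilon W_{TT}+\text{conditional Lindeberg sum}+\max_t\sigma^{*2}_{Tt}W_{TT}\}$ has expectation at most $C_u'\{\epsilon+\epsilon^2+L_T(\epsilon)\}$.
\item Replacing $e^{u^2W_{TT}/2}$ by $e^{u^2/2}$ is a bounded-convergence step.
\end{itemize}
What your route adds is an explicit account of what the stated hypotheses deliver. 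From the deterministic unconditional Lindeberg sum you derive both the conditional Lindeberg condition and $\E\max_t\sigma_{Tt}^2\to 0$. That is exactly the bridge needed to apply Hall and Heyde's corollary, which is stated with a conditional Lindeberg condition, and the paper leaves this bridge implicit. The citation route is shorter.

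One point you should make explicit. Your normalization reads the indicator in (ii) as $\I(|\xi_{Tt}|\ge\epsilon V_T^{1/2})$, which is the standard Lindeberg condition. The lemma as printed has $\epsilon V_T^{-1/2}$, and the printed quantity equals $\sum_{t=1}^T\E\{\tilde\xi_{Tt}^2\I(|\tilde\xi_{Tt}|\ge\epsilon/V_T)\}$. This implies your $L_T(\epsilon)\to 0$ whenever $V_T$ is bounded away from $0$; in the paper's application $V_T=1$. As printed, however, the lemma is false when $V_T\to 0$. Take $\xi_{T1}=V_T^{1/2}R$ with $R$ a Rademacher variable and $\xi_{Tt}=0$ for $t\ge 2$. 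Then (i) holds exactly and the printed (ii) equals $\I(V_T\ge\epsilon)\to 0$, yet $V_T^{-1/2}\sum_{t=1}^T\xi_{Tt}=R$. So say that you are proving the corrected form of (ii), or add the assumption $\liminf_{T\to\infty} V_T>0$.
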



\subsubsection{Lemmas for variance estimation}
 \paragraph*{A useful fact.} \eqref{eq:mdt} implies
\begina
\hmyta  -\hmya \oeq{\eqref{eq:mdt}} (\tsr\mdt + \myt) -( \tsr\bmd + \bmy) = \tsr (\mdt -\bmd ) +\myt -\bmy, 
\enda
so that 
\beginy\label{eq:cov_ss}
\begin{array}{rl}
  & \ds\sumt(\hmyta  -\hmya) (\hmyta  -\hmya)^\T \medskip\\
=& \ds\sumt\Big\{ \tsr  (\mdt -\bmd) +\myt  -\bmy\Big\}
\Big\{ \tsr  (\mdt -\bmd)^\T + (\myt  -\bmy)^\T\Big\}\medskip\\
=&  T \ds\sumt(\mdt -\bmd)(\mdt -\bmd)^\T 
+ 
\tsr  \ds\sumt (\mdt -\bmd) (\myt -\bmy)^\T  \medskip\\
& + \tsr \ds\sumt (\myt -\bmy)(\mdt -\bmd)^\T + \ds\sumt(\myt -\bmy) (\myt -\bmy)^\T \medskip\\
=& 
T \displaystyle\ds\sumt (\mdt -\bmd) (\mdt -\bmd)^\T 
+ 
\tsr\displaystyle\ds\sumt\mdt(\myt -\bmy)^\T + \tsr\displaystyle\ds\sumt(\myt -\bmy)\mdt^\T + (T-1)S.
\end{array}
\endy

\begin{lemma}[Chebyshev's inequality]\label{lem:cheb}
If 
$(X_n)$ is a stochastic sequence such that each element has finite variance, then
$X_n -\E(X_n) =\sqrt{\var(X_n)}\cdot O_\pr(1)$. 
\end{lemma}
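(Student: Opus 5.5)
This is the standard Chebyshev bound. I will prove it directly from the definition of $O_\pr(1)$ by normalizing and invoking Chebyshev's inequality.

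\emph{Degenerate case.} If $\var(X_n)=0$, then $X_n=\E(X_n)$ almost surely. The identity then holds with any bounded sequence as the $O_\pr(1)$ factor, for instance the constant $0$.

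\emph{Main case.} Assume $\var(X_n)>0$ and set $U_n=\{X_n-\E(X_n)\}/\sqrt{\var(X_n)}$. Then $X_n-\E(X_n)=\sqrt{\var(X_n)}\cdot U_n$, and it remains to show $U_n=O_\pr(1)$. By construction $\E(U_n^2)=1$ for every $n$. Given $\epsilon>0$, take $M_\epsilon=\epsilon^{-1/2}$. Chebyshev's (Markov's) inequality gives
$$\pr(|U_n|>M_\epsilon)\le \E(U_n^2)/M_\epsilon^2=\epsilon,$$
uniformly in $n$. This is exactly the definition of $U_n=O_\pr(1)$.

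\emph{Combining the cases.} To cover both cases with a single sequence, define $U_n=0$ whenever $\var(X_n)=0$. The identity still holds almost surely, and the tail bound above is unaffected.

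No step here is a genuine obstacle. The only point needing care is the zero-variance indices, which the convention $U_n=0$ handles. Finite variance is exactly what makes the normalization and the second-moment bound well defined.
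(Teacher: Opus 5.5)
Your proof is correct and is the standard argument. The paper states this lemma without proof, invoking Chebyshev's inequality by name, and your standardization $U_n = \{X_n - \E(X_n)\}/\sqrt{\var(X_n)}$ with the uniform bound $\pr(|U_n| > M) \le M^{-2}$, plus the convention $U_n = 0$ at zero-variance indices, is exactly the justification it relies on.
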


\begin{lemma}\label{lem:var_wtzwtzp}
For $\ott$ and $z\neq z'\in\mathcal{Z}$, 
\cll{
\var\{\wtz^2\mid\hht\}
\le\uet ^{-3}, \quad 
\var\{ w_t(z)w_t(z')\mid\hht\}\le 
6\uet ^{-1}.}
\end{lemma}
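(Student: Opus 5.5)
Both bounds will follow from direct computation with the two-point (or three-point) conditional distribution of $\wtz$ given $\hht$ in \eqref{eq:wtz}, together with \lem~\ref{lem:w_hist} and the elementary bounds in \eqref{eq:rtz_1}. Throughout we condition on $\hht$, so $\etz$ and $\rtz$ are fixed and $Z_t$ is the only random quantity.

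\emph{First bound.} Bound the conditional variance by the conditional fourth moment, $\var\{\wtz^2\mid\hht\}\le\E\{\wtz^4\mid\hht\}$. Since $\wtz=\rtz$ with probability $\etz$ and $\wtz=-1$ otherwise,
\begin{align*}
\E\{\wtz^4\mid\hht\}=\rtz^4\etz+1-\etz .
\end{align*}
Using \eqref{eq:rtz_1}, write $\rtz^4\etz=\rtz^2\cdot\rtz^2\etz$. We have $\rtz^2\etz+1-\etz=\rtz$, so the fourth moment is at most $\max\{\rtz^2,1\}\cdot\rtz$. Combining $\max\{\rtz,1\}\le\etzi\le\ueti$ with $\rtz\le\ueti$ gives $\E\{\wtz^4\mid\hht\}\le\uet^{-3}$. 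The only care needed is when $\rtz<1$; then $\rtz^4\etz+1-\etz\le 1\le\uet^{-3}$ directly.

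\emph{Second bound.} For $z\neq z'$, the product $\wtz\wtzp$ takes three values: $-\rtz$ if $Z_t=z$, $-r_t(z')$ if $Z_t=z'$, and $1$ otherwise. By \lem~\ref{lem:w_hist} its conditional mean is $-1$, so
\begin{align*}
\var\{\wtz\wtzp\mid\hht\}=\rtz^2\etz+r_t(z')^2e_t(z')+\{1-\etz-e_t(z')\}-1 .
\end{align*}
Now $\rtz^2\etz=\rtz\{1-\etz\}\le\rtz\le\ueti$ by \eqref{eq:rtz_1}, and the same holds for $z'$. The remaining terms are bounded by $1\le\ueti$. Adding these pieces gives a bound of at most $3\ueti$, which lies within the stated $6\uet^{-1}$. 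A cruder route, bounding $|\wtz\wtzp-(-1)|^2\le(\ueti+1)^2$ case by case, would give a worse order, so it is important to weight each case by its probability as above.

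\emph{Expected obstacle.} There is no real difficulty here. The step needing the most care is keeping the probability weights $\etz$ when bounding each case, so that one factor of $\ueti$ is cancelled by the identity in \eqref{eq:rtz_1}. This cancellation is what yields $\uet^{-3}$ in the first bound rather than $\uet^{-4}$, and $\uet^{-1}$ in the second rather than $\uet^{-2}$.
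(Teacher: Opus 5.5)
Your proposal is correct and follows essentially the paper's route: a direct case-by-case computation of the conditional moments over the values of $Z_t$ given $\hht$, where the identity $\rtz^2\etz+1-\etz=\rtz$ from \eqref{eq:rtz_1} cancels one factor of $\ueti$. The differences are minor. In the first bound, the paper computes the variance exactly as $\{\rtz-1\}^2\rtz$, whereas you bound it by the uncentered fourth moment. In the second bound, the paper reaches the constant $6$ via $|\rtz-1|\le\etzi$, whereas your identity $\rtz^2\etz=\rtz\{1-\etz\}$ gives a sharper constant (in fact $2$, since the leftover terms sum to $-\etz-e_t(z')\le 0$).
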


\begin{proof}[\bf Proof of \lem~\ref{lem:var_wtzwtzp}]
\lem~\ref{lem:w_hist} ensures $\E\{\wtz^2\mid\hht\} =\rtz$
so that 
\begina
\var\left\{\wtz^2\mid\hht\right\} 
&\overset{\text{\lem~\ref{lem:w_hist}}}{=}&\E\Big[\left\{\wtz^2  -\rtz\right\}^2\mid\hht\Big]\\
&=&\E\Big[\left\{\wtz^2  -\rtz\right\}^2\mid Z_t = z,\hht\Big]\cdot\pr(Z_t = z\mid\hht)\\
&&+\E\Big[\left\{\wtz^2  -\rtz\right\}^2\mid Z_t\neq z,\hht\Big]\cdot\pr(Z_t\neq z\mid\hht)\\
&\overset{\eqref{eq:wtz}}{=}&\left\{\rtz^2 -\rtz\right\}^2 \cdot \etz 
+\left\{1 -\rtz\right\}^2 \cdot \{1-e_t(z)\}\\
&=& 
\left\{\rtz -1\right\}^2\left\{\rtz^2\cdot\etz  +1 -\etz\right\}
\\
&\overset{ \eqref{eq:rtz_1}}{=}& \left\{r_t(z)-1\right\}^2\rtz\\
&\overset{ \eqref{eq:rtz_1}}{\le}&  \uet ^{-3}. 
\enda
Similarly, \lem~\ref{lem:w_hist} ensures $\E\{w_t(z)w_t(z')\mid\hht\} = -1$ for $z\neq z'$ so that 
\begina
\var\{ w_t(z)w_t(z')\mid\hht\}
&\overset{\text{\lem~\ref{lem:w_hist}}}{=}&\E\Big[\big\{ w_t(z)w_t(z') + 1\big\}^2\mid\hht\Big]
\\
&=&\E\Big[\big\{ w_t(z)w_t(z') + 1\big\}^2\mid Z_t = z,\hht\Big]\cdot\pr(Z_t = z\mid\hht)\\
&&+\E\Big[\big\{ w_t(z)w_t(z') + 1\big\}^2\mid Z_t = z',\hht\Big]\cdot\pr(Z_t = z'\mid\hht)\\
&&+\E\Big[\big\{ w_t(z)w_t(z') + 1\big\}^2\mid Z_t\neq z, z',\hht\Big]\cdot\pr(Z_t\neq z, z'\mid\hht)\\
&\overset{ \eqref{eq:wtz}}{=}&\left\{-\rtz +1\right\}^2\cdot\etz +\left\{-r_t(z')+1\right\}^2\cdot e_t(z')
+2^2\cdot \{1-e_t(z)-\etzp\}
\\
&\overset{ \eqref{eq:rtz_1}}{\le}&
\etzi +\etzp^{-1} + 4\\
&\le& 6\uet ^{-1}. 
\enda
\end{proof}

\begin{lemma}\label{lem:Dtz_Dtzp}
For $z, z'\in\mathcal{Z}$, we have
\begina
\sumt\dtz\dtzp  -{\cv} _{\aipw, zz'} = \left\{\dfrac{1}{\tsq}\sumt\dfrac{ (\lt +\mt) ^4 }{\uet ^3}\right\}^{1/2}\cdot \oop . 
\enda
\end{lemma}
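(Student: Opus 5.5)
The plan is to center the sum and apply Chebyshev's inequality (\lem~\ref{lem:cheb}). First I identify the centering term: by \lemd\eqref{item:lem 3 1}, $\E\{\dtz\dtzp\mid\hht\}$ is the $(z,z')$th entry of $\cov(\mdt\mid\hht)$, and summing over $t$ gives $\cv_{\aipw,zz'}$. So I write
\begin{align*}
\sumt\dtz\dtzp - \cv_{\aipw,zz'} = \sumt \xi_t, \quad \xi_t = \dtz\dtzp - \E\{\dtz\dtzp\mid\hht\}.
\end{align*}
The sequence $\{\xi_t\}$ is a martingale difference sequence with respect to $\{H_{t+1}\}$, because $\dtz\dtzp$ is measurable with respect to $H_{t+1}$ and $\xi_t$ has conditional mean zero given $H_t$.

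Next I compute the variance of the sum. The cross terms vanish by \lem~\ref{lem:mds_cov}\eqref{item:lem12}, so $\E(\sumt\xi_t)=0$ and
\begin{align*}
\var\Big(\sumt\xi_t\Big)=\sumt\E\{\var(\dtz\dtzp\mid\hht)\}.
\end{align*}
By \eqref{eq:dtz}, $\dtz\dtzp=T^{-1}\wtz\wtzp\atz\atzp$. Here $\atz\atzp$ is fixed given $\hht$, and $|\atz|\le |\ytz|+|\hmtz|\le\lt+\mt$. Hence
\begin{align*}
\var(\dtz\dtzp\mid\hht)\le T^{-2}(\lt+\mt)^4\var\{\wtz\wtzp\mid\hht\}.
\end{align*}

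I then bound the conditional variance of the weights using \lem~\ref{lem:var_wtzwtzp}, splitting into two cases. When $z=z'$, the bound is $\uet^{-3}$. When $z\ne z'$, the bound is $6\uet^{-1}\le 6\uet^{-3}$, since $\uet\le1$. In both cases $\var(\sumt\xi_t)\le 6T^{-2}\sumt(\lt+\mt)^4/\uet^3$.

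Finally, \lem~\ref{lem:cheb} applied to $\sumt\xi_t$, which has mean zero, gives $\sumt\xi_t=\{T^{-2}\sumt(\lt+\mt)^4/\uet^3\}^{1/2}\cdot\oop$, which is the claim.

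The only delicate point is checking that $\atz$ is indeed $\hht$-measurable, so that it can be pulled out of the conditional variance. This holds because $\hmtz$ is constructed only from $\hht$ and $\ytz$ is fixed. Once that is settled, the rest is routine.
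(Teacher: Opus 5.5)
Your proposal is correct and follows essentially the same route as the paper. Both arguments center $\sum_t D_t(z)D_t(z')$ by its conditional second moments to get a martingale difference sequence, bound each conditional variance by $T^{-2}(L_t+M_t)^4$ times the weight-product variance from Lemma~\ref{lem:var_wtzwtzp}, and conclude with Chebyshev's inequality. Your explicit split into the cases $z=z'$ (bound $\underline{e}_t^{-3}$) and $z\neq z'$ (bound $6\underline{e}_t^{-1}\le 6\underline{e}_t^{-3}$) is slightly more careful than the paper, which states the combined bound $6\underline{e}_t^{-3}$ in a single step.
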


\begin{proof}[\bf Proof of \lem~\ref{lem:Dtz_Dtzp}]
{\lemd}\eqref{item:lem 3 1} ensures ${\cv} _{\aipw, zz'} =\sumt\E\{\dtz\dtzp\mid\hht\}$ so that 
\beginy\label{eq:dd_1}
\sumt\dtz\dtzp -{\cv} _{\aipw, zz'} =
\sumt\Big[\dtz\dtzp -\E\{D_t(z)\dtzp\mid\hht\}\Big] =\sumt\dt,
\endy
where 
\cll{\dt =\dtz\dtzp -\E\{D_t(z)\dtzp\mid\hht\}\with\E(\dt \mid \hht) = 0.}

Therefore, $\{\dt:\ott\}$ is {\mart} with
\beginy\label{eq:delta_1}
\E\left(\sumt \dt\right) = 0, \quad
\var\left(\sumt\dt\right)=\sumt\var(\dt) =\sumt\E\left\{\var(\dt\mid\hht )\right\}   
\endy
by \lem~\ref{lem:mds_cov}, where, given  $\dtz \oeq{\eqref{eq:dtz}} \tsri\wtz\cdot\atz$,  
\beginy\label{eq:delta_2}
\var(\dt\mid\hht )
&=&\var\left\{\dtz\dtzp\mid\hht\right\}\nnb\\
&\overset{ \eqref{eq:dtz}}{=}& T^{-2}\atz^2\atzp^2\cdot\var\left\{\wtz\cdot\wtzp\mid\hht\right\} \nnb\\
&\overset{\text{\lem~\ref{lem:var_wtzwtzp}}}{\leq}& T^{-2} \lmtp^4\cdot 6\uet^{-3}. 
\endy
This, together with \eqref{eq:dd_1} and \lem~\ref{lem:var_wtzwtzp}, ensures 
\begina
\E\left\{\sumt\dtz\dtzp -{\cv} _{\aipw, zz'}\right\}
&\overset{ \eqref{eq:dd_1}}{=}&\E\left(\sumt\dt\right)
\overset{\eqref{eq:delta_1}}{=} 0,\\
\var\left\{\sumt\dtz\dtzp -{\cv} _{\aipw, zz'}\right\}
&\overset{ \eqref{eq:dd_1}}{=}&\var\left(\sumt\dt\right) \overset{\eqref{eq:delta_1}+ \eqref{eq:delta_2}}{\le} 
\dfrac{6}{T^2}\sumt\dfrac{(\lt +\mt) ^4}{\uet ^3}. 
\enda
The result then follows from {\lemchebf}. 
\end{proof}

\begin{lemma}\label{lem:var_DtYt}
For $z, z'\in\mathcal{Z}$, we have 
\begina
\sumt\dtz\left\{Y_t(z') -\bar{Y}(z')\right\} = 
\tsrif\left\{\sumt\dfrac{\lt^2 (\lt +\mt) ^2 }{\uet}
+\max_{z\in\mathcal{Z}}\bar{Y}^2(z)\cdot\sumt\dfrac{ (\lt +\mt)^2 }{\uet}\right\}^{1/2}\cdot O_\pr(1).
\enda
\end{lemma}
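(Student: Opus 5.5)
The plan follows the template of the proof of \lem~\ref{lem:Dtz_Dtzp}: show the sum is a martingale with mean zero, bound its variance, and conclude by \lemchebf. Write
\[
\sumt\dtz\left\{Y_t(z') -\bar{Y}(z')\right\} = \sumt \dtz Y_t(z') - \bar Y(z')\sumt \dtz,
\]
and treat the two sums separately. Under the \db\ framework $Y_t(z')$ and $\bar Y(z')$ are fixed, so each summand is $\dtz$ times a constant. By {\lemd}, $\{\dtz: \ott\}$ is a martingale difference sequence with respect to $\{H_{t+1}\}$. Hence both $\{\dtz Y_t(z')\}$ and $\{\dtz\}$ are martingale difference sequences, and both sums have mean zero.

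For the variances, \lem~\ref{lem:mds_cov} removes the cross terms, giving
\[
\var\Big(\sumt \dtz Y_t(z')\Big)=\sumt Y_t(z')^2\,\E\{\var(\dtz\mid\hht)\}.
\]
By \eqref{eq:dtz} and \lem~\ref{lem:w_hist},
\[
\var(\dtz\mid\hht)=T^{-1}\rtz\atz^2 \le T^{-1}(\lt+\mt)^2/\uet,
\]
using $\rtz\le\ueti$ from \eqref{eq:rtz_1} and $|\atz|\le \lt+\mt$. Combining with $Y_t(z')^2\le \lt^2$ gives
\[
\var\Big(\sumt \dtz Y_t(z')\Big)\le T^{-1}\sumt \lt^2(\lt+\mt)^2/\uet .
\]
The same argument gives
\[
\var\Big(\bar Y(z')\sumt\dtz\Big)\le T^{-1}\max_{z\in\mz}\bar Y^2(z)\sumt(\lt+\mt)^2/\uet .
\]

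To finish, apply \lemchebf\ to each piece. Each piece is then $O_\pr(1)$ times the square root of its variance bound. The sum of the two pieces is bounded by $O_\pr(1)$ times $\sqrt{a}+\sqrt{b}\le\sqrt{2(a+b)}$, where $a$ and $b$ are the two variance bounds; this yields the stated order.

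I do not expect a real obstacle. The only point needing care is that $\atz$ depends on $H_t$ through $\hmtz$, so it is random. The argument handles this by bounding $\E\{\rtz\atz^2\}$ uniformly through the deterministic envelopes $\uet$ and $\mt$, which are defined as extrema over all possible histories.
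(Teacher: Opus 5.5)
Your proof is correct and follows essentially the same route as the paper. Both use the martingale-difference structure from Lemma~\ref{lem:D} and Lemma~\ref{lem:mds_cov}, the bound $T^{-1}r_t(z)A_t(z)^2\le T^{-1}(L_t+M_t)^2/\underline{e}_t$, and Chebyshev's inequality. The only cosmetic difference is that the paper keeps $D_t(z)\{Y_t(z')-\bar Y(z')\}$ as a single martingale difference and uses $\{Y_t(z')-\bar Y(z')\}^2\le 2\{Y_t(z')^2+\bar Y(z')^2\}$, whereas you split into two sums and recombine with $\sqrt{a}+\sqrt{b}\le\sqrt{2(a+b)}$.
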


\begin{proof}[\bf Proof of \lem~\ref{lem:var_DtYt}]
Let $\dt =\dtz\{Y_t(z') -\bar{Y}(z')\}$ with 
\cll{\E(\dt\mid H_t) =\{Y_t(z') -\bar{Y}(z')\}\cdot\E\{\dtz\mid H_t\} = 0}

by {\lemd}. Therefore,  
$\{\dt:\ott\}$ is {\mart}, and it follows from \lem~\ref{lem:mds_cov} that \beginy\label{eq:delta_4}
\E\left(\sumt\dt\right) = 0, \quad
\var\left(\sumt\dt\right) =\sumt\E\left\{\var(\dt\mid\hht )\right\}, 
\endy
where
\beginy\label{eq:delta_3}
\var(\dt\mid H_t) 
&=& 
\left\{Y_t(z') -\bar{Y}(z')\right\}^2\cdot\var\left\{\dtz\mid H_t\right\}\nnb\\
&\overset{\text{{\lemd}}}{=}&
T^{-1}\left\{Y_t(z') -\bar{Y}(z')\right\}^2\cdot \rtz\atz^2 
\nnb\\
&\overset{ \eqref{eq:rtz_1}}{\le}& 
 T^{-1} \cdot 2\left\{Y_t^2(z') +\bar{Y}^2(z')\right\}\cdot \ueti (\lt +\mt)^2\nnb\\
&\le& 
 2T^{-1}\left\{\lt^2+\max_{z\in\mathcal{Z}}\bar{Y}^2(z)\right\}\cdot \ueti (\lt +\mt)^2.
\endy
Plugging \eqref{eq:delta_3} into \eqref{eq:delta_4} ensures
\begina
\E\left[\sumt\dtz\left\{Y_t(z') -\bar{Y}(z')\right\}\right]
&=&\E\left(\sumt\dt\right) = 0,\\
\var\left[\sumt\dtz\left\{Y_t(z') -\bar{Y}(z')\right\}\right]
&=&\var\left(\sumt\dt\right)\\
&\le& 
2 T^{-1}\left\{\sumt\dfrac{\lt^2 (\lt +\mt) ^2 }{\uet}
+\max_{z\in\mathcal{Z}}\bar{Y}^2(z)\cdot\sumt\dfrac{ (\lt +\mt)^2 }{\uet}\right\}. 
\enda
The result then follows from \lem~\ref{lem:cheb}. 
\end{proof}

\begin{lemma}\label{lem:V_hat_aipw_diff}
\prear. Then 
\begina
\hva  -\dfrac{T}{T-1}\va  - S
 =\left\{\left(\dfrac{\psi}{T}\right)^{1/2} 
+\fn{\cva  -\va}\right\} \cdot 
\oop ,
\enda
where $\psi = T^{-1}\sumt \lmtp^4 /\uet ^3$. 
\end{lemma}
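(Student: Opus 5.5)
We start from the decomposition \eqref{eq:cov_ss}. Dividing by $T-1$ gives
\begin{align*}
\hva - S = \frac{T}{T-1}\sumt(\mdt-\bmd)(\mdt-\bmd)^\T + \frac{\tsr}{T-1}\sumt\left\{\mdt(\myt-\bmy)^\T + (\myt-\bmy)\mdt^\T\right\}.
\end{align*}
So it suffices to control three things: the quadratic term, the centering term $\bmd\bmd^\T$, and the cross term. For the quadratic term we write $\sumt(\mdt-\bmd)(\mdt-\bmd)^\T = \sumt\mdt\mdt^\T - T\bmd\bmd^\T$. We then compare $\sumt\mdt\mdt^\T$ with $\va$ through the chain $\sumt\mdt\mdt^\T - \cva + (\cva - \va)$.

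\emph{Step 1.} By \lem~\ref{lem:Dtz_Dtzp}, applied entrywise over the finitely many pairs $(z,z')$, the matrix $\sumt\mdt\mdt^\T - \cva$ has every entry of order $(\psi/T)^{1/2}\cdot\oop$, since $T^{-2}\sumt(\lt+\mt)^4/\uet^3 = \psi/T$. The difference $\cva-\va$ is $\fn{\cva-\va}$ by definition.

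\emph{Step 2.} For the centering term, \lemd\ gives $\E(T\bmd\bmd^\T) = T^{-1}\cov(\sumt\mdt) = T^{-1}\va$. So $T\bmd\bmd^\T$ is nonnegative definite with mean $T^{-1}\va$, and by Markov it is $T^{-1}\fn{\va}\cdot\oop$. I must express this in the required form. Using $\fn{\va}\le\fn{\cva}+\fn{\cva-\va}$ and $\cva=\sumt\E(\mdt\mdt^\T\mid\hht)$, the diagonal entries satisfy $\check V_{zz}\le T^{-1}\sumt(\lt+\mt)^2/\uet$. By Cauchy--Schwarz this is at most $\{T^{-1}\sumt(\lt+\mt)^4/\uet^3\}^{1/2}\{T^{-1}\sumt\uet\}^{1/2}\le\psi^{1/2}$, since $\uet\le 1$. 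Hence $T^{-1}\fn{\va}\lesssim T^{-1}\{\E\psi^{1/2}\}$, which I expect to be dominated by $(\psi/T)^{1/2}$ because $T^{-1}\psi^{1/2}\le(\psi/T)^{1/2}$ once $T\ge1$. Here $\psi$ is deterministic, since $\lt,\mt,\uet$ are worst-case quantities. Similarly, the factor $T/(T-1)-1 = O(T^{-1})$ multiplying $\va$ is absorbed the same way.

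\emph{Step 3.} The cross term is the \textbf{main obstacle}. By \lem~\ref{lem:var_DtYt}, each entry of $\sumt\mdt(\myt-\bmy)^\T$ is $T^{-1/2}\{\sumt\lt^2(\lt+\mt)^2/\uet + \max_z\by(z)^2\sumt(\lt+\mt)^2/\uet\}^{1/2}\oop$. Multiplying by $\tsr/(T-1)$ leaves $T^{-1}\{\cdots\}^{1/2}$. We bound $\lt^2(\lt+\mt)^2/\uet\le(\lt+\mt)^4/\uet^3$ because $\uet\le1$. Also $\by(z)^2\le T^{-1}\sumt\lt^2$, and then Cauchy--Schwarz gives $\max_z\by(z)^2\cdot\sumt(\lt+\mt)^2/\uet\le\{\sumt(\lt+\mt)^4/\uet^3\}^{1/2}\cdot\{T^{-1}\sumt\lt^4\}^{1/2}\cdot T^{1/2}$. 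Both pieces are then at most of order $T\psi$. This yields a cross term of $T^{-1}(T\psi)^{1/2}=(\psi/T)^{1/2}$, up to constants.

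Collecting Steps 1--3 and using that $K$ is fixed, so entrywise bounds transfer to matrix bounds, gives the claimed identity. The delicate point is making sure every constant-level comparison, in particular the one using $\uet\le 1$, is oriented correctly.
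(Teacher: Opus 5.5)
Your proposal is correct and follows essentially the same route as the paper. Starting from \eqref{eq:cov_ss}, you control $\sumt\mdt\mdt^\T-\cva$ by \lem~\ref{lem:Dtz_Dtzp}, the centering term $T\bmd\bmd^\T$ through its mean $T^{-1}\va$ together with $V_{\aipw,zz}\le\psi^{1/2}$, and the cross terms by \lem~\ref{lem:var_DtYt} combined with $\max_z\by(z)^2\le\psi^{1/2}$ and $\sumt\lmtp^2/\uet\le T\psi^{1/2}$. The differences are only cosmetic: you use Markov on the positive semidefinite centering term instead of the paper's Chebyshev argument, and your detour through $\cva$ to bound $\fn{\va}$ is unnecessary, since $V_{\aipw,zz}=\E(\check V_{\aipw,zz})$ can be bounded directly. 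Also, no $\{T/(T-1)-1\}\va$ term actually needs absorbing, because the statement already subtracts $T(T-1)^{-1}\va$.
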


\begin{proof}[\bf Proof of \lem~\ref{lem:V_hat_aipw_diff}]
Let 
$S_{zz'} = (T-1)^{-1}\sumt\{\ytz -\bar{Y}(z)\}\{\ytzp -\bar{Y}(z')\}$ denote the $(z,z')$th element of $S$. 
Equation \eqref{eq:cov_ss} implies 
\begina
(T-1)\hat{V}_{\aipw, zz'}
&=& 
T\sumt\dtz\dtzp - T^2\bar{D}(z)\bar{D}(z') 
+ 
\tsr\sumt\dtz\{\ytzp -\bar{Y}(z')\} 
\\
&& 
+ 
\tsr\sumt\dtzp\{\ytz -\bar{Y}(z)\}+ (T-1)S_{zz'}. 
\enda
This ensures
\begina
&&(T-1)\hat{V}_{\aipw, zz'} - T V_{\aipw,zz'} - (T-1) S_{zz'}\nnb\\
& =&
T\left\{\sumt\dtz\dtzp  -{\cv} _{\aipw, zz'}\right\} 
+ T\left({\cv} _{\aipw, zz'} - V_{\aipw,zz'}\right)\nnb\\
    && - T^2\bar{D}(z)\bar{D}(z') + 
\tsr\sumt\dtz\{\ytzp -\bar{Y}(z')\} 
+ 
\tsr\sumt\dtzp\{\ytz -\bar{Y}(z)\},  
\enda
so that
\beginy\label{eq:diff_Vhat_V_aipw}
&&\dfrac{T-1}{T}\left( \hat{V}_{\aipw, zz'} - \dfrac{T}{T-1} V_{\aipw,zz'} -   S_{zz'} \right) \nnb\\
& =&
 \left\{\sumt\dtz\dtzp  -{\cv} _{\aipw, zz'}\right\} 
+  \left({\cv} _{\aipw, zz'} - V_{\aipw,zz'}\right)\nnb\\
    && - T \bar{D}(z)\bar{D}(z') + 
\tsrif\sumt\dtz\{\ytzp -\bar{Y}(z')\} 
+ 
\tsrif\sumt\dtzp\{\ytz -\bar{Y}(z)\}. \qquad \quad
\endy
Below we bound each of the terms on the right-hand side of \eqref{eq:diff_Vhat_V_aipw}. Some useful facts are 
\beginy\label{eq:bound_ss}
\meant\dfrac{(\lt +\mt) ^2 }{\uet}
&\le& 
\left\{\meant\dfrac{ (\lt +\mt) ^4 }{\uet ^2}\right\}^{1/2}
\le 
\left\{\meant\dfrac{ (\lt +\mt) ^4 }{\uet ^3}\right\}^{1/2}
=  \psi ^{1/2},\nnb\\
\meant\dfrac{\lt ^2 (\lt +\mt)^2 }{\uet}
&\le& \meant \dfrac{ (\lt +\mt) ^4 }{\uet ^3} =  \psi, 
\\
|\bar{Y}(z)| 
&\leq&  T^{-1}\sumt (\lt+\mt)\nnb\\
&\le&\left\{ T^{-1}\sumt (\lt +\mt) ^4\right\}^{1/4}
\le 
\left\{ T^{-1}\sumt\dfrac{ (\lt +\mt) ^4 }{\uet ^3}\right\}^{1/4}
= 
\psi ^{1/4} \nnb,
\endy
so that  
\begina
T^{-1}\max_{z\in\mathcal{Z}} |\bar{Y}(z)|
\cdot\left\{
\sumt\dfrac{ (\lt +\mt)^2 }{\uet}
\right\}^{1/2} \overset{\eqref{eq:bound_ss}}{\leq} T^{-1}\psi ^{1/4} 
\cdot
\left(
 T \psi^{1/2}\right)^{1/2} =\left(\dfrac{\psi}{T}\right)^{1/2}.
\enda

\paragraph*{\underline{First term on the RHS of \eqref{eq:diff_Vhat_V_aipw}.}} \lem~\ref{lem:Dtz_Dtzp} ensures 
\begina
\sumt\dtz\dtzp  -{\cv} _{\aipw, zz'} 
\overset{\text{\lem~\ref{lem:Dtz_Dtzp}}}{=}
  \left\{\dfrac{1}{\tsq}\sumt\dfrac{ (\lt +\mt) ^4 }{\uet ^3}\right\}^{1/2}\cdot \oop = \left(\dfrac{\psi}{T}\right)^{1/2}\cdot \oop . 
\enda

\paragraph*{\underline{Second term on the RHS of \eqref{eq:diff_Vhat_V_aipw}.}} The definition of Frobenius norm ensures
\begina
\left|{\cv} _{\aipw, zz'} - V_{\aipw,zz'}\right| 
\leq 
\fn{\cva  -\va}. 
\enda	

\paragraph*{\underline{Third term on the RHS of \eqref{eq:diff_Vhat_V_aipw}.}} Equation \eqref{eq:wtz} ensures $\rtz < \etzi$, so that
\begina
  V_{\aipw, zz} = T^{-1}\sumt\E\left\{r_t(z)\atzs\right\}\leq T^{-1}\sumt\E\left\{\etzi \atzs\right\}\leq T^{-1}   
\sumt  \dfrac{(\lt +\mt) ^2}{\uet} \overset{\eqref{eq:bound_ss}}{\leq} \psi^{1/2}.
   \enda
This, together with Lemmas \ref{lem:D} and \ref{lem:cheb}, ensures  
\begina
T \bar{D}(z)\bar{D}(z') 
&\overset{\text{Lemmas \ref{lem:D} + \ref{lem:cheb}}}{=}&T^{-1} \cdot \sqrt{\var\left\{ T\bar{D}(z)\right\}}\cdot
\sqrt{\var\left\{ T\bar{D}(z')\right\}}\cdot
O_{\pr} (1)\\
&\overset{\text{{\lemd}}}{=}& 
T^{-1} \cdot V^{1/2}_{\aipw, zz}\cdot V^{1/2}_{\aipw, z'z'}\cdot O_{\pr} (1)\\
&=&
 T^{-1}   
\psi^{1/2}\cdot O_{\pr} (1).
\enda

\paragraph*{\underline{Fourth and fifth terms on the RHS of \eqref{eq:diff_Vhat_V_aipw}.}} \lem~\ref{lem:var_DtYt} ensures 
\begina
&& \tsrif \sumt\dtz\{\ytzp -\bar{Y}(z')\} 
\\
&=&
 T^{-1} \sqrt{
\sumt\dfrac{\lt ^2 (\lt +\mt)^2 }{\uet}
+ 
\max_{z\in\mathcal{Z}}\bar{Y}^2(z)\cdot 
\sumt\dfrac{ (\lt +\mt)^2 }{\uet}
} \cdot \oop 
\\
&=& 
T^{-1}\left\{
\sqrt{\sumt\dfrac{\lt ^2 (\lt +\mt)^2 }{\uet}} 
+ 
\max_{z\in\mathcal{Z}} |\bar{Y}(z)|
\cdot \sqrt{ 
\sumt\dfrac{ (\lt +\mt)^2 }{\uet}}
\right\}
\cdot \oop \\
&\overset{\eqref{eq:bound_ss}}{=}& T^{-1}\left\{ (T\psi)^{1/2} + T   \left(\dfrac{\psi}{T}\right)^{1/2} \right\} \cdot \oop \\
&=& \left(\dfrac{\psi}{T}\right)^{1/2} \cdot \oop. 
\enda
and, by symmetry, 
\begina
\tsrif\sumt\dtzp\{\ytz -\bar{Y}(z)\} = \left(\dfrac{\psi}{T}\right)^{1/2} \cdot \oop.
\enda
Plugging these bounds into \eqref{eq:diff_Vhat_V_aipw} ensures that for all $z,z'\in\mathcal{Z}$,
\begina
\hat{V}_{\aipw, zz'} - \dfrac{T}{T-1}\cdot V_{\aipw,zz'} - S_{zz'}
&=& 
\left\{\left(\dfrac{\psi}{T}\right)^{1/2} 
+ 
\fn{\cva  -\va}
+
\dfrac{\psi ^{1/2}}{T}
+
\left(\dfrac{\psi}{T}\right)^{1/2} \right\} \cdot \oop\\ 
&=& 
\left\{\left(\dfrac{\psi}{T}\right)^{1/2} 
+\fn{\cva  -\va}\right\} \cdot \oop. 
\enda
\end{proof}

\subsection{Proofs of the finite-sample results}
\begin{proof}[\bf Proof of \thm~\ref{thm:aipw}\eqref{item:aipw_mean_var}]
Recall from \eqref{eq:mdt} that 
$
\hmya -\bmy = T^{-1/2}\sumt\mdt. 
$
This, together with {\lemd}, ensures 
\begina
\E(\hmya -\bmy ) \oeq{\eqref{eq:mdt}}\tsrif\E\left(\sumt\mdt\right) = 0,
\quad 
\cov(\hmya -\bmy )
\oeq{\eqref{eq:mdt}} T^{-1}\cov\left(\sumt\mdt\right)
= T^{-1}\va .
\enda
\end{proof}

\begin{proof}[\bf Proof of \prop~\ref{prop:aipw_cov_est}\eqref{item:aipw_var_est_exact}, finite-sample exact expectation]
Lemma~\ref{lem:D} ensures that $\{\mdt: t =\ot{T}\}$ is {\mart}. It then follows from Lemma~\ref{lem:mds_cov} that  
\beginy\label{eq:cov_est_ss1}
\E(\mdt) = 0,\quad\E\left\{\dfrac{T}{T-1} \sumt (\mdt -\bmd) (\mdt -\bmd)^\T\right\} = \cov\left(\sumt\mdt\right) =  \va.  
\endy
This, together \eqref{eq:cov_ss}, ensures
\begina
\E\left\{(T-1)\hva\right\} 
&\overset{\eqref{eq:cov_ss}+\eqref{eq:cov_est_ss1}}{=}&
\E\left\{T\sumt (\mdt -\bmd) (\mdt -\bmd)^\T\right\} 
+ (T-1)S \\
&\overset{\eqref{eq:cov_est_ss1}}{=}& (T-1)\va + (T-1) S 
\enda
so that $\E(\hva) =\va +S$. 

The necessary and sufficient condition for $CS\ct = 0$ follows from 
\begina
CS\ct &=&  C\left\{ \dfrac{1}{T-1}\sumt (\myt -\bmy ) (\myt -\bmy )^\T \right\}\ct \\
&=&  \dfrac{1}{T-1}\sumt  (C\myt -C\bmy ) (C\myt -C\bmy )^\T \\
&=& \dfrac{1}{T-1} \sumt(\tau_{C,t} - \tc)(\tau_{C,t} - \tc)^\T.
\enda 

\end{proof}

\subsection{Proofs of the asymptotic results}	
\begin{proof}[\bf Proof of \thm~\ref{thm:aipw}\eqref{item:aipw_clt}]
Recall from \eqref{eq:mdt} that $\hmya -\bmy = T^{-1/2}\sumt\mdt$. 
The goal is to prove
\cls{( C\va \ct)^{-1/2}\cdot\sqrt{T} C(\hmya -\bmy )
 \oeq{\eqref{eq:mdt}} ( C\va \ct)^{-1/2} C\cdot\sumt\mdt \converged \snrc}

for \allcf.
By the Cram\'er--Wold theorem, it suffices to verify that, for any constant unit vector $\eta$,
\beginy\label{eq:goal_clt}
\eta^\T ( C\va \ct)^{-1/2} C\cdot\sumt\mdt =\sumt\dt\converged\mathcal{N}(0, 1),
\endy
where $\dt =\eta^\T ( C\va \ct)^{-1/2} C\cdot\mdt$. It follows from $\E(\dt\mid\hht) = \eta^\T ( C\va \ct)^{-1/2} C\cdot\E(\mdt \mid \hht) = 0$ by \lemd\ that $\{\dt: \ott\}$ is {\mart}. 
Let 
\cll{\cvt =\sumt\E(\dt^2\mid\hht ), \quad\vt =\E(\cvt ),}
with
\beginy\label{eq:vt}
\cvt & =&\eta^\T ( C\va \ct)^{-1/2} C\cdot\left\{\sumt\E(\mdt\mdt ^\T\mid\hht )\right\}\cdot\ct ( C\va \ct)^{-1/2}\eta \nnb\\
&\oeqt{{\lemd}}&\eta^\T ( C\va \ct)^{-1/2} C\cva\ct ( C\va \ct)^{-1/2}\eta, \nnb\\
\vt &=&\eta^\T ( C\va \ct)^{-1/2} C\va\ct ( C\va \ct)^{-1/2}\eta 
=\eta^\T\eta = 1
\endy
by {\lemd}. 
We verify below that $\cvt $ and $\vt $ satisfy the following conditions in the martingale central limit theorem in \lem~\ref{lem:martingale_clt}, so that \eqref{eq:goal_clt} holds:
\beginy\label{eq:conds_clt}
\begin{array}{rl}
\text{Variance convergencen:} & \cvt /\vt = 1 +\op,\medskip\\
\text{Lindeberg:} & 
\vt ^{-1}\sumt\E\left\{\dt ^2\cdot\I ( |\dt  |\geq\epsilon\vt ^{1/2} )\right\} =\op \ \ \text{for all $\epsilon>0$.}
\end{array} 
\endy 
A useful fact is 
\cll{
\begin{array}{rclcl}
\lm( C\va\ct) &=& \inf_{\|\tilde{\eta}\|_2 = 1}\tilde{\eta}^\T C\va\ct\tilde{\eta}\\
&\ge&\lm(\va) \cdot \inf_{\|\tilde{\eta}\|_2 = 1} \tilde{\eta}^\T C\ct\tilde{\eta}
&=& \lm(\va)\cdot\lm( C\ct ) 
\end{array}
}

so that 
\beginy\label{eq:lambda_min_CVCT}
\| ( C\va \ct)^{-1/2}\|_2^2\cdot\| C\|_2^2
&\overset{ \eqref{eq:l2}}{=}&
\lambda_{\max}\left\{( C\va \ct)^{-1}\right\}\cdot\lambda_{\max} (C\ct) \nnb\\
&=&
\lm ^{-1}( C\va \ct)\cdot\lambda_{\max} (C\ct)\nnb\\
&\leq&\lmai\cdot \kappa(C\ct), 
\endy 
where $\kappa(C\ct) =\lmax(\cct)/\lmin(\cct)$.

%
\underline{\textbf{Proof of the variance convergence condition $\cvt /\vt = 1 +\op$ in \eqref{eq:conds_clt}.}}

Given $\vt =1$ from \eqref{eq:vt}, it suffices to verify $\cvt - \vt = \op$. Note that
\begina 
(\cvt -\vt )^2 
&=& \tn{\cvt -\vt}^2\\
&\oeq{\eqref{eq:vt}}&
\left\|
\eta^\T ( C\va \ct)^{-1/2} C\left(\cva -\va\right)\ct ( C\va \ct)^{-1/2}\eta\right\|_2^2 
\\
&\overset{\subm}{\le}&
\| ( C\va \ct)^{-1/2}\|_2^4\cdot\| C\|_2^4\cdot \|\cva -\va\|_2^2\\
&\overset{ \eqref{eq:lambda_min_CVCT}+\eqref{eq:l2}}{\le} &
\lmap^{-2}
\cdot \kappa(C\ct) ^2\cdot \fn{\cva -\va}^2. 
\enda
Therefore, \cond~\ref{cond:aipw_clt}\eqref{item:clt_aipw_var_conv} ensures that $\cvt - \vt = \op$.

%
\underline{\textbf{Proof of the Lindeberg condition $\vt ^{-1}\sumt\E\{\dt ^2\cdot\I ( |\dt  |\geq\epsilon\vt ^{1/2} )\} =\op$ in \eqref{eq:conds_clt}.}}

Given $\vt =1$ from \eqref{eq:vt}, it suffices to verify 
\beginy\label{eq:linde_2}
\sumt\E\{\dt ^2\cdot\I ( |\dt|\geq\epsilon)\} =\op\quad\text{for all $\epsilon>0$.}
\endy
Recall from \eqref{eq:wtz}--\eqref{eq:rtz_1} that $
|\wtz| \oeq{\eqref{eq:wtz}} |\indz \cdot\etzi - 1| \overset{\eqref{eq:rtz_1}}{\leq} \uet ^{-1}
$. This implies
\begina
|\dtz | \overset{\eqref{eq:dtz}}{=}\left| \tsrif \wtz\cdot \atz\right|
\le \tsrif(\lt +\mt)/\uet,
\enda
so that
\beginy\label{eq:dtz_sum}
\tn{\mdt}^2 = \sum_{z\in\mathcal{Z}}\dtz^2 
\le 
K\cdot T^{-1}(\lt +\mt)^2/\uet^2.
 \endy 
Equations \eqref{eq:lambda_min_CVCT} and \eqref{eq:dtz_sum} together ensure
\beginy\label{eq:aipw_clt_proof_lindeberg}
\dt^2 =\tn{\dt}^2 
&\overset{ \eqref{eq:goal_clt}}{=}&\left\|\eta^\T ( C\va \ct)^{-1/2} C\cdot\mdt\right\|_2^2 \nnb\\
&\overset{\subm}{\le}&\|( C\va \ct)^{-1/2}\|_2^2\cdot\|C\|_2^2\cdot\|\mdt\|_2^2\nnb\\ 
&\overset{\text{ \eqref{eq:lambda_min_CVCT}+\eqref{eq:dtz_sum}}}{\le}&
\left[\lmai \cdot \kappa(C\ct)\right] \cdot
\left\{ K\cdot T^{-1}(\lt +\mt)^2/\uet^2 \right\} \nnb\\
&\le&
K\cdot\kappa(C\ct)\cdot \lmai 
\cdot 
 T^{-1}\maxt(\lt +\mt)^2/\uet^2.\quad
\endy
\cond~\ref{cond:aipw_clt}\eqref{item:clt_aipw_lindeberg} requires $\lmai 
\cdot 
 T^{-1}\maxt(\lt +\mt)^2/\uet^2 = \oo$, so that for any $\epsilon>0$, there exists $T_0$ such that for all $T\geq T_0$, 
\begina
\lmai 
\cdot 
 T^{-1}\maxt(\lt +\mt)^2/\uet^2  <  \dfrac{\epsilon^2}{K\cdot\kappa(C\ct)}\\
 \text{so that, from \eqref{eq:aipw_clt_proof_lindeberg},} \quad \dt ^2 \overset{\eqref{eq:aipw_clt_proof_lindeberg}}{<}  \epsilon^2 \quad\text{for all $\ott $.}
 \enda
This ensures
$\sumt\E \{\dt ^2\cdot\I ( |\dt  |\geq\epsilon) \} = 0$ for all $T\geq T_0$, which implies \eqref{eq:linde_2}. 
\end{proof}

\begin{proof}[\bf Proof of \prop~\ref{prop:aipw_sc}]
The result follows from \prop~\ref{prop:aipw_sc_app}, which we prove in Section \ref{sec:proof_sc_app}.
\end{proof}

\begin{proof}[\bf Proof of \prop~\ref{prop:aipw_cov_est}\eqref{item:aipw_var_est_asym}, large-sample results]
\lem~\ref{lem:V_hat_aipw_diff} implies 
\beginy\label{eq:var_est_2}
&&\hva  -\dfrac{T}{T-1}\va  - S \nnb \medskip \\
&=&\left\{\left(\dfrac{\psi}{T}\right)^{1/2} 
+\fn{\cva  -\va}\right\} \cdot 
\oop \nnb \\
&=& \lma \cdot \left\{\dfrac{1}{\lma}\left(\dfrac{\psi}{T}\right)^{1/2} 
+ \dfrac{1}{\lma}\cdot\fn{\cva  -\va}\right\} \cdot \oop,\quad
\endy
where $\psi = T^{-1}\sumt \lmtp^4 /\uet ^3$. Conditions~\ref{cond:aipw_cs} and \ref{cond:aipw_clt} ensure
\beginy\label{eq:op}
\dfrac{1}{\lma}\left(\dfrac{\psi}{T}\right)^{1/2} = \op, \quad \dfrac{1}{\lma}\cdot\fn{\cva  -\va} = \op,
\endy
respectively, in the middle term of \eqref{eq:var_est_2}, so that 
\cll{\hva - \va  - S = \lma \cdot \op\cdot \oop = \lma \cdot \op.} 

\end{proof}

\begin{proof}[\bf Proof of \prop~\ref{prop:aipw_cov_est_2}]
For two sequences of vectors $(a_t)_{t=1}^T$ and $(b_t)_{t=1}^T$, 
define
\cll{
\scov{a_t, b_t} = (T-1)^{-1}\sumt (a_t - \bar a)(b_t - \bar b)^\T}

as the sample covariance matrix of $(a_t,b_t)_{t=1}^T$, where $\bar a = \meant a_t$ and $\bar b = \meant b_t$. 
Let 
\beginy\label{eq:aha}
\Sigma = \scov{\myt, \hbmt}, \quad \hsig = \scov{\hmyta, \hbmt}
\endy
denote the sample covariance matrices of $(\myt, \hbmt)_{t=1}^T$ and $(\hmyta, \hbmt)_{t=1}^T$, respectively.  
Then 
\beginy\label{eq:tva}
\tva = \scov{\tmyta} 
&=&\scov{\hmyta  -  \hbmt}\nnb\\
&=&\scov{\hmyta}-\scov{\hmyta, \hbmt} -\scov{ \hbmt, \hmyta }+\scov{\hbmt}\nnb\\
&=&\hva - \hsig - \hsig^\T + \scov{\hbmt},
\endy
\beginy\label{eq:tva_S_Delta}
S = \scov{\myt} &=& \scov{\myt - \hbmt + \hbmt}\nnb\\
&=& \scov{\myt - \hbmt}+ \scov{\myt - \hbmt, \hbmt} + \scov{\hbmt, \myt - \hbmt} + \scov{\hbmt}\nnb\\
&=& \Omega + \scov{\myt , \hbmt} - \scov{\hbmt} + \scov{\hbmt, \myt } - \scov{\hbmt} + \scov{\hbmt}\nnb\\
&=& \Omega + \Sigma  + \Sigma^\T - \scov{\hbmt},\where \Omega = \scov{\myt-\hbmt}.
\endy
Equations \eqref{eq:tva}--\eqref{eq:tva_S_Delta} and  \prop~\ref{prop:aipw_cov_est} together ensure
\begina
\tva - \va &\overset{\eqref{eq:tva}}{=}&  \hva    - \hsig - \hsig^\T + \scov{\hbmt}- \va\\
&\overset{\text{\prop~\ref{prop:aipw_cov_est}}}{=}& S   - \hsig - \hsig^\T + \scov{\hbmt}+ \lm(\va) \cdot\op\\
&\overset{\eqref{eq:tva_S_Delta}}{=}& \Omega - (\hsig - \Sigma) - (\hsig - \Sigma)^\T + \lm(\va) \cdot \op,
\enda
so that it suffices to verify that 
\beginy\label{eq:goal_cov_est}
 \hsig - \Sigma = \lm(\va) \cdot \op.
 \endy
From the discussion above, we essentially provide an consistent estimation for $S-\Omega$. Therefore, such an improvement on reducing
conservativeness in variance estimation can also be applied to $\hvi$ in \eqref{eq:hvi}.
 
\paragraph*{A sufficient condition for \eqref{eq:goal_cov_est}.} Let $\bbm = \meant \hbmt$. Then 
\beginy\label{eq:aha_2}
\hsig - \Sigma &\oeq{\eqref{eq:aha}}& \scov{\hmyta-\myt, \hbmt} \nnb\\
 &=&  \dfrac{1}{T-1}\left\{\sumt (\hmyta-\myt)\cdot\hbmt^\T  - T(\hmya -\bmy)\cdot \bbm^\T\right\}\nnb\\
&=& \dfrac{T}{T-1}(\Sigma_1  -  \Sigma_2),
\endy
where
\begina
\Sigma_1 = \meant(\hmyta-\myt)\cdot\hbmt^\T, \quad \Sigma_2 =(\hmya -\bmy)\cdot \bbm^\T. 
\enda
Recall from \eqref{eq:mdt} that 
\cls{\hytaz-\ytz = \tsr\dtz, \quad \hyaz - \byz = \tsri \sumt \dtz.}

The $(z,z')$th elements of $\Sigma_1$ and $\Sigma_2$ equal 
\beginy\label{eq:a1a2}
\beginar{l}\ds
\sigozzp 
 = \meant \left\{\hytaz-\ytz\right\}\cdot \hmtzp \oeq{\eqref{eq:mdt}} \tsrif \sumt \dtz \cdot \hmtzp = \tsrif \sumt \dt,\bigskip\\
\ds \sigtzzp = \left\{ \hyaz -\byz \right\} \cdot \bar m(z') \oeq{\eqref{eq:mdt}}  \tsrif \left\{\sumt  \dtz \right\} \cdot \left\{\meant \hmtzp\right\}.
\endar
\endy 
respectively, 
where $\dt =   \dtz\cdot\hmtzp$. We verify below 
\beginy\label{eq:tva_a1a2_goal}
\sigozzp =\lm(\va) \cdot \op, \quad \sigtzzp = \lm(\va) \cdot\op 
\endy for all $z,z'\in \mz$. This, together with \eqref{eq:aha_2}, implies \eqref{eq:goal_cov_est}.  
 
\paragraph*{\underline{Proof of $\sigozzp =\lm(\va) \cdot \op$ in \eqref{eq:tva_a1a2_goal}.}} Given $\dt = \dtz\cdot\hmtzp$ from \eqref{eq:a1a2}, \lemd\ ensures 
\beginy\label{eq:tva_dt}
\beginar{rcl}
\E(\dt\mid H_t) &=& \E\{\dtz\mid\hht\}\cdot\hmtzp  \oeqt{{\lemd}}  0,\medskip\\
\var(\dt \mid \hht)
 &=& \hmtzps \cdot\var \left\{ \dtz \mid \hht \right\} \\
&\oeqt{{\lemd}}& \hmtzps \cdot T^{-1}\rtz \atzs \\
&\overset{ \eqref{eq:rtz_1}}{\leq}& T^{-1} \lmtp^4/\uet^3.
\endar
\endy
Therefore, $\dt$ is {\mart}, and it follows from \lem~\ref{lem:mds_cov} that 
\beginy\label{eq:a1}
\beginar{l}\ds
\E(\sigozzp) = \tsrif\sumt\E\left(\dt\right) = 0,\medskip\\
\ds \var(\sigozzp) = T^{-1}\var\left(\sumt \dt\right) = T^{-1}\sumt \E\left\{\var(\dt \mid \hht)\right\}
\overset{ \eqref{eq:tva_dt}}{\leq}
T^{-2}\sumt    \lmtp^4 / \uet^3,
 \endar
\endy
where $T^{-2}\sumt    \lmtp^4 / \uet  = \lm^2(\va) \cdot \oo $ under \cond~\ref{cond:aipw_cs}.  
This implies
\begina
\sigozzp \oeqt{\lem~\ref{lem:cheb} +\eqref{eq:a1}} \sqrt{\var(\sigozzp)} \cdot \oop \oeqt{\eqref{eq:a1} + \cond~\ref{cond:aipw_cs}}  \lm(\vi) \cdot \op
\enda
under \cond~\ref{cond:aipw_cs} by {\lemchebf}.

\paragraph*{\underline{Proof of $\sigtzzp = \lm(\va) \cdot \op$ in \eqref{eq:tva_a1a2_goal}.}}
\lemd\ implies   
\cll{\E\{\dtz\mid \hht\} = 0, \quad \var\left\{ \dtz\mid \hht \right\} = T^{-1}\rtz \atzs \overset{\eqref{eq:rtz_1}}{\leq} T^{-1}   \ueti  \lmts,}

so that $\E\left\{\sumt \dtz\right\} = 0$ and 
\cll{\var\left\{\sumt \dtz \right\} \oeqt{{\lemd}}\sumt \E\Big[\var\left\{ \dtz \mid \hht \right\}\Big] \overset{\eqref{eq:rtz_1}}{\leq} \meant \ueti  \lmts.}

This, together with {\lemchebf}, ensures  
\beginy\label{eq:a2_D}
\left\{\sumt  \dtz \right\}^2 = \var\left\{\sumt \dtz \right\} \cdot \oop = \left\{\meant \ueti  \lmts\right\} \cdot \oop. 
\endy
In addition, 
\beginy\label{eq:a2_m}
\left|\meant \hmtzp\right| \leq \meant \mt \leq \left(\meant \mt^2\right)^{1/2} 
\endy
by \csf. 
Plugging \eqref{eq:a2_D}--\eqref{eq:a2_m} into the expression of $\sigtzzp$ in \eqref{eq:a1a2} implies  
\begina
\sigtzzp^2 
&\overset{\eqref{eq:a1a2}}{=}& T^{-1} \left\{\sumt  \dtz \right\}^2 \cdot \left|\meant \hmtzp\right|^2\\
&\overset{\eqref{eq:a2_D} + \eqref{eq:a2_m}}{=}& T^{-1} \left\{\meant \ueti  \lmts\right\} \cdot \left(\meant \mt^2\right) \cdot \oop \\
&\leq& T^{-1} \left\{\meant \ueti  \lmts\right\}^2 \cdot \oop\\
&\overset{\text{\cs}}{\leq}& T^{-1} \left\{\meant \uet^{-2} \lmtp^4\right\} \cdot \oop\\
&\leq& T^{-2} \sumt \uet^{-3} \lmtp^4 \cdot \oop\\
&\overset{\text{\cond~\ref{cond:aipw_cs}}}{=}&\{\lma\}^2 \cdot \op. 
\enda
  
\end{proof}

 \begin{proof}[\bf Proof of \thm~\ref{thm:aipw_cs}]
We verify below the results for $\sa$. The proof of the results for $\tsa$ is identical after replacing $\hva$ by $\tva$, hence omitted. 

From \lem~\ref{lem:V_hat_aipw_diff} and \eqref{eq:lambda_min_CVCT}, under Conditions~\ref{cond:aipw_clt}--\ref{cond:aipw_cs}, for \allcf,  
\begina
&&\left\| \left(C\va \ct\right)^{-1/2} C \left\{\hva  - T/(T-1)\cdot \va  - S\right\}\ct (C\va C)^{-1/2}\right\|_2 
\\
&\overset{\subm}{\le}&  \|  (C\va \ct )^{-1/2} \|_2^2\cdot \left\|C\right\|_2^2\cdot\left\|\hva  - T/(T-1)\cdot \va  - S\right\|_2
\\
        &\overset{\eqref{eq:lambda_min_CVCT} + \eqref{eq:l2}}{\le}& 
\kappa(C\ct)\cdot\lm^{-1} (\va)\cdot\|\hva  - T/(T-1)\cdot \va  - S\|_{\textsc f}\\
&\overset{\text{\lem~\ref{lem:V_hat_aipw_diff}}}{=}& 
\kappa(C\ct)\cdot\lm^{-1} (\va)
\cdot  \left\{
 \left(\dfrac{\psi}{T}\right)^{1/2} 
+\fn{\cva  -\va}
\right\} \cdot \oop
\\
&\oeq{\eqref{eq:op}}& \op . 
\enda
This ensures
\beginy\label{eq:ci_ss}
&& \left(C\va \ct\right)^{-1/2} \left( C\hva\ct - CS\ct \right) \left(C\va \ct\right)^{-1/2} \nnb
\\
& =& \left(C\va \ct\right)^{-1/2} \left\{T/(T-1)\cdot C \va \ct \right\} \left(C\va \ct\right)^{-1/2} + \op, \nnb\\
&=& \irc + \op . 
\endy
Consequently, we have 
\beginy\label{eq:chi}
&&T (\hat{\tau}_{\aipw, C} -\tau_C )^\T \left(C\hva \ct - CS\ct\right)^{-1} 
(\hat{\tau}_{\aipw, C} -\tau_C )
\nnb\\
&=& 
\left\{\sqrt{T} \left(C\va \ct\right)^{-1/2} (\hat{\tau}_{\aipw, C} -\tau_C )\right\}^\T \nnb\\
&& \cdot 
\left\{ \left(C\va \ct\right)^{-1/2} \left( C\hva\ct - CS\ct \right) \left(C\va \ct\right)^{-1/2}\right\}^{-1} 
\nnb\\
&&\cdot 
\left\{\sqrt{T} \left(C\va \ct\right)^{-1/2} (\hat{\tau}_{\aipw, C} -\tau_C )\right\}
\nnb\\
&\overset{\eqref{eq:ci_ss}}{=}& 
\left\{\sqrt{T} \left(C\va \ct\right)^{-1/2} (\hat{\tau}_{\aipw, C} -\tau_C )\right\}^\T\cdot\big\{ \irc + \op \big\}\cdot 
\left\{\sqrt{T} \left(C\va \ct\right)^{-1/2} (\hat{\tau}_{\aipw, C} -\tau_C )\right\}
\nnb\\
& =&
\big\|\sqrt{T} \left(C\va \ct\right)^{-1/2} (\hat{\tau}_{\aipw, C} -\tau_C )\big\|_2^2 + \op 
\converged\chi^2_\rc, 
\endy
where $\chi^2_\rc$ denotes the chi-squared distribution with $\rc$ degrees of freedom, and the last equality and the last convergence in \eqref{eq:chi} hold due to \thm~\ref{thm:aipw}\eqref{item:aipw_clt} and Slutsky's theorem. 

This, together with $(C\hva\ct )^{-1}\leq  (C\hva \ct - CS\ct )^{-1}$,
 ensures
\begina
&&\liminf_{T\to\infty}\pr\left\{ T (\hta -\tau_C )^\T \left(C\hva\ct\right)^{-1} 
(\hta -\tau_C )\le\chisqa\right\}
\\
&\ge& 
\lim_{T\to\infty}\pr\left\{ T (\hta -\tau_C )^\T \left(C\hva\ct - C S\ct\right)^{-1} 
(\hta -\tau_C )\le \chisqa\right\}\\
&\overset{\eqref{eq:chi}}{=}&
1-\alpha. 
\enda
\end{proof}

\section{Proof of the result in Section \ref{sec:block_ad}}\label{sec:block_app}
\def\prophvab{\prop~\ref{prop:hvab}}
\def\bt{b_t}
\def\bmat{\bm a_t}
\def\bba{\bar{\bm a}}
\def\bmct{\bm c_t}
\def\bbc{\bar{\bm c}}
All results on block adaptive randomization in \sec~\ref{sec:block_ad}, except those on $\hvab$ and $\tvab$ in \sec~\ref{sec:block_covb}, follow directly from the theory for the AIPW estimator under unit-level \ar\ in Section~\ref{sec:aipw} by treating each group as a unit and renewing (i) $\ytz$ as $\rhot \bar Y_t(z)$, where $\rhot = T \cdot \pt = (T/N) \cdot n_t$ and $\bar Y_t(z) = \meani \ytiz$, and (ii) $\hyta(z)$ as $\rhot \cdot \meani\hytiaz$. 

We verify below \prophvab\ on $\hvab$  and then provide intuition for  \beginy\label{eq:tvab_intuition}
\tvab = T\cov(\hmya) + \cov_b (\bmyt-\hbmt ) + \op, 
\endy
where $\cov_b(\bmyt-\hbmt)$ is define as follows. For tuples of vectors $\{(\bmat, \bmct)\}_{t=1}^T$, 
let
\cll{
\cov_b (\bmat, \bmct)=\sumt \bt \left(\bmat -\bba \right)\left(\bmct -\bbc\right)^{\top},\quad \cov_b(\bmat) = \cov_b(\bmat, \bmat) 
}

where $\bba = \sumt \pt \bmat $, $\bbc = \sumt \pt \bmct$, and 
\beginy\label{eq:bt_app}
  b_t = T \cdot \dfrac{\pt^2/(1-2\pt)}{1 +  \sums \ps^2/(1-2\ps)}\quad\text{with $\pt = \nt/N$},
\endy
as defined in the main paper. Then 
\beginy\label{eq:hvab_app}
\begin{array}{l}
\hvab =\ds \sumt  \bt \left(\hmyta -\hmya \right)\left(\hmyta -\hmya \right)^\T = \cov_b(\hmyta),\\
\tvab =\ds \sumt \bt \left(\tmyta -\tmya \right)\left(\tmyta -\tmya \right)^\T  = \cov_b(\tmyta). 
\end{array} 
\endy
Assume the \br\ in Definition~\ref{def:block_ad} throughout this section.

\subsection{\bf Proof of \prophvab}
\prophvab\eqref{item:hva_bt} follows from the definition of $b_t$ and the assumption of $\pt  < 1/2$. 
We verify below \prophvab\eqref{item:hva_i}--\eqref{item:hva_ii}, respectively.
\prophvab\eqref{item:hva_iii} then follows from \prophvab\eqref{item:hva_bt} and \eqref{item:hva_ii}.  

\paragraph*{Proof of \prophvab\eqref{item:hva_i}.} When blocks are of equal sizes, $\pt=1/T$ for all $1 \leq t \leq T$, which implies that $b_t = 1/(T-1)$ for $\ott$ so that $\hvab = \hva$.

\paragraph*{Proof of \prophvab\eqref{item:hva_ii}.} Write
\beginy
\hvab &\oeq{\eqref{eq:hvab_app}}&\underbrace{\sumt \bt \hmyta \hmyta ^\T}_{B_1} +
\underbrace{ \sumt \bt \hmya \hmya ^\T}_{B_2} -
\underbrace{ \sumt \bt \hmyta \hmya ^\T}_{B_3}  
- \sumt \bt  \hmya \hmyta ^\T \nnb\\
&=& B_1 + B_2 - B_3 - B_3^\T,\label{eq:hvab_decomp}
\endy
where
\begina
B_1 = \sumt \bt \hmyta \hmyta ^\T, \quad B_2 =  \sumt \bt \hmya \hmya ^\T, \quad B_3 = \sumt \bt \hmyta \hmya ^\T.
\enda
Recall that $\myti = (Y_{ti}(1), \ldots, Y_{ti}(K))^\T$ denote the potential outcomes vector for unit $ti$, and $\bmyt = \meani \myti = (\byt(1), \ldots, \byt(K))^\T$ denote the group average within block $t$.
Let
\begina
\bmy = (\by(1), \ldots, \by(K))^\T = \sumt \pt \cdot \bmyt
\enda 
denote the average potential outcomes vector for all units.
It follows from 
\beginy\label{eq:hvab_ss}
\begin{array}{lll}
\E(\hmyta) = \bmyt, &&\E(\hmya) = \bmy, \\
\hmya = \ds\sumt \pt \cdot \hmyta, &&\cov(\hmyta , \hat{\bm Y}_{t',\aipw})=0 \ ( t\neq t' )
\end{array}
\endy
that
\beginy\label{eq:e123}
\begin{array}{c}
\begin{array}{lllll}
\E(B_1)&\oeq{\eqref{eq:hvab_decomp}}& \ds\sumt \bt \cdot \E(\hmyta\hmyta^\T) &\oeq{\eqref{eq:hvab_ss}}& \ds\sumt \bt \cdot \cov(\hmyta) + \sumt \bt  \bmyt\bmyt^\T,\medskip\\ 
\E(B_2)&\oeq{\eqref{eq:hvab_decomp}}& \ds \sumt \bt \cdot \E(\hmya\hmya^\T) &\oeq{\eqref{eq:hvab_ss}}&\ds \sumt \bt \cdot \cov(\hmya)+ \sumt \bt \bmy \bmy^\T,
\end{array}\bigskip\\
\begin{array}{rcl}
\E(B_3)
&\oeq{\eqref{eq:hvab_decomp}}&\ds\E\left\{\left(\sumt \bt \hmyta \right)  \cdot \hmya \right\}\medskip\\
&\oeq{\eqref{eq:hvab_ss}}&\ds \cov\left(\sumt \bt  \hmyta, \hmya \right)+
\left(\sumt \bt  \cdot \bmyt \right) \cdot \bmy^\T\medskip\\
&\oeq{\eqref{eq:hvab_ss}}&\ds \cov\left(\sumt \bt  \hmyta , \sumt  \pt \hmyta \right)+
 \sumt \bt \bmyt \bmy^\T\medskip\\
&=&\ds\sumt \bt  \pt \cdot \cov(\hmyta)+  \sumt\bt \bmyt \bmy^\T.
\end{array}
\end{array}
\endy
In addition, let $B_0 = \sumt \pt^2/(1-2\pt)$ to write  
\beginy\label{eq:bt_2}
\bt \oeq{\eqref{eq:bt_app}} \dfrac{T}{1+B_0}\cdot \dfrac{\pt^2}{1-2\pt},
\quad b_t (1-2\pt)
=
\frac{T }{1+B_0} \cdot \pt^2, \quad \sumt \bt = \dfrac{T}{1+B_0}\cdot B_0 . 
\endy
This implies 
\beginy\label{eq:b4}
B_4 &\equiv& \sumt \bt \left(1-2\pt\right) \cdot \cov(\hmyta)+\sumt \bt  \cdot \cov(\hmya) \nnb\\
 &\oeq{\eqref{eq:bt_2}}& \ds\frac{T}{1+B_0} \cdot \sumt \pt^2 \cdot \cov(\hmyta)+ \dfrac{TB_0}{1+B_0} \cdot \cov(\hmya) \nnb\\
 &=& \ds\frac{T}{1+B_0} \cdot \cov(\hmya)+\dfrac{TB_0}{1+B_0}\cdot \cov(\hmya) \nnb\\
 &=& T \cdot \cov(\hmya).
\endy
Equations \eqref{eq:hvab_decomp}, \eqref{eq:e123}, and \eqref{eq:b4} ensure
\begina
\E(\hvab)
&\oeq{\eqref{eq:hvab_decomp}+\eqref{eq:e123}}&\underbrace{\sumt \bt \bmyt \bmyt^{\top}
+\sumt \bt  \bmy \bmy^\T -\sumt\bt \bmyt \bmy^\T - \sumt\bt \bmy\bmyt^\T }_{\sumt \bt (\bmyt -\bmy) (\bmyt -\bmy )^\T }\\
&&+\underbrace{\sumt \bt \left(1-2\pt\right) \cdot \cov(\hmyta)+\sumt \bt  \cdot \cov(\hmya)}_{\oeq{\eqref{eq:b4}} B_4}\\
&\oeq{\eqref{eq:b4}}& \sumt \bt (\bmyt -\bmy) (\bmyt -\bmy )^\T + T \cdot \cov\left(\hmya \right)
\enda

\subsection{Intuition for \eqref{eq:tvab_intuition}: $
\tvab = T\cov(\hmya) + \cov_b(\bmyt-\hbmt) + \op.$}
The definition of $\cov_b(\cdot, \cdot)$ and \eqref{eq:hvab_app} ensure 
\beginy\label{eq:tvab}
\tvab  &\oeq{\eqref{eq:hvab_app}}& \cov_b(\tmyta) \nnb \\
&=& \cov_b(\hmyta -\hbmt ) \nnb \\
&=& \cov_b(\hmyta )+\cov_b(\hbmt )-\cov_b(\hmyta , \hbmt )-\cov_b(\hbmt , \hmyta )\nnb \\
&\oeq{\eqref{eq:hvab_app}}& \hvab +\cov_b(\hbmt )-\cov_b(\hmyta , \hbmt )-\cov_b(\hbmt , \hmyta ),\\
\label{eq:covb_bmyt}
\cov_b (\bmyt  ) &=& \cov_b(\bmyt - \hbmt  + \hbmt  ) \nnb\\
&=& \cov_b(\bmyt  - \hbmt ) 
+ \cov_b(\hbmt ) 
+ \cov_b(\bmyt  - \hbmt , \hbmt ) 
+ \cov_b(\hbmt , \bmyt  - \hbmt )\nnb \\
&=& \cov_b(\bmyt  - \hbmt ) 
 - \cov_b(\hbmt ) 
+ \cov_b(\bmyt  , \hbmt ) 
+ \cov_b(\hbmt , \bmyt ).  
\endy
Let $\Delta = \hvab - \cov_b(\bmyt)$ with $\E(\Delta) = T\cov(\hmya)$ by \prophvab\eqref{item:hva_ii}. Equations~\eqref{eq:tvab}--\eqref{eq:covb_bmyt} ensure
\begina
\tvab  -\cov_b (\bmyt-\hbmt ) 
&=&    \tvab  -\hvab  + \cov_b(\bmyt) - \cov_b(\bmyt-\hbmt)  + \left\{\hvab - \cov_b(\bmyt)\right\}  \\ 
&\oeq{\eqref{eq:tvab}+\eqref{eq:covb_bmyt}}&  \cov_b(\hbmt )-\cov_b(\hmyta , \hbmt )-\cov_b(\hbmt , \hmyta )  \\
&&  
 - \cov_b(\hbmt ) 
+ \cov_b(\bmyt  , \hbmt ) 
+ \cov_b(\hbmt , \bmyt )   + \Delta\\
&=&  
\Delta- \cov_b(\hmyta  - \bmyt , \hbmt ) 
- \cov_b(\hbmt , \hmyta  - \bmyt ) \\
&=& \Delta -B- B^\T,
\enda 
where $B  = \cov_b (\hmyta  - \bmyt , \hbmt  )$, 
with  
\begina
\E(\tvab)  -\cov_b(\bmyt-\hbmt) 
&=& \E(\Delta) -\E(B) - \E(B)^\T \\
&\oeq{\text{\prophvab\eqref{item:hva_ii}}}& T\cov(\hmya) -\E(B) - \E(B)^\T.  
\enda
Renew $\bbm =  \sumt  \pt \hbmt = \meant \rhot \hbmt$. 
Then
\begina
B &=& \cov_b(\hmyta  - \bmyt , \hbmt )\\
 &=&\sumt \bt \left\{ (\hmyta -\bmyt )- (\hmya -\bmy)\right\}(\hbmt -\bbm) \\
&=& \sumt \underbrace{\bt (\hmyta -\bmyt )(\hbmt -\bbm)}_{\text{unbiased for 0}}-\underbrace{(\hmya -\bmy)}_{\text{close to 0}} \cdot \sumt \bt (\hbmt -\bbm),
\enda
so that intuitively $B = \op$.

\section{Proofs of the results in \sec~\ref{sec:sc_app}}	\label{sec:proof_sc_app}

Recall that 
\begina
\begin{array}{llllll}
\ds \lt =\maxz |\ytz|,&&\ds \uet =\min_{H_t,\, \ziz}\etz, &&\ds \mt=\max_{H_t,\, \ziz} |\hmtz|,\medskip\\
&&\ds v_t =\maxz\var\{\etzi\}, &&\ds \omt =\maxz\var\{\hmtz\}.
\end{array}
\enda
Further let 
\beginy\label{eq:gt}
\gamma_t =\maxz\var\left\{\dfrac{\atzs}{\etz}\right\},\quad
\phit =\max_{z,z'\in\mz}\var\left\{\atz\atzp\right\}.  
\endy
\paragraph*{Useful facts.} Let $\infn{\cdot}$ denote the maximum value of a random variable. 
For random variables $X_1, \ldots, X_m \in \mbr$ with finite variances, standard results ensure 
\beginy\label{eq:var_lem}
\beginar{rcl}
\ds\var\left(\sum_{i=1}^m X_i\right) &\leq& m \ds\sum_{i=1}^m \var(X_i),\medskip\\
\var(X_1 X_2)&\leq& 2 \cdot \var(X_1)\cdot\infns{X_2} +2 \cdot \var(X_2)\cdot\infns{X_1}.
\endar
\endy 
%

\subsection{Lemmas}  

\begin{lemma}\label{lem:v1}
\prear. Let
\beginy\label{eq:v1_def}
\voz=\meant\left[\dfrac{\atzs}{\etz } -\E\left\{\dfrac{\atzs}{\etz }\right\}\right] \quad \text{for} \ \ \ziz.
\endy
As $T\to\infty$, we have 
\cls{\var(\voz) = o(1) \ \ \text{for all $\ziz$}}

if any of the following conditions holds:
\begine[(i)]
\item\label{item:v1_i}
 $\meant\gamma_t = o(1)$.

\item\label{item:v1_ii}
$\meant\left\{\maxs ( L_s+ M_s)^2/\ue_s\right\} \cdot \lmtp^2 \cdot\sqrt{v_t} =\oo$; 

\prelimdeps $
\meant\left(\maxs\gamma_s +\phit\right)\cdot\uet^{-1} = o(T^{1-\beta})$ and, \limdept, $\{\hmtz:\ziz\}$ \limdep.

\item\label{item:v1_iii} 
$T^{-2}\sumt\gamma_t = o(1)$; 

$\meant \left\{\maxs ( L_s+ M_s)^2/\ue_s\right\}\cdot\sqrt{\phit}\cdot\uet^{-1} =\oo$;

\prelimdeps $
\meant \lmtp^2\cdot \left(\maxs\gamma_s + v_t\right) = o(T^{1-\beta})$ and, \limdept, $\{\etz:\ziz\}$ \limdep.

\item\label{item:v1_iv} 
\prelimdepsc $\meant\maxs\gamma_s = o(T^{1-\beta}) $ and, \limdept, $\{\etz,\hmtz:\ziz\}$ \limdep.
\ende 

\end{lemma}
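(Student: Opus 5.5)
We start from the identity $\var(\voz) = T^{-2}\sum_{t}\sum_{s}\cov(U_t,U_s)$, where $U_t = \atzs/\etz - \E\{\atzs/\etz\}$, so that $\var(U_t)\le\gamma_t$. Throughout, the bound $\var(U_t)\le\gamma_t$ will handle the diagonal terms, and the whole task is to control the cross-covariances $\cov(U_t,U_s)$ for $s<t$ under each of the four conditions.

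\textbf{Case (i).} Apply Cauchy--Schwarz to each cross term, or more simply use $\var(\meant U_t)\le \meant \var(U_t)$, which follows from \eqref{eq:var_lem} with $m=T$ after dividing by $T^2$. This gives $\var(\voz)\le \meant\gamma_t = o(1)$.

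\textbf{Case (iv).} For $t>T_0$, write $d_t=\min\{t-1,c_0t^\beta\}$. The pair $(\etz,\hmtz)$ is a function only of $(Z_{t-d_t},\dots,Z_{t-1})$ together with fixed quantities, and $\atz = \ytz-\hmtz$, so $U_t$ is a function of those assignments as well. For $s<t-d_t$ we would like $\cov(U_s,U_t)=0$. This needs care, because $Z$'s are not independent under adaptive randomization. The plan is to decompose $U_t$ into martingale differences, $U_t=\sum_{k}\{\E(U_t\mid H_{k+1})-\E(U_t\mid H_k)\}$, where only the indices $k\in[t-d_t,t-1]$ contribute. Orthogonality of these increments then makes $\cov(U_s,U_t)$ vanish unless the two windows overlap, i.e.\ unless $|t-s|\lesssim c_0 T^\beta$. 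Each row therefore has $O(T^\beta)$ nonzero covariances, each bounded by $\sqrt{\gamma_s\gamma_t}\le \max_{s\le t}\gamma_s$. This yields $\var(\voz)\lesssim T^{-2}\sum_t T^\beta\max_{s\le t}\gamma_s = o(1)$ by the stated rate, with the first $T_0$ terms contributing $O(T^{-1})$.

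\textbf{Cases (ii) and (iii).} These mix the two mechanisms. In (ii) we decompose
\[
\atzs/\etz = \atzs/e^*_t(z) + \atzs\{\etzi - e^*_t(z)^{-1}\},
\]
where $e^*_t(z)^{-1}=\E\{\etzi\}$ is deterministic. The first piece depends on $\hmtz$ only, so it has limited-range dependence and is handled as in (iv), using $\phit$ and $\gamma_s$ to bound variances after scaling by $\uet^{-1}$. For the second piece we bound $\E|\cdot|$ by $(L_t+M_t)^2\sqrt{v_t}$ via Cauchy--Schwarz. Its covariance with earlier $U_s$ is then bounded by $\infn{U_s}\cdot (L_t+M_t)^2\sqrt{v_t}$ with $\infn{U_s}\lesssim\max_{s\le t}(L_s+M_s)^2/\ue_s$, which the first displayed condition makes $o(1)$ after averaging. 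Case (iii) is symmetric: we freeze $\hmtz$ at its mean, so that $\atz\atzp$ fluctuations enter via $\sqrt{\phit}$, and use limited-range dependence of $\etz$ for the remaining piece; the diagonal term is handled by $T^{-2}\sum\gamma_t=o(1)$.

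The main obstacle is the limited-range step, namely justifying that covariances vanish beyond the window when assignments are adaptively dependent. We expect the martingale-increment (Doob) decomposition to be the right tool for this, but bookkeeping the window lengths and the extra $\uet^{-1}$ factors in (ii)--(iii) will be delicate.
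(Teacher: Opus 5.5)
Everything except the out-of-window step matches the paper's proof:
\begin{itemize}
\item (i) is handled via $\var(V_{1,z})\le T^{-1}\sum_t\gamma_t$.
\item (iv) is handled by counting at most $c_0t^\beta$ within-window covariances for each $t$, each bounded by $\max_{s\le t}\gamma_s$.
\item (ii)--(iii) are handled by freezing $\E\{e_t(z)^{-1}\}$ (resp.\ $\E\{A_t(z)^2\}$) and bounding the covariance of the remainder with any earlier $U_k$ by $\|U_k\|_\infty$ times $(L_t+M_t)^2\sqrt{v_t}$ (resp.\ $\sqrt{\phi_t}\,\underline{e}_t^{-1}$).
\end{itemize}
In (iii), freeze $A_t(z)^2$ itself, as the paper does. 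Freezing $\hat m_t(z)$ leaves a remainder of size $(L_t+M_t)\sqrt{\omega_t}$, which the hypothesis on $\phi_t$ does not control.

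The genuine gap is the step you flagged, and your Doob argument does not close it. For $k<t-d_t$ the increment $\E(U_t\mid H_{k+1})-\E(U_t\mid H_k)$ is in general nonzero. The term $\E(U_t\mid H_{k+1})$ is the conditional mean of a function of $(Z_{t-d_t},\dots,Z_{t-1})$ given $Z_1,\dots,Z_k$. The conditional law of these later assignments depends on $Z_k$ through $e_{k+1}(\cdot),e_{k+2}(\cdot),\dots$. Disjoint windows therefore do not give orthogonal increments, and $\cov(U_k,U_t)$ need not vanish. Concretely, take $K=2$, $\hat m_t\equiv0$, $Y_t(1)\equiv1$, $e_1(1)=1/2$, and $e_t(1)=0.9$ or $0.1$ according as $Z_{t-1}=1$ or not (window length one). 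The assignments form a Markov chain, and $\cov\{e_k(1)^{-1},e_t(1)^{-1}\}=\tfrac14(80/9)^2\,0.8^{t-k}\neq0$ for $2\le k<t$.

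Following the paper does not rescue this. Its proof simply asserts $\cov\{A_k(z)^2/e_k(z),A_t(z)^2/e_t(z)\}=0$ for $k<t-c_0t^\beta$, and the analogous identities for the frozen pieces in (ii)--(iii); this is the same unjustified claim. For $\beta>0$ it cannot be repaired by bookkeeping, because the conclusion itself can fail under the stated hypotheses. Here is a construction:
\begin{itemize}
\item Take $c\in(0,1/2)$, $\hat m_t\equiv0$, $Y_t(1)\equiv1$.
\item For $2\le t\le T_0$, let $e_t(1)=1-c$ if $Z_1=1$ and $c$ otherwise.
\item For $t>T_0$, let $e_t(1)=1-c$, $c$, or $1/2$ according as more than, fewer than, or exactly half of the last $d_t=\lfloor c_0t^\beta\rfloor$ assignments equal $1$.
\end{itemize}
Write $Z_t=\I\{U_t<e_t(1)\}$ with i.i.d.\ uniforms $U_t$ and apply Hoeffding's inequality to each window. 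The regime set by $Z_1$ then persists forever outside an event of probability at most $\sum_{t>T_0}2\exp\{-2d_t(1/2-c)^2\}$, which is small for large $T_0$ since $\beta>0$. So $V_{1,1}$ stays near two values separated by $c^{-1}-(1-c)^{-1}$, each on an event of probability bounded away from zero. Hence $\var(V_{1,1})\not\to0$, even though $\gamma_t\le c^{-2}$ and (iv) holds.

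A valid proof therefore needs an extra hypothesis that bounds, rather than zeroes, the out-of-window covariances. One option is a mixing rate for $(Z_t)$; for $\beta=0$ with $\underline{e}_t$ bounded away from zero, a Doeblin argument gives geometric mixing. Another is independence of each window from the pre-window history.
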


\begin{proof}[\bf Proof of \lem~\ref{lem:v1}]
Recall that $\gt = \maxz\var\left\{\dfrac{\atzs}{\etz}\right\}$ from \eqref{eq:gt}. This ensures 
\begina
\var(\voz) \oeq{\eqref{eq:v1_def}} T^{-2}\var\left\{\sumt\dfrac{\atzs}{\etz} \right\} 
\overset{\eqref{eq:var_lem}}{\leq} \meant\var\left\{\dfrac{\atzs}{\etz}\right\} 
\leq\meant\gt,
\enda
which verifies the sufficiency of \lem~\ref{lem:v1}\eqref{item:v1_i}. 

We verify below the sufficiency of \lem~\ref{lem:v1}\eqref{item:v1_ii}--\eqref{item:v1_iv}, respectively. 
For their respective $T_0$,
a useful decomposition is 
\beginy\label{eq:decomp_v1}
T^2\var(\voz) &\oeq{\eqref{eq:v1_def}}& \ds \var\left\{\sumt\dfrac{\atzs}{\etz}\right\} \nnb\\ 
& =& \ds \sumt\var\left\{\dfrac{\atzs}{\etz}\right\}+ 2\sum_{t=2}^\T \sum_{k=1}^{t-1}\cov\left\{\dfrac{\bkzs }{\ekz },\dfrac{\atzs}{\etz} \right\} \nnb\\ 
& =& \ds \sumt\var\left\{\dfrac{\atzs}{\etz}\right\}  + 2\sum_{t=2}^{T_0}\sum_{k=1}^{t-1}\cov\left\{\dfrac{\bkzs }{\ekz },\dfrac{\atzs}{\etz} \right\} 
 \nnb\\ 
&&\ds  + 2\sum_{t=T_0 + 1}^\T \sum_{k=1}^{t-1}\cov\left\{\dfrac{\bkzs }{\ekz },\dfrac{\atzs}{\etz} \right\}  \nnb\\ 
&=& S_1 + 2 S_2 + 2 S_3,
\endy
where
\begina
S_1 = \sumt\var\left\{\dfrac{\atzs}{\etz}\right\}, \quad S_2 = \sum_{t=2}^{T_0}\sum_{k=1}^{t-1}\cov\left\{\dfrac{\bkzs }{\ekz },\dfrac{\atzs}{\etz} \right\}, \quad S_3 =\sum_{t=T_0 + 1}^\T \sum_{k=1}^{t-1}\cov\left\{\dfrac{\bkzs }{\ekz },\dfrac{\atzs}{\etz} \right\}. 
\enda
Note that $S_2$ is a fixed finite number and therefore satisfies $S_2 = o(T^2)$. Given \eqref{eq:decomp_v1}, a sufficient condition for $\var(\voz)=\oo$ is 
\beginy\label{eq:v1_goal}
S_1 = o(T^2), \quad S_3 = o(T^2). 
\endy
We verify below this sufficient condition \eqref{eq:v1_goal} under \lem~\ref{lem:v1}\eqref{item:v1_ii}--\eqref{item:v1_iv}, respectively. 

\paragraph*{\underline{Proof of \eqref{eq:v1_goal} under \lem~\ref{lem:v1}\eqref{item:v1_ii}}.}
First, the definition of $\gt$ in \eqref{eq:gt} implies 
\beginy\label{eq:v1_ii_1}
\var\left\{\dfrac{\atzs}{\etz}\right\} \overset{\eqref{eq:gt}}{\leq} \gamma_t\leq \left(\maxs\gamma_s +\phit\right)
\cdot \uet^{-1},
\endy
so that 
\begina
S_1 \oeq{\eqref{eq:decomp_v1}} \sumt\var\left\{\dfrac{\atzs}{\etz}\right\} \overset{\eqref{eq:v1_ii_1}}{\leq} \sumt  \left(\maxs\gamma_s +\phit\right)\cdot \uet^{-1} \oeqt{\lem~\ref{lem:v1}\eqref{item:v1_ii}} \ottb = o(T^2).
\enda
Next, let $\detzi =\etzi - \E\{\etzi\}$ to write
\beginy\label{eq:v1_2_2}
\cov\left\{\dfrac{\bkzs }{\ekz },\dfrac{\atzs}{\etz} \right\}
&=& \cov\left\{\dfrac{\bkzs }{\ekz }, \atzs \cdot \Big[\E\{\etzi\} + \detzi\Big] \right\} \nnb\\
&=&\cov\left\{\dfrac{\bkzs }{\ekz },\atzs\right\}\cdot\E\{\etzi\} + \cov\left\{\dfrac{\bkzs }{\ekz } ,\atzs\cdot\detzi\right\}, \quad \qquad
\endy
and 
\beginy\label{eq:sigma12_2}
S_3 \oeq{\eqref{eq:decomp_v1}} \sum_{t=T_0 + 1}^\T \sum_{k=1}^{t-1}\cov\left\{\dfrac{\bkzs }{\ekz },\dfrac{\atzs}{\etz} \right\} \overset{\eqref{eq:v1_2_2}}{=} \Sigma_1 + \Sigma_2,
\endy
where
\begina
\Sigma_1 = \sum_{t=T_0+1}^\T \sum_{k=1}^{t-1}\cov\left\{\dfrac{\bkzs }{\ekz },\atzs\right\} \cdot \E\{\etzi\},\quad
\Sigma_2 = \sum_{t=T_0+1}^\T \sum_{k=1}^{t-1}\cov\left\{\dfrac{\bkzs }{\ekz } ,\atzs\cdot\detzi\right\}.  \quad 
\enda
We show below 
\beginy\label{eq:sig12_2_goal}
\Sigma_1 = o(T^2), \quad \Sigma_2 = o(T^2),
\endy
respectively, which together ensure $S_3 = o(T^2)$ from \eqref{eq:sigma12_2}. 

\paragraph*{Proof of $\Sigma_1 = o(T^2)$ in \eqref{eq:sig12_2_goal}.}
Assume \wlg, $\{\hmtz:\ziz\}$ depend only on the previous $\min\{t-1, c_0 t^\beta\} = c_0 t^\beta$ time points. 
For $t> T_0$, we have
\beginy\label{eq:v1_ii_ss}
\begin{array}{lllll}
\cov\left\{\dfrac{\bkzs }{\ekz },\atzs\right\} &=& 0& \text{for} \ k < t-c_0t^\beta,\bigskip\\
\cov\left\{\dfrac{\bkzs }{\ekz },\atzs\right\} 
&\leq& 2^{-1}\left[\var\left\{\dfrac{\bkzs }{\ekz }\right\} +\var\left\{\atzs\right\}\right]\\
&\overset{\eqref{eq:gt}}{\leq}& 2^{-1}\left(\maxs\gamma_s +\phit\right)&\text{for} \ k \in [t -c_0 t^\beta, t).
\end{array}
\endy
This ensures
\begina
\Sigma_1&\overset{\eqref{eq:sigma12_2}}{=}&\sum_{t=T_0+1}^\T \sum_{k=1}^{t-1}\cov\left\{\dfrac{\bkzs }{\ekz },\atzs\right\} \cdot \E\{\etzi\}  \\ 
&\overset{\eqref{eq:v1_ii_ss}}{=}&\sum_{t=T_0+1}^\T \sum_{k=t-c_0t^\beta}^{t-1}\cov\left\{\dfrac{\bkzs }{\ekz },\atzs\right\} \cdot \E\{\etzi\} \\
&\overset{\eqref{eq:v1_ii_ss}}{\leq}&\sum_{t=T_0+1}^\T  c_0 t^\beta\cdot  2^{-1}\left(\maxs\gamma_s +\phit\right)\cdot\uet^{-1}\\
&\leq& 2^{-1}c_0 T^\beta\cdot\sumt  \left(\maxs\gamma_s +\phit\right)\uet^{-1} \\
&\oeqt{\lem~\ref{lem:v1}\eqref{item:v1_ii}}& c_0 T^\beta \cdot \ottb = o(T^2).
\enda

\paragraph*{Proof of $\Sigma_2 = o(T^2)$ in \eqref{eq:sig12_2_goal}.} 
For $k < t$, we have
\beginy\label{eq:v1_2_3}
&&\left|\cov\left\{\dfrac{\bkzs }{\ekz } ,\atzs\cdot\detzi\right\}\right|\nnb\\
&=&\left|\E\left\{\dfrac{\bkzs }{\ekz } \cdot \atzs\cdot\detzi\right\}  -  \E\left\{\dfrac{\bkzs }{\ekz }\right\} \cdot\E\left\{\atzs\cdot\detzi\right\}\right| \nnb\\
&\leq&\E\left\{\dfrac{\bkzs }{\ekz }\cdot\atzs\cdot\left|\detzi\right|\right\} +\E\left\{\dfrac{\bkzs }{\ekz }\right\}\cdot\E\Big\{\atzs\cdot\left|\detzi\right|\Big\}\nnb\\
&\leq&\infn{\dfrac{\bkzs }{\ekz }}\cdot\infn{\atzs} \cdot\E\left\{ \left|\detzi\right|\right\}+\infn{\dfrac{\bkzs }{\ekz }}\cdot\infn{\atzs}\cdot\E\left\{\left|\detzi\right| \right\}\nnb\\
&{\leq}& 2\cdot\left\{\maxs ( L_s+ M_s)^2/\ue_s\right\} \cdot \lmtp^2 \cdot \sqrt{v_t},
\endy
where the last inequality follows from 
$
\E\left\{ |\detzi |\right\}\leq\sqrt{\E\left\{ |\detzi |^2\right\}} =\sqrt{\var\{\etzi\}}\leq\sqrt{v_t}.$
This ensures
\begina
 T^{-2}\left|\Sigma_2\right| 
 &\overset{\eqref{eq:sigma12_2}}{=}& T^{-2}\left|\sum_{t=T_0+1}^\T \sum_{k=1}^{t-1}\cov\left\{\dfrac{\bkzs }{\ekz } ,\atzs\cdot\detzi\right\}\right|\\ 
 &\leq& 
 T^{-2}\sumt \sum_{k=1}^{t-1} \left|\cov\left\{\dfrac{\bkzs }{\ekz } ,\atzs\cdot\detzi\right\}\right|\\
&\overset{\eqref{eq:v1_2_3}}{\leq}& T^{-2}\sumt  T \cdot 2\cdot\left\{\maxs ( L_s+ M_s)^2/\ue_s\right\} \cdot \lmtp^2 \cdot\sqrt{v_t}\\
&\oeqt{\lem~\ref{lem:v1}\eqref{item:v1_ii}}& o(1).
\enda

\paragraph*{\underline{Proof of \eqref{eq:v1_goal} under \lem~\ref{lem:v1}\eqref{item:v1_iii}}.}
First, the definition of $\gt$ in \eqref{eq:gt} ensures
\begina
S_1 \oeq{\eqref{eq:decomp_v1}} \sumt\var\left\{\dfrac{\atzs}{\etz}\right\}\overset{\eqref{eq:gt}}{\leq}\sumt\gt = o(T^2).
\enda
Next, renew $\detzi$ as $\detzi =\atzs -\E\{\atzs\}$ to write
\beginy\label{eq:v1_3_22}
\cov\left\{\dfrac{\bkzs }{\ekz },\dfrac{\atzs}{\etz} \right\}
&=&\cov\left\{\dfrac{\bkzs }{\ekz } ,\Big[\E\{\atzs\} + \detzi \Big]\cdot\etzinv\right\} \nnb\\
&=&\E\{\atzs\}\cdot\cov\left\{\dfrac{\bkzs }{\ekz } , \etzinv\right\} +\cov\left\{\dfrac{\bkzs }{\ekz } ,\detzi\cdot\etzinv\right\},\quad \qquad
\endy
and 
\beginy\label{eq:sigma12_3} 
S_3 \oeq{\eqref{eq:decomp_v1}} \sum_{t=T_0 + 1}^\T \sum_{k=1}^{t-1}\cov\left\{\dfrac{\bkzs }{\ekz },\dfrac{\atzs}{\etz} \right\} \overset{\eqref{eq:v1_3_22}}{=} \Sigma_1 + \Sigma_2,
\endy
where
\begina
\Sigma_1 =\sum_{t=T_0+1}^\T \sum_{k=1}^{t-1}\E\{\atzs\}\cdot\cov\left\{\dfrac{\bkzs }{\ekz } , \etzinv\right\}, \quad \Sigma_2 =\sum_{t=T_0+1}^\T \sum_{k=1}^{t-1}\cov\left\{\dfrac{\bkzs }{\ekz } ,\detzi\cdot\etzinv\right\}.
\enda
We show below  
\beginy\label{eq:sig12_3_goal}
\Sigma_1 = o(T^2), \quad \Sigma_2 = o(T^2),
\endy
respectively, which together ensure $S_3 = o(T^2)$ from \eqref{eq:sigma12_3}.

\paragraph*{Proof of $\Sigma_1 = o(T^2)$ in \eqref{eq:sig12_3_goal}.} 
Assume \wlg, $\{\etz:\ziz\}$ depend only on the previous $\min\{t-1, c_0 t^\beta\} = c_0 t^\beta$ time points. 
For $t>T_0$, we have
\beginy\label{eq:v1_3_ss}
\begin{array}{llll}
\cov\left\{\dfrac{\bkzs }{\ekz },\etzinv\right\} 
&=& 0&\text{for} \ k < t-c_0t^\beta,\bigskip\\
\cov\left\{\dfrac{\bkzs }{\ekz },\etzinv\right\} 
&\leq& 2^{-1}\left[\var\left\{\dfrac{\bkzs }{\ekz }\right\} +\var\{\etzi\}\right]\\
&\leq& 2^{-1}\left(\maxs\gamma_s + v_t\right)&\text{for} \ k \in [	t -c_0 t^\beta, t).
\end{array}	
\endy
This ensures
\begina
\Sigma_1&\overset{\eqref{eq:sigma12_3}}{=}&\sum_{t=T_0+1}^\T \sum_{k=1}^{t-1}\E\{\atzs\}\cdot\cov\left\{\dfrac{\bkzs }{\ekz } ,\etzinv\right\}\\ 
&\overset{\eqref{eq:v1_3_ss}}{=}&\sum_{t=T_0+1}^\T \sum_{k=t-c_0t^\beta}^{t-1}\E\{\atzs\}\cdot \cov\left\{\dfrac{\bkzs }{\ekz } ,\etzinv\right\}\\
&\overset{\eqref{eq:v1_3_ss}}{\le}&\sum_{t=T_0+1}^\T  c_0 t^\beta\cdot \lmtp^2\cdot 2^{-1}\left(\maxs\gamma_s + v_t\right)	\\
&\leq& 2^{-1}c_0 T^\beta\cdot\sumt \lmtp^2\cdot \left(\maxs\gamma_s + v_t\right)	\\
&\oeqt{\lem~\ref{lem:v1}\eqref{item:v1_iii}}& c_0 T^\beta \cdot \ottb = o(T^2).
\enda

\paragraph*{Proof of $\Sigma_2 = o(T^2)$ in \eqref{eq:sig12_3_goal}.} 
For $k < t$, we have 
\beginy\label{eq:v1_3_2}
&&\left|\cov\left\{\dfrac{\bkzs }{\ekz } ,\detzi\cdot\etzinv\right\}\right| \nnb\\
&=&\left|\E\left\{\dfrac{\bkzs }{\ekz } \cdot \detzi\cdot\etzinv\right\}
- \E\left\{\dfrac{\bkzs }{\ekz }  \right\} 
\cdot 
\E\left\{ \detzi\cdot\etzinv\right\} \right| \nnb\\
%
&\leq&\E\left\{\dfrac{\bkzs }{\ekz }\cdot\left|\detzi\right|\cdot\etzinv\right\} +\E\left\{\dfrac{\bkzs }{\ekz }\right\}\cdot\E\left\{\left|\detzi\right|\cdot\etzinv\right\}\nnb\\
&\leq& \infn{\dfrac{\bkzs }{\ekz } }
\cdot\E\left\{\left|\detzi\right|\right\}\cdot\infn{\etzinv} + \infn{\dfrac{\bkzs }{\ekz } }
\cdot\E\left\{\left|\detzi\right|\right\}\cdot\infn{\etzinv}\nnb\\
&\leq& 2\cdot\left\{\maxs ( L_s+ M_s)^2/\ue_s\right\}\cdot\sqrt{\phit}\cdot\uet^{-1},
\endy
where the last inequality follows from 
\cll{
\E\left\{ |\detzi |\right\}\leq\sqrt{\E\left\{ |\detzi |^2\right\}}=\sqrt{\var\{\atzs\}}\overset{\eqref{eq:gt}}{\leq}\sqrt{\phit}.}
This ensures
\begina
T^{-2}\left|\Sigma_2\right| &\overset{\eqref{eq:sigma12_3}}{=}& T^{-2}\left|\sum_{t=T_0+1}^\T \sum_{k=1}^{t-1}\cov\left\{\dfrac{\bkzs }{\ekz } ,\detzi\cdot\etzinv\right\}\right|\\ 
&\leq& 
T^{-2} \sumt \sum_{k=1}^{t-1}\left|\cov\left\{\dfrac{\bkzs }{\ekz } ,\detzi\cdot\etzinv\right\}\right| \\
&\overset{\eqref{eq:v1_3_2}}{\leq}&
T^{-2} \sumt  T \cdot 2\cdot\left\{\maxs ( L_s+ M_s)^2/\ue_s\right\}\cdot\sqrt{\phit}\cdot\uet^{-1}\\
&\oeqt{\lem~\ref{lem:v1}\eqref{item:v1_iii}}& o(1). 
\enda

\paragraph*{\underline{Proof of \eqref{eq:v1_goal} under \lem~\ref{lem:v1}\eqref{item:v1_iv}}.}
The definition of $\gt$ in \eqref{eq:gt} ensures
\begina
S_1 \oeq{\eqref{eq:decomp_v1}} \sumt\var\left\{\dfrac{\atzs}{\etz}\right\} \overset{\eqref{eq:gt}}{\leq}\sumt\gt \leq\sumt\maxs\gamma_s \oeqt{\lem~\ref{lem:v1}\eqref{item:v1_iv}} \ottb = o(T^2). 
\enda
Next, assume \wlg, $\{\etz,\hmtz:\ziz\}$ depend only on the previous $\min\{t-1, c_0 t^\beta\} = c_0 t^\beta$ time points. 
For $t> T_0$, we have
\beginy\label{eq:v1_4_ss}
\begin{array}{lllll}
\cov\left\{\dfrac{\bkzs }{\ekz },\dfrac{\atzs }{\etz }\right\} &=& 0&\text{for} \ k < t-c_0t^\beta,\bigskip\\
\cov\left\{\dfrac{\bkzs }{\ekz },\dfrac{\atzs }{\etz }\right\} 
&\leq& 2^{-1}\left[\var\left\{\dfrac{\bkzs }{\ekz }\right\} +\var\left\{\dfrac{\atzs }{\etz }\right\}\right]\\
&\leq& 2^{-1}(\gamma_k +\gamma_t)\leq\maxs\gamma_s
&\text{for} \ k \in [	t -c_0 t^\beta, t).
\end{array}
\endy
This ensures 
\begina
S_3 &\oeq{\eqref{eq:decomp_v1}}&\sum_{t=T_0+1}^\T \sum_{k=1}^{t-1}\cov\left\{\dfrac{\bkzs }{\ekz },\dfrac{\atzs }{\etz }\right\}\\
 &\overset{\eqref{eq:v1_4_ss}}{=}&\sum_{t=T_0+1}^\T \sum_{k=t-c_0t^\beta}^{t-1}\cov\left\{\dfrac{\bkzs }{\ekz },\dfrac{\atzs }{\etz }\right\}\\
  &\overset{\eqref{eq:v1_4_ss}}{\leq}&\sum_{t=T_0+1}^\T  c_0 t^\beta\cdot \maxs\gamma_s \\ 
&\leq & c_0 T^\beta\cdot \sumt\maxs\gamma_s\\
&\oeqt{\lem~\ref{lem:v1}\eqref{item:v1_iv}} & c_0 T^\beta\cdot \ottb = o(T^2).
\enda	

\end{proof}

\begin{lemma}\label{lem:V2}
\prear. Let 
\beginy\label{eq:v2_def}
\vtzzp =\meant\Big[\atz\atzp -\E\big\{\atz\atzp\big\}\Big]\quad \text{for} \ z, z'\in \mz. 
\endy
As $T\to\infty$, we have 
\cls{\var(\vtzzp) = o(1) \quad \text{for all $z,z'\in\mz$}}
if either of the following conditions holds: 
\begine[(i)]
\item\label{item:V2_vm=0} \textbf{Vanishing $\var\{\hmtz\}$:} $\meant\phit = o(1)$.

\item\label{item:V2_m_limited dep} \textbf{{\llrd} of $\hmtz$:} 
\prelimdepsc $\meant\maxs\phi_s = o( T^{1-\beta})$ and, \limdept, $\{\hmtz:\ziz\}$ \limdep. 
\ende 

\end{lemma}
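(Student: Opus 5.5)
The plan mirrors the proof of \lem~\ref{lem:v1}, which is simpler here because $\atz\atzp$ contains no inverse propensity factor. The first step is to write
\[
T^2\var(\vtzzp) = \var\Big\{\sumt \atz\atzp\Big\}.
\]

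\textbf{Case (i).} I would apply the first inequality in \eqref{eq:var_lem}. This gives $\var(\vtzzp)\le \meant\var\{\atz\atzp\}$. By the definition of $\phit$ in \eqref{eq:gt}, each summand is at most $\phit$. Hence $\var(\vtzzp)\le\meant\phit=o(1)$, and this case is done.

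\textbf{Case (ii).} I would use the same decomposition as \eqref{eq:decomp_v1}:
\[
T^2\var(\vtzzp)=S_1+2S_2+2S_3,
\]
where
\[
S_1=\sumt\var\{\atz\atzp\},\quad S_2=\sum_{t=2}^{T_0}\sum_{k=1}^{t-1}\cov\{\bkz\bkzp,\atz\atzp\},
\]
and $S_3$ is the corresponding double sum over $t>T_0$.

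\begin{itemize}
\item $S_2$ is a fixed finite sum, so $S_2=o(T^2)$.
\item For $S_1$, we have $S_1\le\sumt\phit\le\sumt\maxs\phi_s = T\cdot o(T^{1-\beta})=o(T^2)$. Note that the stated condition $\meant\maxs\phi_s=o(T^{1-\beta})$ means $\sumt\maxs\phi_s=o(T^{2-\beta})$, which is in fact stronger than needed here.
\item For $S_3$, I would argue as in \eqref{eq:v1_4_ss}, assuming without loss of generality that $\{\hmtz:\ziz\}$ depend only on the $c_0t^\beta$ most recent units.
\end{itemize}

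The key observation for $S_3$ concerns what is random. In the design-based framework, $\ytz$ is fixed. So $\atz\atzp$ is a deterministic function of $\hmtz$ and $\hmtzp$, and is therefore a function only of $(Z_s)$ for $s\in[t-c_0t^\beta,t)$. Likewise $\bkz\bkzp$ is a function only of $Z_s$ for $s<k$.

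The main obstacle is to justify that
\[
\cov\{\bkz\bkzp,\atz\atzp\}=0 \quad\text{for } k<t-c_0t^\beta.
\]
The assignments $Z_s$ are not independent across $s$, since $e_s$ may depend on the entire history. So one must argue exactly as the paper does in \eqref{eq:v1_ii_ss}: limited-range dependence is read as making the residual functions of disjoint windows uncorrelated. I would take that property as the interpretation of ``depend only on information from the $\min\{t-1,c_0t^\beta\}$ most recent units,'' consistent with the earlier lemma.

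For the remaining $k\in[t-c_0t^\beta,t)$, Cauchy--Schwarz gives
\[
|\cov\{\bkz\bkzp,\atz\atzp\}|\le \tfrac12(\phi_k+\phit)\le\maxs\phi_s.
\]
There are at most $c_0t^\beta$ such $k$, so
\[
|S_3|\le c_0T^\beta\sumt\maxs\phi_s = c_0T^\beta\cdot o(T^{2-\beta})=o(T^2).
\]
Combining the three bounds yields $\var(\vtzzp)=o(1)$ for all $z,z'\in\mz$.
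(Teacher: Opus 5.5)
Your proposal follows the paper's proof essentially line by line. Case (i) uses the first inequality in \eqref{eq:var_lem}. Case (ii) uses the split $S_1+2S_2+2S_3$, with $S_1\le\sumt\maxs\phi_s=o(T^{2-\beta})$, $S_2$ fixed, and $|S_3|\le c_0T^\beta\sumt\maxs\phi_s$, obtained from zero covariance outside the window and the bound $\tfrac12(\phi_k+\phi_t)$ inside it.

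The step you flag is exactly the one the paper asserts without justification in \eqref{eq:V2_ss}, and your suspicion is correct. It does not follow from the stated hypothesis, which restricts only $\hmtz$, while $\etz$ may still depend on the entire history. Here is a counterexample.
\begin{itemize}
\item Take $K=2$, $\ytz=0$, and $e_1(1)=1/2$.
\item For $t\ge 2$, let $e_t(1)=0.9$ if $Z_1=1$ and $e_t(1)=0.1$ otherwise.
\item Let $\hmtz=\I(Z_{t-1}=1)$ for $t\ge 2$, which is limited-range with $\beta=0$ and $c_0=1$.
\end{itemize}
Then $\phi_t\le 1/4$, so condition (ii) holds. But for $3\le k<t-1$ we get $\cov\{A_k(1)^2,A_t(1)^2\}=\cov\{\I(Z_{k-1}=1),\I(Z_{t-1}=1)\}=0.16$. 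In fact $\var(V_{2,11})\to 0.16$, so the lemma itself fails under the literal reading.

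Hence the uncorrelatedness of functions of disjoint windows is a genuine additional assumption, not an interpretation of the stated one. It holds, for example, in nonadaptive designs where $\etz$ depends only on $X_t$, so that the $Z_t$ are mutually independent. With that assumption stated explicitly, your argument, like the paper's, is complete.
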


\begin{proof}[\bf Proof of \lem~\ref{lem:V2}]
The sufficiency of \lem~\ref{lem:V2}\eqref{item:V2_vm=0} follows from 
\begina
\var( V_{2, zz'}) \oeq{\eqref{eq:v2_def}} \ds \dfrac{1}{\tsq}\var\left\{\sumt\atz\atzp\right\} 
 \oleq{\eqref{eq:var_lem}} \meantf\var\left\{\atz\atzp\right\} 
  \oleq{\eqref{eq:gt}} \meantf\phit.  
\enda
We verify below the sufficiency of \lem~\ref{lem:V2}\eqref{item:V2_m_limited dep}.

First, write 
\beginy\label{eq:decomp_v2}
  T^2\var( V_{2, zz'}) &\oeq{\eqref{eq:v2_def}}&\var\left\{\sumt\atz\atzp\right\}\nnb\\
  &=&\sumt\var\{\atz\atzp\}  + 2\sum_{t=2}^\T \sum_{k=1}^{t-1}\cov\{\bkz\bkzp,\atz\atzp \}\nnb\\
 &=&\sumt\var\{\atz\atzp\}  + 2\sum_{t=2}^{T_0}\sum_{k=1}^{t-1}\cov\{\bkz\bkzp,\atz\atzp \}\nnb\\
&&+ 2\sum_{t=T_0+1}^\T \sum_{k=1}^{t-1}\cov\{\bkz\bkzp,\atz\atzp \} \nnb\\
 &=& S_1 + 2 S_2 + 2 S_3, 
\endy
where 
$
S_1 = \sumt\var\{\atz\atzp\}$, $S_2 = \sum_{t=2}^{T_0}\sum_{k=1}^{t-1}\cov\{\bkz\bkzp,\atz\atzp \}$, and 
$
S_3 = \sum_{t=T_0+1}^\T \sum_{k=1}^{t-1}\cov\{\bkz\bkzp,\atz\atzp \}.$
The definition of $\phit$ in \eqref{eq:gt} implies 
\cll{
\var\{\atz\atzp\} \oeq{\eqref{eq:gt}} \phit \leq \maxs\phi_s}

so that 
\begina
S_1 \oeq{\eqref{eq:decomp_v2}} \sumt\var\{\atz\atzp\}\leq  \sumt\maxs\phi_s
\oeqt{\lem~\ref{lem:V2}\eqref{item:V2_m_limited dep}} o( T^{2-\beta}) = o( T^2).
\enda
under \lem~\ref{lem:V2}\eqref{item:V2_m_limited dep}.
In addition, $S_2$ is a fixed finite number and therefore satisfies $S_2 = o(T^2)$. Given \eqref{eq:decomp_v2}, it suffices to verify that 
\beginy\label{eq:v2_s3_goal}  
S_3 = o(T^2)
\endy
under \lem~\ref{lem:V2}\eqref{item:V2_m_limited dep}. 

\paragraph*{Proof of \eqref{eq:v2_s3_goal} under \lem~\ref{lem:V2}\eqref{item:V2_m_limited dep}.} Assume \wlg, $\{\hmtz:\ziz\}$ depend only on the previous $\min\{t-1, c_0 t^\beta\} = c_0 t^\beta$ time points. 
For $t > T_0$, we have 
\beginy\label{eq:V2_ss}
\begin{array}{llll}
\cov\{\bkz\bkzp,\atz\atzp \} &=& 0 &\text{for} \ k < t - c_0 t^\beta,\\
\cov\{\bkz\bkzp,\atz\atzp \} &\leq & 2^{-1}\left[\var\{\bkz\bkzp\} +\var\{\atz\atzp \}\right]\\
 &\leq & 2^{-1}(\phi_k +\phit) 
\leq  
\maxs\phi_s &\text{for} \ k \in [t - c_0 t^\beta, t).
\end{array}
\endy
This ensures \eqref{eq:v2_s3_goal} as follows:
\begina
S_3 &\oeq{\eqref{eq:decomp_v2}}&\sum_{t=T_0+1}^\T \sum_{k=1}^{t-1}\cov\{\bkz\bkzp,\atz\atzp \}\\
&\overset{\eqref{eq:V2_ss}}{=}&   \sum_{t=T_0 +1}^\T \sum_{k=t-c_0 t^\beta}^{t-1} \cov\{\bkz\bkzp,\atz\atzp \} 
\\		 
&\overset{\eqref{eq:V2_ss}}{\leq}&  
\sum_{t=T_0+1}^T c_0t^\beta\cdot\maxs\phi_s\\ 
&\leq& 
 c_0  T^\beta\cdot\sumt\maxs\phi_s\\
 & \oeqt{\lem~\ref{lem:V2}\eqref{item:V2_m_limited dep}}& c_0  T^\beta\cdot \ottb = o(T^2).		
\enda
\end{proof}

\begin{lemma}\label{lem:aipw_sc_app}
\prear. 
If $\lm(\va)$ is uniformly bounded away from 0 as $\ttinf$, then 
a sufficient condition for \cond~\ref{cond:aipw_clt}\eqref{item:clt_aipw_var_conv} is
\cll{\fn{\cva - \va} = \op,}

 which holds if any of the following conditions is satisfied:
\begine[(i)]
\item\label{item:V_i}
 $\meant\gamma_t = o(1)$; 
 
 $\meant\phit = o(1)$. 
\item\label{item:V_ii}
$\meant\left\{\maxs ( L_s+ M_s)^2/\ue_s\right\} \cdot \lmtp^2 \cdot\sqrt{v_t}=\oo$; 

\prelimdeps $
\meant \left(\maxs\gamma_s +\phit\right)\cdot\uet^{-1} = o(T^{1-\beta})$, $\meant\maxs\phi_s = o( T^{1-\beta})$, and, \limdept, $\{\hmtz:\ziz\}$ \limdep.

\item\label{item:V_iii} 
$T^{-2}\sumt\gamma_t = o(1)$; 

$\meant\left\{\maxs ( L_s+ M_s)^2/\ue_s\right\} \cdot\sqrt{\phit}\cdot\uet^{-1} = o(1)$; 

$\meant\phit = o(1)$; 

\prelimdeps $
\meant \lmtp^2\cdot \left(\maxs\gamma_s + v_t\right) = o(T^{1-\beta})$ and, \limdept, $\{\etz:\ziz\}$ \limdep.

\item\label{item:V_iv} 
\prelimdepsc $\meant\maxs\gamma_s = o(T^{1-\beta}) $, $\meant\maxs\phi_s = o( T^{1-\beta})$, and, \limdept, $\{\etz,\hmtz:\ziz\}$ \limdep.
\ende

	\end{lemma}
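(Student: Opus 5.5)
The proof has two parts. First, the sufficiency of $\fn{\cva-\va}=\op$ for \cond~\ref{cond:aipw_clt}\eqref{item:clt_aipw_var_conv}. Second, the verification of $\fn{\cva-\va}=\op$ under each of (i)--(iv), which reduces entry by entry to Lemmas~\ref{lem:v1}--\ref{lem:V2}.

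The first part is immediate. Since $\lm(\va)$ is uniformly bounded away from $0$, $\lmai$ is uniformly bounded, so $\lmai\cdot\fn{\cva-\va}=O(1)\cdot\op=\op$. Because $K$ is fixed, $\fn{\cva-\va}=\op$ holds if every entry satisfies $\cv_{\aipw,zz'}-V_{\aipw,zz'}=\op$.

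For the entries, compare the definitions of $\cva$ and $\va$ in \eqref{eq:va}.
\begin{itemize}
\item Diagonal entries: $\cv_{\aipw,zz}-V_{\aipw,zz}=\voz-\vtzzp|_{z'=z}$, using $V_{1,z}$ from \eqref{eq:v1_def} and $V_{2,zz}$ from \eqref{eq:v2_def}.
\item Off-diagonal entries ($z\neq z'$): $\cv_{\aipw,zz'}-V_{\aipw,zz'}=-\vtzzp$.
\end{itemize}
Each of $\voz$ and $\vtzzp$ has mean zero, so by \lemchebf\ it suffices to show $\var(\voz)=o(1)$ for all $z$ and $\var(\vtzzp)=o(1)$ for all $z,z'$.

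Each case of the lemma then matches the corresponding cases of the two earlier lemmas:
\begin{itemize}
\item Case (i): \lem~\ref{lem:v1}\eqref{item:v1_i} via $\meant\gt=o(1)$, and \lem~\ref{lem:V2}\eqref{item:V2_vm=0} via $\meant\phit=o(1)$.
\item Case (ii): the first three requirements are exactly \lem~\ref{lem:v1}\eqref{item:v1_ii}. The requirement $\meant\maxs\phi_s=o(T^{1-\beta})$ together with limited-range dependence of $\hmtz$ is exactly \lem~\ref{lem:V2}\eqref{item:V2_m_limited dep}.
\item Case (iii): the conditions other than $\meant\phit=o(1)$ are those of \lem~\ref{lem:v1}\eqref{item:v1_iii}, and $\meant\phit=o(1)$ gives \lem~\ref{lem:V2}\eqref{item:V2_vm=0}.
\item Case (iv): this gives \lem~\ref{lem:v1}\eqref{item:v1_iv} directly. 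Since $\{\etz,\hmtz\}$ depends only on recent history, in particular $\{\hmtz\}$ does, and with $\meant\maxs\phi_s=o(T^{1-\beta})$ this gives \lem~\ref{lem:V2}\eqref{item:V2_m_limited dep}.
\end{itemize}

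Combining the two variance bounds with Chebyshev gives every entry as $\op$, hence $\fn{\cva-\va}=\op$.

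The only real work is the bookkeeping that each listed hypothesis matches the hypotheses of the invoked lemma cases. One point needs care: in case (iv), the limited-range dependence assumed for the joint collection $\{\etz,\hmtz\}$ must be read as implying it for $\{\hmtz\}$ alone, with the same $\beta$, $c_0$ and $T_0$. I expect no genuine obstacle beyond this.
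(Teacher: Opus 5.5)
Your proposal is correct and follows essentially the same route as the paper. The paper writes $\cva-\va = V_1 - V_2$ with $V_1=\diag(\voz)_{\ziz}$ and $V_2=(\vtzzp)_{z,z'\in\mz}$, which is your entrywise split. It then uses Chebyshev's inequality and the zero means to reduce to $\var(\voz)=o(1)$ and $\var(\vtzzp)=o(1)$, and matches cases (i)--(iv) to Lemmas~\ref{lem:v1} and \ref{lem:V2} exactly as you do; your treatment of the $\lm(\va)$ step and of the joint limited-range dependence in case (iv) just makes explicit points the paper leaves implicit.
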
 	

\begin{proof}[\bf Proof of \lem~\ref{lem:aipw_sc_app}]
Recall that 
\begina
\va &=&\diag\left[\meant\E\left\{\dfrac{\atzs}{\etz }\right\}\right]_{\ziz} - \meant\E(\mat\matt),\\
\cva &=&\diag\left[\meant\dfrac{\atzs}{\etz }\right]_{\ziz} - \meant\mat\matt. 
\enda
We have 
\beginy\label{eq:S1_v1-v2}
\cva-\va = V_1 - V_2,
\endy 
where
\begina
 V_1 &=&\diag\left(\meant\left[\dfrac{\atzs}{\etz } -\E\left\{\dfrac{\atzs}{\etz }\right\}\right]\right)_{\ziz} =\diag(\voz)_{z\in\mz},\\
 V_2 &=&\meant\Big\{\mat\matt -\E(\mat\matt)\Big\} = (\vtzzp)_{z,z'\in\mz} 
\enda
with 
\begina
\voz= \meant\left[\dfrac{\atzs}{\etz } -\E\left\{\dfrac{\atzs}{\etz }\right\}\right], \quad \vtzzp =\meant\Big[\atz\atzp -\E\big\{\atz\atzp\big\}\Big]
\enda as defined in \eqref{eq:v1_def} and \eqref{eq:v2_def}, respectively. 
From \eqref{eq:S1_v1-v2}, we have
\begina
\fn{\cva-\va} =\fn{V_1 - V_2}\leq\fn{V_1} +\fn{V_2} =  \left(\sumz \voz^2\right)^{1/2} + \left(\sum_{z,z'\in\mz} \vtzzp^2\right)^{1/2},
\enda
so that $\fn{\cva-\va} =\op$ if 
\beginy\label{eq:V1V2_suff}
\voz=\op, \quad \vtzzp =\op \quad \text{for all $z,z'\in\mz$.}
\endy 
Given $\E(\voz) =\E(\vtzzp) = 0$ by definition, \lemchebf\ ensures that a sufficient condtion for \eqref{eq:V1V2_suff} is
\beginy\label{eq:ss_v1v2}
\var(\voz) = o(1),\quad\var(\vtzzp) = o(1)\quad\text{for all $z,z'\in\mz$}. 
\endy

\lem~\ref{lem:aipw_sc_app}\eqref{item:V_i} ensures \eqref{eq:ss_v1v2} by 
\lem~\ref{lem:v1}\eqref{item:v1_i} and \lem~\ref{lem:V2}\eqref{item:V2_vm=0}. 

\lem~\ref{lem:aipw_sc_app}\eqref{item:V_ii} ensures \eqref{eq:ss_v1v2} by 
\lem~\ref{lem:v1}\eqref{item:v1_ii} and \lem~\ref{lem:V2}\eqref{item:V2_m_limited dep}. 

\lem~\ref{lem:aipw_sc_app}\eqref{item:V_iii} ensures \eqref{eq:ss_v1v2} by 
\lem~\ref{lem:v1}\eqref{item:v1_iii} and \lem~\ref{lem:V2}\eqref{item:V2_vm=0}.

\lem~\ref{lem:aipw_sc_app}\eqref{item:V_iv} ensures \eqref{eq:ss_v1v2} by 
\lem~\ref{lem:v1}\eqref{item:v1_iv} and \lem~\ref{lem:V2}\eqref{item:V2_m_limited dep}. 

\end{proof}

\begin{lemma}\label{lem:var_bounds}
For infinite sequences $(a_t)_{t=1}^\infty$ and $(b_t)_{t=1}^\infty$, write $a_t\ls b_t$ if there exists constant $c_0 <\infty$ such that $a_t\leq c_0 b_t$ for all $t$. 
For all $t$ and $z,z'\in\mz$, we have 
\begina
\phit \ls \omt (\lt +\mt)^2 ,\quad 
\gamma_t \ls \omt (\lt +\mt)^2\cdot\uet^{-2} + v_t  \lmtp^4. 
\enda
\end{lemma}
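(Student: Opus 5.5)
The argument has two parts. First we write each random quantity in terms of $\hmtz$ and $\etzi$, the only random ingredients in $\atz = \ytz - \hmtz$ and $\atzs/\etz$; the potential outcomes $\ytz$ are fixed. Then we apply the two facts in \eqref{eq:var_lem}. Throughout we use the deterministic bounds
\[
|\atz| \le \lt + \mt, \qquad \infn{\etzi} \le \uet^{-1},
\]
which follow from the definitions of $\lt$, $\mt$, and $\uet$.

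\textbf{Bound for $\phit$.} Fix $z, z' \in \mz$. By the product inequality in \eqref{eq:var_lem},
\[
\var\{\atz\atzp\} \le 2\var\{\atz\}\,\infns{\atzp} + 2\var\{\atzp\}\,\infns{\atz}.
\]
Since $\ytz$ is fixed, $\var\{\atz\} = \var\{\hmtz\} \le \omt$. Each sup-norm factor is at most $(\lt + \mt)^2$. Hence $\var\{\atz\atzp\} \le 4\,\omt(\lt + \mt)^2$, and taking the maximum over $z, z'$ gives $\phit \ls \omt(\lt + \mt)^2$.

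\textbf{Bound for $\gt$.} Write $\atzs/\etz = \atzs \cdot \etzi$ and apply the product inequality again:
\[
\var\{\atzs\etzi\} \le 2\var\{\atzs\}\,\uet^{-2} + 2\var\{\etzi\}\,(\lt + \mt)^4.
\]
The second term is at most $2 v_t(\lt + \mt)^4$. For the first term we need to control $\var\{\atzs\}$, the variance of a square. We use the inequality $\var\{f(X)\} \le \mathrm{Lip}(f)^2\,\var(X)$. This holds because
\[
\var\{f(X)\} \le \E\{f(X) - f(\E X)\}^2 \le \mathrm{Lip}^2\, \E(X - \E X)^2 .
\]
Apply it to $f(a) = a^2$ on the interval $[-(\lt + \mt),\, \lt + \mt]$, where $f$ has Lipschitz constant $2(\lt + \mt)$. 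This gives
\[
\var\{\atzs\} \le 4(\lt + \mt)^2 \var\{\atz\} \le 4\,\omt(\lt + \mt)^2 .
\]
Combining the two terms yields $\gt \ls \omt(\lt + \mt)^2\uet^{-2} + v_t(\lt + \mt)^4$, with absolute constants. This does not depend on $t$, so the bound holds as required by the definition of $\ls$.

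The only step needing care is the variance of the square $\atzs$. It is handled by the Lipschitz argument above rather than by \eqref{eq:var_lem}: that inequality would only return $\var\{\atzs\}$ in terms of itself and the sup-norm, instead of in terms of $\omt$.
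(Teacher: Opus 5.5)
Your proof is correct and is essentially the paper's: two applications of the product inequality in \eqref{eq:var_lem}, with the bounds $|A_t(z)|\le L_t+M_t$ and $e_t(z)^{-1}\le \underline{e}_t^{-1}$, giving the same constants $4$, $8$, and $2$. The one difference is your Lipschitz step for $\var\{A_t(z)^2\}$, which is valid but unnecessary, and your reason for it is mistaken: taking $X_1=X_2=A_t(z)$ in \eqref{eq:var_lem} gives $\var\{A_t(z)^2\}\le 4\var\{A_t(z)\}\,\|A_t(z)\|_\infty^2\le 4\omega_t(L_t+M_t)^2$ directly---this is your own $\phi_t$ bound with $z'=z$, and it is exactly how the paper handles this step.
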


\begin{proof}[\bf Proof of \lem~\ref{lem:var_bounds}]
Recall that $|\atz| =|\ytz-\hmtz|\leq\lt +\mt$ and $\var\{\atz\} = \var\{\hmtz\} \leq \omt$ by definition.
The results follow from \eqref{eq:var_lem} as follows:
\beginy\label{eq:varBB} 
\var\{\atz\atzp\} 
 &\oleq{\eqref{eq:var_lem}}&   2\cdot \var\{\atz\}\cdot\infns{\atzp} + 2\cdot\var\{\atzp\}\cdot\infns{\atz}\nnb\\
 &\leq&  4\cdot\omt (\lt +\mt)^2,\\
\var\left\{\dfrac{\atzs}{\etz}\right\}
&\oleq{\eqref{eq:var_lem}}& 2\cdot\var\{\atzs\}\cdot\infns{\etzinv} + 2\cdot\var\{\etzi\}\cdot\infns{\atzs}\nnb\\
&\oleq{\eqref{eq:varBB}}& 8\cdot\omt (\lt +\mt)^2\cdot\uet^{-2} + 2\cdot v_t  \lmtp^4.  \nnb 
\endy
\end{proof}

\subsection{Proof of \prop~\ref{prop:aipw_sc_app}}
\begin{proof}[\bf Proof of \prop~\ref{prop:aipw_sc_app}]
Let 
\beginy\label{eq:b1b2}
b_1 = \maxt\omt \cdot\uet^{-2}, \quad b_2 = \maxt v_t,
\endy
where we make their dependence on $T$ implicit for notational simplicity.
When $\ytz$ and $\hmtz$ are uniformly bounded, it follows from \lem~\ref{lem:var_bounds} that 
\beginy\label{eq:max_gamma_2}
\beginar{lll}
 \phit  \ls \omt, &&\maxt\phit \ls b_1; \\
 \gamma_t \ls \omt \cdot\uet^{-2} + v_t, && \maxt\gamma_t\ls b_1 + b_2.
 \endar 
\endy
This ensures the results as follows.  
\begine[(i)]
\item \prop~\ref{prop:aipw_sc_app}\eqref{item:V_i_main} ensures \lem~\ref{lem:aipw_sc_app}\eqref{item:V_i} as follows: 
\begina
\meant\gamma_t  \overset{\eqref{eq:max_gamma_2}}{\ls}   \meant \omt \cdot \uet^{-2} + \meant  v_t \oeqt{\prop~\ref{prop:aipw_sc_app}\eqref{item:V_i_main}} \oo,\\
\meant\phit  \overset{\eqref{eq:max_gamma_2}}{\ls}  \meant \omt \leq \meant \omt \cdot \uet^{-2} \oeqt{\prop~\ref{prop:aipw_sc_app}\eqref{item:V_i_main}} \oo. 
\enda 

\item \prop~\ref{prop:aipw_sc_app}\eqref{item:V_ii_main} ensures 
\beginy\label{eq:s1_2}
\beginar{l}
\ds \meant\left(\maxs \ue_s^{-1}\right) \cdot\sqrt{v_t} = o(1),\medskip\\
b_1 \cdot \left(\meant \ueti\right) \oeq{\eqref{eq:b1b2}} o(T^{1-\beta}), \quad b_2 \cdot \left(\meant \ueti\right) \oeq{\eqref{eq:b1b2}} o(T^{1-\beta}), 
\endar
\endy 
so that \lem~\ref{lem:aipw_sc_app}\eqref{item:V_ii} holds as follows:
\begina
\beginar{l}
\beginar{l}
\bullet \quad \ds \meant\left\{\maxs ( L_s+ M_s)^2/\ue_s\right\} \cdot \lmtp^2 \cdot\sqrt{v_t}  \ls \ds \meant\left(\maxs \ue_s^{-1}\right) \cdot\sqrt{v_t}  \oeq{\eqref{eq:s1_2}}  o(1),
\endar\bigskip\\
 \beginar{lcl}
\bullet \quad \ds \meant \left(\maxs\gamma_s +\phit\right)\cdot\uet^{-1}
&\leq&\ds \left(\maxt \gamma_t + \maxt \phi_t \right) \cdot \left(\meant \ueti\right)\\
&\overset{\eqref{eq:max_gamma_2}}{\ls}& \ds
(b_1 + b_2 + b_1) \cdot \left(\meant \ueti\right)\\
&\oeq{\eqref{eq:s1_2}}& o(T^{1-\beta}),
\endar \bigskip\\
\beginar{l}
\bullet \quad \ds\meant\maxs\phi_s  \leq  \maxt\phit  \overset{\eqref{eq:max_gamma_2}}{\ls}  b_1 \leq b_1 \cdot \left(\meant \ueti\right) \oeq{\eqref{eq:s1_2}}  o(T^{1-\beta}).
\endar
\endar
\enda

\item \prop~\ref{prop:aipw_sc_app}\eqref{item:V_iii_main} ensures 
\beginy\label{eq:s1_3}
\meant \omt = o(1), \quad \meant\left(\displaystyle\maxs \ue_s^{-1}\right) \cdot\sqrt{\omt}\cdot\uet^{-1} 
= o(1),\quad
b_1   = o(T^{1-\beta}), \quad b_2  = o(T^{1-\beta}), \quad
\endy 
so that \lem~\ref{lem:aipw_sc_app}\eqref{item:V_iii} holds as follows:
\begina
\begin{array}{l}
\begin{array}{l}
\bullet \quad T^{-2} \displaystyle\sumt\gamma_t 
 \overset{\eqref{eq:max_gamma_2}}{\ls}  T^{-2}\displaystyle\sumt \omt \cdot \uet^{-2} +  T^{-2}\displaystyle\sumt  v_t 
  \oleq{\eqref{eq:b1b2}}  T^{-1} b_1 +  T^{-1} b_2  \oeq{\eqref{eq:s1_3}}  o(T^{-\beta})  =  o(1),
 \end{array}
 \bigskip\\
\begin{array}{llllllll}
\bullet \quad\ds\meant\left\{\displaystyle\maxs ( L_s+ M_s)^2/\ue_s\right\} \cdot\sqrt{\phit}\cdot\uet^{-1}
\overset{\eqref{eq:max_gamma_2}}{\ls} \displaystyle\meant\left(\displaystyle\maxs \ue_s^{-1}\right) \cdot\sqrt{\omt}\cdot\uet^{-1} 
\oeq{\eqref{eq:s1_3}} o(1),
\end{array}
 \bigskip\\
\begin{array}{l}
\bullet \quad\displaystyle\meant\phit \overset{\eqref{eq:max_gamma_2}}{\ls} \displaystyle\meant \omt \oeq{\eqref{eq:s1_3}}  o(1) ,
\end{array}\bigskip\\
\beginar{rcl}
 \bullet \quad\ds\meant  \lmtp^2 \cdot \left(\maxs\gamma_s + v_t\right)  &\ls&   \ds\meant \left(\maxs\gamma_s + v_t\right) \\
&\leq& \ds  \maxt\gamma_t  + \meant v_t 
\overset{\eqref{eq:max_gamma_2}}{\ls} b_1 + b_2 + b_2 \oeq{\eqref{eq:s1_3}} o(T^{1-\beta}). 
\endar
\end{array}
\enda

\item \prop~\ref{prop:aipw_sc_app}\eqref{item:V_iv_main} ensures 
\begina
b_1 = o(T^{1-\beta}), \quad b_2 = o(T^{1-\beta})
\enda
so that \lem~\ref{lem:aipw_sc_app}\eqref{item:V_iv} holds as follows: 
\begina
\meant\maxs\gamma_s \leq \maxt \gt \overset{\eqref{eq:max_gamma_2}}{\ls} b_1 + b_2 = o(T^{1-\beta}),\\
\meant\maxs\phi_s \leq \maxt \phit \overset{\eqref{eq:max_gamma_2}}{\ls} b_1= o(T^{1-\beta}).
\enda
\ende
\end{proof}  

\end{document}